\newtheorem{theorem}{Theorem}[section]
\newtheorem{lemma}[theorem]{Lemma}
\newtheorem{corollary}[theorem]{Corollary}
\newtheorem{definition}[theorem]{Definition}
\def\paragraph#1{\vspace{0.05em}\noindent {\bf #1}}
\def\paragraph#1{\vspace{0.25em}\noindent {\bf #1}}
\begin{document}

\title{Gossip in a Smartphone Peer-to-Peer Network}
          
\author{Calvin Newport\\ Georgetown University\\ Washington, DC\\ {\tt cnewport@cs.georgetown.edu}}
\date{}
\maketitle

\begin{abstract}
In this paper, we study the fundamental problem of gossip in the {\em mobile telephone model}:
a recently introduced variation of the classical {\em telephone model} modified to better describe the local peer-to-peer
communication services implemented in many popular smartphone operating systems.
In more detail, the mobile telephone model differs from the classical telephone model in three ways: (1) each device
can participate in at most one connection per round; (2) the network topology can undergo a parameterized
rate of change; and (3) devices can advertise a parameterized number of bits about their state to their neighbors in each round before connection attempts are initiated.
We begin by describing and analyzing new randomized gossip algorithms in this model under the harsh assumption of a network topology
that can change completely in every round.
We prove a significant time complexity gap between the case where nodes can advertise $0$ bits to their neighbors in each round,
and the case where nodes can advertise $1$ bit.
For the latter assumption, we present two solutions: the first depends on a shared randomness source, while the second
eliminates this assumption using a pseudorandomness generator we prove to exist with a novel generalization of a classical 
result from the study of two-party communication complexity.
We then turn our attention to the easier case where the topology graph is stable, and describe and analyze a new gossip
algorithm that provides a substantial performance improvement for many parameters.
We conclude by studying a relaxed version of gossip in which it is only necessary for nodes to each learn a specified
fraction of the messages in the system.
We prove that our existing algorithms for dynamic network topologies and a single advertising bit
solve this relaxed version up to a polynomial factor faster (in network size)
for many parameters.
These are the first known gossip results for the mobile telephone model, and they significantly expand our understanding of
how to communicate and coordinate in this increasingly relevant setting.
\end{abstract}

%
%
%
%


\newcommand{\polylog}[1]{\text{polylog}(#1)}


\section{Introduction}
\label{sec:introduction}

This paper describes and analyzes new gossip algorithms in
the {mobile telephone model}: an abstraction that
captures the local device-to-device communication capabilities
available in most smartphone operating systems;
e.g., as implemented by services such as Bluetooth LE~\cite{gomez2012overview}, WiFi Direct~\cite{camps2013device}, 
and Apple's Multipeer Connectivity framework~\cite{mark2015peer}.

\paragraph{Motivation.}
Smartphones are a ubiquitous communication platform: there are 
currently over 3.9 billion smartphone subscriptions worldwide~\cite{smartphones}.
Most smartphone communication leverages one-hop radio links to cell towers
or WiFi access points. In recent years, however, the major smartphone operating systems have
included increasingly stable and useful support for local peer-to-peer communication
that allows a device to talk directly to a nearby device (using local radio broadcast) while
avoiding cellular and WiFi infrastructure.

The ability to create these local links,
combined with the ubiquity of smartphones, 
 enables scenarios in which large groups of nearby smartphone users run applications
 that create peer-to-peer
meshes supporting infrastructure-free networking.
There are many possible motivations for these smartphone peer-to-peer networks.
For example, they can support communication in settings where
network infrastructure is {\em censored} (e.g., government protests),
{\em overwhelmed} (e.g., a large festival or march),
or {\em unavailable} (e.g., after a disaster or at a remote event).
In addition, in developing countries,
cellular data minutes are often bought in blocks and carefully conserved---increasing
interest in networking operations that do not require cellular infrastructure. 

To further
validate the potential usefulness of smartphone peer-to-peer networks,
consider the FireChat application, which implements group chat using
smartphone peer-to-peer services. In the few years since its initial release,
it has been widely adopted in over 120 countries and has been used
successfully in multiple government protests, festivals (e.g., at Burning Man, which
is held far from cell towers), and disaster scenarios~\cite{firechat}.

Developing useful applications for this smartphone peer-to-peer setting
requires distributed algorithms that can provide global reliability and efficiency
guarantees on top of an unpredictable collection of local links.
As detailed below, the models that describe this emerging setting are sufficiently
different from existing models that new algorithms and analysis techniques are required.
This paper addresses this need by describing and analyzing new gossip algorithms for this important setting.

\paragraph{The Mobile Telephone Model.}
The mobile telephone model studied in this paper was
introduced in recent work~\cite{ghaffari:2016,newport:2017}.
It is a variant of the classical telephone peer-to-peer model  (e.g.,~\cite{frieze1985shortest,telephone1,telephone2,telephone3,giakkoupis2011tight,chierichetti2010rumour,giakkoupis2012rumor,fountoulakis2010rumor,giakkoupis2014tight})
  modified to better describe the capabilities and
constraints of existing smartphone peer-to-peer services.
The details of the mobile telephone model are inspired, in particular, by the current specifications of Apple's 
Multipeer Connectivity framework~\cite{mark2015peer}: a peer-to-peer service available in every iOS
version since iOS 7 that allows nodes to advertise services, discover nearby advertisers, and attempt to connect
to nearby advertisers, using only local radio broadcast.
(The definition of the classical telephone model,
and differences between the classical telephone and mobile telephone model, are detailed and discussed below
in the related work section.)


In more detail, the mobile telephone model abstracts the basic {\em scan-and-connect}
dynamics of the Multipeer framework as follows.
Time proceeds in synchronous rounds.
In each round, a connected graph describes the underlying network topology for that round.
At the beginning of each round, each device (also called a {\em node} in the following)
learns its neighbors in the topology graph (e.g.,
as the result of a scan). 
Each device can then attempt to initiate a connection with a neighbor.
Each node can support at most one connection---so if multiple nodes attempt
to connect with the same target, only one connection will succeed.
If two nodes connect, they can perform a bounded amount of reliable
communication before the round ends.

We parameterize this model with a {\em tag length} $b\geq 0$.
At the beginning of each round,
each node can choose a {\em tag} consisting of $b$ bits to advertise.
When performing a scan, each node learns both the ids and chosen tags
of its neighbors (where $b=0$ means there are no tags).
These tags can change from round to round.
In our previous study of rumor spreading with parameter $b=1$~\cite{ghaffari:2016}, for example,
at the beginning of a given round, each node that already knows the rumor advertises a $1$ with its tag,
while other nodes advertise a $0$. This simplified the rumor spreading task by enabling nodes that know the rumor
to only attempt to connect to nodes that do not.
This capability of nodes to use tags to deliver limited information to their neighbors
is motivated by the ability of devices to choose and change
their service advertisements in the Multipeer framework.

We also parameterize the model with a {\em stability factor} $\tau \geq 1$.
The underlying network topology must stay stable for at least $\tau$ rounds
between changes. For $\tau=1$, for example,
the network topology can change completely in every round,
while for $\tau=\infty$, the topology never changes.
There exist finer-grained approaches for capturing intermediate levels of stability
(e.g., $T$-interval connectivity~\cite{kuhn2010distributed}), but
in this paper we study only the two extreme cases of fully dynamic and fully stable topologies,
so our simpler stability factor definition is sufficient.
The need to model topology changes is motivated by the inherently
mobile nature of the smartphone setting.


\begin{figure}
\label{fig:results}
\centering
\begin{tabular}{|c|c|c|}
\hline
{\bf Assumptions} & {\bf Algorithm} & {\bf Gossip Round Complexity} \\
\hline 
\hline
\multicolumn{3}{|c|}{Standard Gossip} \\
\hline 
$b=0$, $\tau \geq 1$ & BlindMatch & $O((1/\alpha)k\Delta^2\log^2{n})$ \\
%
%
$b=1$, $\tau \geq 1$ & SharedBit* & $O(kn)$ \\
%
%
$b=1$, $\tau \geq 1$ & SimSharedBit**  & $O(kn + (1/\alpha)\Delta^{1/\tau}\log^6{n})$  \\
%
%
$b=1$, $\tau = \infty$ & CrowdedBin & $O((k/\alpha)\log^6{n})$ \\
\hline \hline
\multicolumn{3}{|c|}{$\epsilon$-Gossip ($0 < \epsilon < 1$)} \\
\hline 
$b=1$, $\tau\geq 1$ & SharedBit* & $O\left(\frac{n\sqrt{\Delta\log{\Delta}}}{(1-\epsilon)\alpha}\right)$ \\
\hline

\end{tabular}
\caption{{\bf A summary of gossip and $\epsilon$-gossip round complexity bounds proved in this paper.}
(In the $\epsilon$-gossip problem, it is assumed that every node starts with a message,
but each node need only learn an $\epsilon$-fraction of the $n$ total messages.)
In the following: $n$ is the network size, $k$ is the number of gossip messages,
$\alpha$ and $\Delta$ are the vertex expansion and maximum degree, respectively,
 of the network topology graph,  $b$ is the tag length, and $\tau$ is the stability factor. All results hold
with high probability in $n$ (i.e., at least $1-1/n$). 
Notice, the result for $b=0$ and $\tau \geq 1$ is the best known result even for the easier case of $b=0$ and $\tau=\infty$.
(*) The SharedBit algorithm (alone among all algorithms studied) requires
shared randomness. (**) The SimSharedBit algorithm is existential in the sense that it depends
on a pseudorandomness generator that we prove exists in Section~\ref{sec:unstable}. }
\end{figure}

\paragraph{Results.}
In this paper,
we describe and analyze new algorithms for the {\em gossip} problem in the mobile
telephone model with respect to different model parameter and algorithm assumptions.
This problem assumes a subset of nodes start with messages (also called {\em tokens}). The goal is to spread these messages to the
entire network.
Gossip is fundamental in distributed computing and is considered particularly important
for ad hoc networks such as the smartphone meshes studied in this paper 
(c.f., the introductory discussion in~\cite{shah:2009}).

Below (and in Figure~\ref{fig:results}) we state and discuss our main results.
In the following, let $n>1$ be the network size and $k, 1\leq k \leq n$, be the number of tokens in the system.
For a given topology graph, 
we use $\alpha$ to describe its vertex expansion (see the model discussion below) and $\Delta$ to describe its
maximum degree.\footnote{If the topology is dynamic, then $\alpha$ is defined as the minimum expansion over all rounds,
and  $\Delta$ is defined as the largest maximum degree over all rounds.}
We assume the topologies are connected.
All round complexity results hold with high probability in $n$ (i.e., probability at least $1-1/n$). 

We start by considering the difficult setting where $b=0$ and $\tau=1$; i.e., nodes cannot use tags and the network topology graph
can change completely in each round.
In Section~\ref{sec:b0}, we describe and analyze a natural strategy for this setting called BlindMatch,
which has nodes select neighbors with uniform randomness to send connection attempts.\footnote{This is essentially
the well-known PUSH-PULL strategy from the classical telephone model with the key exception that in our model
if a node receives multiple connection attempts, only one succeeds. As discussed in the related work and Section~\ref{sec:b0},
this well-motivated model change requires new analysis techniques to understand information propagation.}
We prove that BlindMatch solves gossip in $O((1/\alpha)k\Delta^2\log^2{n})$ rounds.
This bound might seem pessimistic at first glance,
but it is known that disseminating even a single message in the mobile telephone model with this strategy can take $\Omega(\Delta^2/\sqrt{\alpha})$ rounds in some 
networks~\cite{newport:2017}. 
Indeed, this lower bound holds even for the easier assumption that $\tau=\infty$.
Accordingly, we do not consider $b=0$ and $\tau=\infty$ as a distinct case in this paper.
(To provide intuition for why $\Omega(\Delta^2)$ rounds are sometimes necessary, consider two stars centered on $u$ and $v$, respectively,
where each star has around $\Delta$ points and $u$ and $v$ are connected by an edge.
Assume $u$ starts with a gossip message.
For $v$ to receive this message two events must happen: (1) $u$ selects $v$ for a connection; and (2) $v$
accepts $u$'s connection from all incoming connections in that round. The first event occurs with probability $\approx 1/\Delta$,
and because $v$ can expect a constant fraction of its neighbors to send it connection attempts in any given round,
the second event also occurs with probability $\approx 1/\Delta$.)
Our BlindMatch result provides the benchmark against which we attempt to improve with the algorithms that follow.

In Section~\ref{sec:unstable}, we consider the case where $b=1$ and $\tau\geq 1$; i.e., the network can still change completely in each round,
but now nodes can advertise a single bit to their neighbors.
We begin by describing and analyzing an algorithm called SharedBit.
This algorithm assumes a shared randomness source which is used to implement (essentially) a random hash
function that allows nodes to hash their current set of known messages to a single bit to be used as their one-bit advertising tag.
The key guarantee of this function is that nodes with the same sets advertise the same bit,
and nodes with different sets have a constant probability of advertising different bits.
This helps nodes seek out productive connections with neighbors (e.g., connections in which at least one node learns something new).
We prove that SharedBit solves gossip in $O(kn)$ rounds.

We next seek to eliminate the shared randomness assumption.
To do so, we describe SimSharedBit which solves gossip in $O(kn + (1/\alpha)\Delta^{1/\tau}\log^6{n})$ rounds,
without assuming a shared randomness source.
Notice, because $\alpha \geq 2/n$ and $\Delta \leq n$, this solution is always within log factors of the SharedBit
for large $k$, and for small $k$ it is still comparable for many values of $\alpha$, $\Delta$, and/or $\tau$.

The SimSharedBit algorithm depends on a novel generalization of {\em Newman's Theorem}~\cite{newmans}---a well-known
result on  public randomness simulation from the study of two-party communication complexity. 
We prove that there exists an appropriate pseudorandom number generator that can provide
sufficient randomness for the SharedBit strategy.
We then elect a leader in $O((1/\alpha)\Delta^{1/\tau}\log^6{n})$ rounds using an algorithm from~\cite{newport:2017},
and use this leader to disseminate a small generator seed.
We note that our generalization of Newman's Theorem is potentially of standalone interest as the techniques
we introduced can be used to study pseudorandomness in many different graph algorithm settings.

In Section~\ref{sec:stable},
we consider the impact of topology changes on gossip time.
In particular, we consider the case where $b=1$ and $\tau=\infty$; i.e., the network topology is stable.
We describe and analyze CrowdedBin, an algorithm that solves gossip in $O((1/\alpha)k\log^6{n})$ rounds.
This algorithm matches or outperforms the $O(kn)$ round complexity of SharedBit for all $\alpha$ values (ignoring 
log factors). For well-connected networks (e.g., constant $\alpha$), it performs almost a factor of $n$ faster.
These results hint that large increases to stability are more valuable to gossip algorithms than large increases
to tag length (for most of our solutions, increasing $b$ beyond $1$ only improves performance by at most logarithmic factors).

The benefit of stable network topologies is that nodes can transmit larger amounts of information about their
current state to their neighbors by using their single bit advertisement tag over multiple rounds.
CrowdedBin leverages this capability to help nodes efficiently converge on an accurate estimate of $k$---which is not known in advance.
This process depends on nodes testing guesses by throwing their tokens into a number of bins corresponding
to the current guess, and then seeking/spreading evidence of crowding (as established by a new balls-in-bins algorithm described
in Section~\ref{sec:stable}).
%
Once all nodes learn an appropriate guess of $k$,
CrowdedBin deploys an efficient parallel rumor spreading strategy to efficiently disseminate the $k$ tokens.

Finally,  we consider the {\em $\epsilon$-gossip} problem,
which is parameterized with a fraction $\epsilon, 0 < \epsilon <1$,
assumes that $k=n$,
and relaxes the gossip problem to require only that every node receives at least $n\epsilon$ of the
$n$ total tokens.
This variation is useful for settings
where it is sufficient for nodes to learn {\em enough} rumors to complete the task at hand; e.g., when an algorithm requires
responses from only a majority quorum of nodes.

In Section~\ref{sec:egossip}, we re-analyze the SharedBit gossip algorithm from Section~\ref{sec:unstable}.
Deploying a novel argument based on finding productive ``coalitions" of nodes,
we show that SharedBit solves $\epsilon$-gossip in $O\left(\frac{n\sqrt{\Delta\log{\Delta}}}{(1-\epsilon)\alpha}\right)$  rounds.
Recall that SharedBit solves regular gossip in $O(n^2)$ rounds under the $k=n$ assumption.
Therefore, when $\epsilon$ is a constant fraction and the network is well-connected ($\alpha$ is large),
SharedBit solves $\epsilon$-gossip up to a (sub-linear) polynomial factor faster than the standard gossip problem.

\paragraph{Related Work.}
The mobile telephone model used in this paper was first  introduced in a study of rumor spreading
by Ghaffari and Newport~\cite{ghaffari:2016}.
We also recently studied leader election in this same model~\cite{newport:2017}.
As noted, the mobile telephone model 
is a variation of the classical telephone model
(first introduced by Frieze and Grimmett~\cite{frieze1985shortest})
adapted to better describe smartphone peer-to-peer networks.
The mobile model differs from the classical model in two ways:
(1) the classical model implicitly fixes $b=0$ and (typically) $\tau = \infty$; and (2) the classical
model allows nodes  to accept an unbounded number of incoming connections.

It is important to emphasize that most of the well-known bounds in the classical model
depend on this assumption of unbounded connections, and
removing this assumption requires new analysis techniques; c.f., the discussion in~\cite{ghaffari:2016}.
We note that work by Daum~et~al.~\cite{kuhn:bounded} (which preceded~\cite{ghaffari:2016,newport:2017})
also pointed out the dependence of existing telephone model bounds on unbounded concurrent connections.

A fundamental problem in peer-to-peer networks is {\em rumor spreading}, in which
a single message must be disseminated from a designated source to all nodes (this is equivalent to gossip with $k=1$).
This problem is well-understood in the classical telephone model,
where spreading times are often expressed with respect to 
spectral properties of the network topology graph such as 
graph conductance (e.g.,~\cite{giakkoupis2011tight})
 and vertex expansion (e.g.,~\cite{chierichetti2010rumour,giakkoupis2012rumor,fountoulakis2010rumor,giakkoupis2014tight}). 
 This existing work established that efficient rumor spreading is possible with respect to both graph properties in the classical model.
%
In~\cite{ghaffari:2016}, we studied this problem in the mobile telephone model.
We proved that efficient rumor spreading with respect to conductance {\em is not} possible in the mobile telephone model,
but efficient spreading with respect to vertex expansion {\em is} possible.
We then proved that for $b=1$ and $\tau\geq 1$, a simple random spreading strategy solves the problem in $O((1/\alpha)\Delta^{1/\tau}\text{polylog}(n))$
rounds---matching the tight $\Theta((1/\alpha)\log^2{n})$ result from the classical telephone model within log factors for $\tau \geq \log{\Delta}$.
In~\cite{newport:2017},
we built on these results to solve leader election in similar asymptotic time.

Though gossip is well-studied in peer-to-peer models (see~\cite{shah:2009} for a good overview),
little is known about how to tackle the problem in the mobile telephone model,
where concurrent connections are now bounded but nodes can leverage advertising tags.\footnote{It might be 
tempting to simply run $k$ parallel instances of the rumor spreading strategy from~\cite{ghaffari:2016}
to gossip $k$ messages, but this approach fails for three reasons:
(1) our model allows only $O(1)$ tokens to be sent per connection per round;
(2) each of the $k$ instances requires its own advertising tag bit, whereas all of our new gossip results focus
on the case where $b\leq1$;
and (3) nodes do not know $k$ in advance.
Accordingly, most results presented in this paper require substantial technical novelty.}
Finally, we note that there are application similarities between gossip in the mobile telephone model
and existing reliable multicast solutions for mobile ad hoc (e.g.,~\cite{gopalsamy2002reliable})
and delay-tolerant (e.g.,~\cite{burleigh2003delay}) networks. These existing solutions,
however, tend to be empirically evaluated and depend on the ability to predict information
about link behavior (e.g., predicted link duration or an advance schedule of when given links will be present).

%

 \section{Model and Problem}
\label{sec:model}
%
%
%

We describe a smartphone peer-to-peer network
using  the {\em mobile telephone model}.
As elaborated in the introduction,
the basic properties of this model---including
its scan-and-connect behavior, dynamic topologies, 
and the nodes' ability to advertise a bounded tag---are inspired in particular
by the behavior of the Apple Multipeer Connectivity framework for smartphone peer-to-peer networking.

In more detail, we
assume executions proceed in synchronous rounds labeled $1,2,...$.
We assume all nodes start in the same round.
We describe a peer-to-peer network topology in each round $r$ as an undirected connected graph 
$G_r=(V,E_r)$ that can change from round to round, 
constrained by the stability factor (see below).
We call the sequence of graphs $G_1, G_2,...$ that describe the evolving topology a dynamic graph.
We assume the definition of the dynamic graph is fixed at the beginning of the execution.

We assume a computational process (also called a {\em node} in the following) is assigned to each vertex in $V$,
and use $n=|V|$ to indicate the network size.
At the beginning of each round $r$, we assume each node $u$ learns its neighbor set $N(u)$ in $G_r$.
Node $u$ can then select at most one node from $N(u)$ and send a connection proposal.
A node that sends a proposal cannot also receive a proposal.
If a node $v$ does not send a proposal, and at least one neighbor sends a proposal
to $v$, then $v$ can {\em accept} an incoming proposal.
There are different ways to model how $v$ selects a proposal to accept.
In this paper, for simplicity, we assume $v$ accepts an incoming proposal selected
with uniform randomness from the incoming proposals. 
If node $v$ accepts a proposal from node $u$,
the two nodes are {\em connected} and can perform a bounded amount of interactive communication to conclude the round.
We leave the specific bound on communication per connection as a problem parameter.


\paragraph{Model Parameters.}
We parameterize the mobile telephone model with two integers, a {\em tag length} $b\geq 0$
and a {\em stability factor} $\tau \geq 1$. 
We allow each node to select a {\em tag} containing $b$ bits to advertise at the beginning
of each round. That is, if node $u$ chooses tag $b_u$ at the beginning of a round,
all neighbors of $u$  learn $b_u$ before making their connection decisions in this round.
A node can change its tag from round to round. 

We also allow for the possibility of the network topology changing between rounds.
We bound the allowable changes with a stability factor $\tau \geq 1$.
For a given $\tau$, 
the dynamic graph describing the changing topology
 must satisfy the property that at least $\tau$ rounds must pass between any changes to the topology.
For $\tau=1$, the graph can change arbitrarily in every round.
We use the convention of stating $\tau=\infty$ to indicate the graph never changes.


\paragraph{Vertex Expansion and Maximum Degree.}
Several of our results express time complexity bounds with
respect to the {\em vertex expansion} $\alpha$ of the dynamic graph describing the network topology.
To define $\alpha$, we first review a standard definition of vertex expansion
for a fixed static unconnected graph $G=(V,E)$.

For a given $S \subseteq V$,  define the {\em boundary} of $S$, indicated $\partial S$, as follows:
 $\partial S = \{ v\in V \setminus S : N(v) \cap S \neq \emptyset\}$: that is, $\partial S$ is the set
 of nodes not in $S$ that are directly connected to $S$ by an edge in $E$.
Next define $\alpha(S) = |\partial S|/|S|$.
As in~\cite{giakkoupis2014tight,ghaffari:2016}, we define the {\em vertex expansion} $\alpha(G)$ of our static graph $G = (V,E)$
 as follows:
 
 \[  \alpha(G) = \min_{S \subset V, 0 < |S| \leq n/2} \alpha(S). \]
 
 \noindent Notice that despite the possibility of $\alpha(S) >1$ for some $S$, we always have $\alpha(G) \leq 1$.
We define the vertex expansion $\alpha$ of a {\em dynamic} graph $G_1, G_2...$,
to be the minimum vertex expansion over all of the dynamic graph's constituent static graphs
(i.e., $\alpha = \min\{\alpha(G_i) : i \geq 1\})$.

Similarly, we define the maximum degree $\Delta$ of a dynamic
graph to be the maximum degree over all of the dynamic graph's constituent static graphs.

\paragraph{The Gossip Problem.}
The gossip problem assumes each node is provided an upper bound\footnote{For the sake of concision,
the results described in the introduction and Figure~\ref{fig:results} make the standard
assumption that $N$ is a polynomial upper bound on $n$, allowing us to replace $N$ with $n$
within logarithmic factors inside asymptotic notation. In the formal theorem statements for these results,
however, we avoid this simplification and leave $N$ in place where used---enabling a slightly finer-grained
understanding of the impact of the looseness of network size estimation on our complexity guarantees.} $N \geq n$ on
the network size and a unique ID (UID) from $[N]$.
The problem assumes some subset of nodes begins with a gossip message to spread (which we also call a {\em token}).
We use $k$ to describe the size of this subset and assume that $k$ is not known to the nodes in advance.
A given node can start the execution with multiple tokens, but no token starts at more than one node.
We treat gossip tokens as comparable black boxes 
that can only be communicated between nodes through connections (e.g., a node cannot transmit a gossip token
to a neighbor by spelling it out bit by bit using its advertising tags).
If a node begins an execution with a token or has received the token through a connection,
we say that the node {\em owns}, {\em knows} or has {\em learned} that token.
We assume that a pair of connected nodes can exchange  at most $O(1)$ tokens
and $O(\polylog{N})$ additional
bits during a one round connection.

\paragraph{Solving the Gossip Problem.}
The gossip problem requires all nodes to learn all $k$ tokens,
Formally,
we say a distributed algorithm
{\em solves the gossip problem in $f(n,k,\alpha, b,\tau)$ rounds},
if with probability at least $1-1/n$, 
all nodes know all $k$ tokens
 by round $f(n,k,\alpha,b,\tau)$ when
executed in a network of size $n$, with $k$ tokens,
vertex expansion $\alpha$,
tag length $b$, and stability factor $\tau$.
We omit parameters when not relevant to the bound.

\paragraph{Probability Preliminaries.}
The analyses that follow leverage the following well-known probability results:

 \begin{theorem}
For $p \in [0, 1]$, we have $(1-p) \leq e^{-p}$ and $(1+p) \geq 2^p$.
 \label{fact:prob}
 \end{theorem}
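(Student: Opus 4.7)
The plan is to prove each inequality using a short convexity argument, since both are standard consequences of the shape of the exponential (or power) function relative to a tangent or chord.

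For the first inequality $(1-p) \le e^{-p}$, I would argue that $e^{-p}$ is a convex function of $p$, and the line $1-p$ is precisely its tangent at $p=0$: indeed, $e^{-p}\big|_{p=0} = 1$ and $\frac{d}{dp} e^{-p}\big|_{p=0} = -1$, so the tangent line at the origin is $y = 1 - p$. Since a convex function lies on or above each of its tangents, we obtain $e^{-p} \ge 1 - p$ for all real $p$, and in particular on $[0,1]$. As an alternative purely calculus-based route, one can set $g(p) = e^{-p} - (1-p)$, observe $g(0) = 0$, and note that $g'(p) = 1 - e^{-p} \ge 0$ for $p \ge 0$, so $g$ is non-decreasing on $[0,1]$ and hence nonnegative there.

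For the second inequality $(1+p) \ge 2^p$, I would instead use convexity together with a chord (rather than a tangent). The function $f(p) = 2^p$ is convex on $\mathbb{R}$. The chord joining the two endpoints $(0, 2^0) = (0,1)$ and $(1, 2^1) = (1,2)$ is the line segment $y = 1 + p$ for $p \in [0,1]$. By convexity of $f$, the graph of $f$ lies on or below this chord on the interval $[0,1]$, which gives exactly $2^p \le 1 + p$ for $p \in [0,1]$, i.e., $(1+p) \ge 2^p$.

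There is no real obstacle here: both statements are standard analytic inequalities and the only ``choice'' in the proof is whether to justify them via tangent/chord convexity or via an elementary derivative sign argument. The one thing worth noting is that the second inequality is specifically restricted to $p \in [0,1]$ (it fails for $p > 1$, e.g.\ $p = 2$ gives $3 < 4$), which is why the chord-based argument, tied to the interval endpoints $0$ and $1$, is the natural framing.
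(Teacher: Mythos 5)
Your proof is correct: the tangent-line (or derivative-sign) argument for $(1-p)\leq e^{-p}$ and the chord argument for $2^p\leq 1+p$ on $[0,1]$ are both valid, and your observation that the second inequality is genuinely tied to the interval $[0,1]$ is the right thing to flag. The paper itself states this as a well-known fact without proof, so there is no argument to compare against; your writeup is a standard and complete justification.
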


\begin{theorem}[Chernoff Bound: Lower Bound Form]
\label{thm:chernoff}
Let $Y = \sum_{i=1}^t X_i$ be the sum of $t > 0$ i.i.d.~random indicator variables $X_1$, $X_2$,..., $X_t$,
and let $\mu = E(Y)$.
Fix some fraction $\delta$, $0 < \delta < 1$. It follows:
\[ \Pr(X \leq (1-\delta)\mu) \leq e^{- \frac{\delta^2 \mu}{2}}. \]
\end{theorem}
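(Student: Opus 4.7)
The plan is to use the standard Chernoff-bound technique of applying Markov's inequality to an exponential moment generating function. For any fixed $s > 0$, the event $\{Y \leq (1-\delta)\mu\}$ coincides with $\{e^{-sY} \geq e^{-s(1-\delta)\mu}\}$, so Markov's inequality yields
\[ \Pr(Y \leq (1-\delta)\mu) \leq \frac{E[e^{-sY}]}{e^{-s(1-\delta)\mu}}. \]
Exploiting independence, I would factor $E[e^{-sY}] = \prod_{i=1}^{t} E[e^{-sX_i}]$. Letting $p_i = E[X_i]$, a direct calculation on the indicator $X_i$ gives $E[e^{-sX_i}] = 1 - p_i(1 - e^{-s})$. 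Applying the inequality $(1-p) \leq e^{-p}$ from Theorem \ref{fact:prob} with $p = p_i(1 - e^{-s}) \in [0,1]$ yields $E[e^{-sX_i}] \leq \exp(-p_i(1-e^{-s}))$, and since $\sum_i p_i = \mu$, multiplying across $i$ gives $E[e^{-sY}] \leq \exp(-\mu(1 - e^{-s}))$.

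Next, I would optimize the resulting bound $\exp(-\mu(1-e^{-s}) + s(1-\delta)\mu)$ over $s > 0$. Differentiating the exponent in $s$ and setting the derivative to zero gives the optimal choice $s^{*} = -\ln(1-\delta)$, which is strictly positive for $\delta \in (0,1)$. Substituting $s^{*}$ back in produces the classical closed form
\[ \Pr(Y \leq (1-\delta)\mu) \leq \left(\frac{e^{-\delta}}{(1-\delta)^{1-\delta}}\right)^{\mu}. \]

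The main obstacle is the final analytic step: showing that this closed-form expression is bounded above by $e^{-\delta^2\mu/2}$. Taking logarithms and dividing by $\mu$, the claim reduces to the real-variable inequality $(1-\delta)\ln(1-\delta) \geq -\delta + \delta^2/2$ for $\delta \in [0,1)$. I would prove this by setting $f(\delta) = (1-\delta)\ln(1-\delta) + \delta - \delta^2/2$, checking $f(0) = 0$, computing $f'(\delta) = -\ln(1-\delta) - \delta$ so that $f'(0) = 0$, and then observing that $f''(\delta) = \delta/(1-\delta) \geq 0$ for $\delta \in [0,1)$. Convexity combined with the vanishing first derivative at $0$ forces $f \geq 0$ on $[0,1)$, which delivers the stated bound and completes the proof.
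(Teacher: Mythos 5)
Your proof is correct and complete. The paper itself offers no proof of this statement---it is listed among the ``Probability Preliminaries'' as a well-known fact---so there is no in-paper argument to compare against. What you give is the standard derivation: Markov's inequality applied to $e^{-sY}$, factorization of the moment generating function via independence, the elementary bound $1-p \leq e^{-p}$ (which the paper does record as Theorem~\ref{fact:prob}), optimization at $s^{*} = -\ln(1-\delta)$, and the convexity argument reducing the closed form $\bigl(e^{-\delta}/(1-\delta)^{1-\delta}\bigr)^{\mu}$ to $e^{-\delta^2\mu/2}$. All steps check out, including the final real-variable inequality $f''(\delta) = \delta/(1-\delta) \geq 0$ with $f(0)=f'(0)=0$. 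Two minor remarks: the theorem statement writes $\Pr(X \leq \cdots)$ where it clearly means $\Pr(Y \leq \cdots)$, and you silently correct this; and your argument only uses independence, not identical distribution, so it in fact proves a slightly more general statement than the one quoted.
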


\begin{theorem}[Chernoff Bound: Upper Bound Form]
\label{thm:chernoff2}
Let $Y = \sum_{i=1}^t X_i$ be the sum of $t > 0$ i.i.d.~random indicator variables $X_1$, $X_2$,..., $X_t$,
and let $\mu = E(Y)$.
Fix some value $\delta>1$. It follows:
\[ \Pr(X \geq (1+\delta)\mu) \leq e^{- \frac{\delta \mu}{3}}. \]
\end{theorem}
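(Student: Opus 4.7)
The plan is to prove the bound via the standard moment-generating-function technique. First I would apply Markov's inequality to the nonnegative random variable $e^{sY}$ for a parameter $s>0$ to be chosen later, yielding
\[
\Pr(Y \geq (1+\delta)\mu) \;=\; \Pr(e^{sY} \geq e^{s(1+\delta)\mu}) \;\leq\; \frac{E[e^{sY}]}{e^{s(1+\delta)\mu}}.
\]
Then I would use the independence of the $X_i$ to factor $E[e^{sY}] = \prod_{i=1}^{t} E[e^{sX_i}]$, and bound each factor by writing $E[e^{sX_i}] = 1 + p_i(e^s - 1) \leq e^{p_i(e^s - 1)}$, where $p_i = \Pr(X_i=1)$ and the inequality uses Theorem~\ref{fact:prob}. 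Summing the exponents and invoking $\mu = \sum_i p_i$ gives $E[e^{sY}] \leq e^{\mu(e^s-1)}$.

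Next I would choose $s = \ln(1+\delta)>0$ (which is the minimizer of the resulting expression in $s$), obtaining the canonical Chernoff form
\[
\Pr(Y \geq (1+\delta)\mu) \;\leq\; \left(\frac{e^{\delta}}{(1+\delta)^{1+\delta}}\right)^{\!\mu}.
\]
This portion of the argument is essentially mechanical; the substantive work comes next.

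The remaining step, and the only nonroutine one, is to show that the right-hand side is at most $e^{-\delta\mu/3}$ whenever $\delta>1$. Taking logarithms and dividing by $\mu$, this reduces to the deterministic inequality $(1+\delta)\ln(1+\delta) - \delta \geq \delta/3$, i.e.\ $h(\delta) := (1+\delta)\ln(1+\delta) - \tfrac{4}{3}\delta \geq 0$ on $[1,\infty)$. I would verify this by a short calculus check: $h(1) = 2\ln 2 - \tfrac{4}{3} > 0$, and $h'(\delta) = \ln(1+\delta) - \tfrac{1}{3}$, which is positive for all $\delta \geq 1$ because $\ln 2 > \tfrac{1}{3}$. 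Hence $h$ is nondecreasing on $[1,\infty)$ and stays nonnegative, yielding the claimed simplified bound.

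The main obstacle is precisely this last tail-simplification inequality, since the Markov/MGF mechanics and the choice $s = \ln(1+\delta)$ are standard, but pinning down the specific constant $1/3$ requires the mild monotonicity argument above. An alternative route would bound $e^{s}-1 \leq s + s^2$ for small $s$ and optimize directly, but the calculus-based simplification gives a cleaner and tighter constant in the regime $\delta > 1$ relevant to the statement.
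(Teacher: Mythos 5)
Your proof is correct: the Markov/MGF mechanics, the choice $s=\ln(1+\delta)$, and the reduction to $h(\delta)=(1+\delta)\ln(1+\delta)-\tfrac{4}{3}\delta\geq 0$ on $[1,\infty)$ (verified via $h(1)=2\ln 2-\tfrac{4}{3}>0$ and $h'(\delta)=\ln(1+\delta)-\tfrac{1}{3}>0$ for $\delta\geq 1$) all check out. The paper states this bound as a standard preliminary without proof, so there is no in-paper argument to compare against; yours is the canonical derivation. One trivial nit: Theorem~\ref{fact:prob} as stated only gives $1-p\leq e^{-p}$ for $p\in[0,1]$, whereas you need $1+y\leq e^y$ for $y=p_i(e^s-1)\geq 0$, which may exceed $1$; since that inequality holds for all real $y$, nothing breaks, but you should cite it in that general form rather than leaning on the paper's restricted statement.
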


\begin{theorem}[Chernoff-Hoeffding Bound]
\label{thm:chernoff2}
Let $X_1$, $X_2$, ..., $X_t$, be $t \geq 1$ i.i.d.~random indicator variables.
Let $\mu = E(X_i)$ and fix some $\delta > 0$.
It follows:

\[  \Pr\left(  \frac{1}{t}\sum_{i=1}^t X_i \geq \mu + \delta \right) \leq e^{- 2 \delta^2 t}. \]

\end{theorem}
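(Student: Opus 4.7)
The plan is to apply the standard Chernoff-style moment generating function argument and then optimize the free parameter. Concretely, I will set $Z_i = X_i - \mu$, so that each $Z_i$ is mean zero and supported in the interval $[-\mu, 1-\mu]$, and write $S_t = \sum_{i=1}^t Z_i$. For any $s > 0$, independence of the $Z_i$ together with the exponential Markov inequality gives
\[ \Pr\!\left[\tfrac{1}{t}\sum_{i=1}^t X_i \geq \mu + \delta\right] = \Pr[S_t \geq \delta t] \leq e^{-s\delta t} \prod_{i=1}^t E[e^{s Z_i}]. \]
The remaining task is to bound $E[e^{s Z_i}]$ uniformly and then choose $s$ well.

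The key technical ingredient is Hoeffding's Lemma: if $Y$ is a mean-zero random variable supported in $[a,b]$, then $E[e^{sY}] \leq e^{s^2 (b-a)^2 / 8}$. This is really the hard part of the whole argument, and I would prove it by using convexity of $x \mapsto e^{sx}$ to dominate $e^{sY}$ by the linear interpolant between $e^{sa}$ and $e^{sb}$, taking expectations (the linear-in-$Y$ term drops out since $E[Y]=0$), and then carrying out a short second-derivative estimate on the logarithm of the resulting expression to show that it never exceeds $s^2(b-a)^2/8$. Applied to $Z_i$, whose support has length $b - a = 1$, this yields $E[e^{s Z_i}] \leq e^{s^2/8}$ for every $s > 0$.

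Substituting back into the product gives $\Pr[S_t \geq \delta t] \leq \exp(-s\delta t + s^2 t/8)$, valid for every $s > 0$. To finish, I minimize the exponent over $s$: the derivative $-\delta + s/4$ vanishes at $s = 4\delta$, and plugging this choice in produces exponent $-4\delta^2 t + 2\delta^2 t = -2\delta^2 t$, matching the stated bound. No upper restriction on $\delta$ is required, because Hoeffding's MGF bound holds for all $s > 0$; and if $\mu + \delta > 1$ the left-hand probability is zero and the inequality is vacuous, so no additional case analysis is needed.
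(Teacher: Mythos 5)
The paper states this Chernoff--Hoeffding bound only as a well-known preliminary (Section~\ref{sec:model}) and gives no proof of its own, so there is nothing to compare against. Your argument is the standard and correct one: center the indicators, apply the exponential Markov inequality, bound each moment generating function via Hoeffding's Lemma (with interval length $b-a=1$, giving $E[e^{sZ_i}]\leq e^{s^2/8}$), and optimize at $s=4\delta$ to obtain the exponent $-2\delta^2 t$; the observation that no upper restriction on $\delta$ is needed is also right. This fully establishes the theorem as stated.
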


\begin{theorem}[Markov's Inequailty]
\label{thm:markov}
Let $X$ be a nonnegative random variable and $a>0$ be a real number. It follows:

\[  \Pr\left(X \geq a\right) \leq \frac{E(X)}{a}. \]
\end{theorem}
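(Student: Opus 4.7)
The plan is to deduce the inequality from a pointwise comparison between $X$ and a constant multiple of an indicator. First I would introduce the indicator random variable $\mathbf{1}_{\{X \geq a\}}$, which equals $1$ on the event $\{X \geq a\}$ and $0$ otherwise. The key observation is that nonnegativity of $X$ forces $a \cdot \mathbf{1}_{\{X \geq a\}} \leq X$ to hold pointwise: on the event $\{X \geq a\}$ the left side equals $a$ while the right side is at least $a$, and on the complementary event the left side is $0$ while the right side is still nonnegative.

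From there, I would take expectations of both sides and use linearity along with the basic identity $E(\mathbf{1}_A) = \Pr(A)$ to obtain $a \cdot \Pr(X \geq a) \leq E(X)$. Dividing through by $a > 0$ then yields the claimed bound $\Pr(X \geq a) \leq E(X)/a$.

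There is essentially no technical obstacle; the only thing to be careful about is that both hypotheses of the theorem are used. Nonnegativity of $X$ is needed so that the pointwise domination survives on the complement of $\{X \geq a\}$ (otherwise $X$ could be very negative there and the inequality would fail), and $a > 0$ is needed to justify the final division and to ensure the conclusion is nontrivial. An alternative approach would be to integrate the tail via $E(X) = \int_0^\infty \Pr(X > t)\, dt$ and truncate the integral at $a$, but the indicator-domination route is shorter and avoids invoking the layer-cake representation.
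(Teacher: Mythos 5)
Your proof is correct: the pointwise domination $a\cdot\mathbf{1}_{\{X\geq a\}}\leq X$ followed by taking expectations is the standard argument, and you correctly identify where both hypotheses (nonnegativity of $X$ and $a>0$) are used. The paper states this result as a well-known probability preliminary without proof, so there is nothing to compare against; your argument is a complete and standard justification.
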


\section{Token Transfer Subroutine}
\label{sec:transfer}

An obstacle to solving gossip in the mobile telephone model is deciding which tokens to exchange
between two connected nodes.
In more detail, once two nodes $u$ and $v$ with respective token sets $T_u$ and $T_v$ connect, 
even if they {\em know} $T_u \neq T_v$, 
they must still identify at least one token $t\notin T_u \cap T_v$ to transfer for this round of gossip
to be useful. Complicating this task is the model restriction that $u$ and $v$
can only exchange $O(\polylog{N})$ bits before deciding which tokens (if any)
to transfer. This is not (nearly) enough bits to encode a full token set
(a simple counting argument establishes that every coding scheme will require $\Omega(N)$ bits
for some sets). Therefore, a more efficient routine is needed to implement this useful token
transfer.

Here we describe a {\em transfer subroutine} that solves this problem and is used by multiple
gossip algorithms described in this paper. 
This routine, which we call $Transfer(\epsilon)$, for an error bound $\epsilon$, $0 < \epsilon < 1$,
is a straightforward application of an existing algorithmic tool
from the literature on two-party communication complexity.
It guarantees the following: if $Transfer(\epsilon)$ is called by two connected nodes $u$ and $v$,
with respective token sets $T_u$ and $T_v$,
and $T_u \neq T_v$, then with probability at least $1-\epsilon$ the smallest token $t$ (by a predetermined token ordering)
that is {\em not} in $T_u \cap T_v$, will be transferred by the node that knows $t$ to the node that does not.
This routine requires $u$ and $v$ to exchange only $O(\log^2{N} \cdot \log{(\frac{\log{N}}{\epsilon})})$ controls bits in addition
 to token $t$. It also assumes some fixed ordering on tokens.

\paragraph{Equality Testing.}
We use one of the many known existing solutions 
to the {\em set equality} (EQ) problem from the study
of two-party communication complexity.
In our setting with $u$ and $v$ (described) above,
these existing solutions provide $u$ and $v$ a way to test the equality of $T_u$ and $T_v$,
and they offer the following guarantee:
if $T_u = T_v$, then $u$ and $v$ will correctly determine their sets are equal with probability $1$,
else if $T_u \neq T_v$ then $u$ and $v$ will {\em erroneously} determine their sets are equal
with probability no more than $1/2$.
These existing solutions assume only private randomness and require $u$ and $v$
to exchange no more than $O(\log{N})$ bits.
A nice property of most such solutions is that each trial is independent.
Therefore, if $u$ and $v$ repeat this test $c$ times, for some integer $c\geq 1$,
then the error probability drops exponentially fast with $c$ to $2^{-c}$.
Let us fix one such equality testing routine and call it $EQTest(c)$,
where parameter $c\geq 1$ determines how many trials to execute in testing the equality.

\paragraph{The Transfer Subroutine.}
We now deploy $EQTest(\epsilon')$,
for $\epsilon' = \lceil \log{(\frac{\log{N}}{\epsilon})}\rceil $, as a subroutine to implement 
the $Transfer(\epsilon)$ routine.
In particular, recall that for a given $u$ and $v$,
we can understand $T_u$ and $T_v$ to both be subsets of
the values in $[N]$ (as each node in the network can label each token with its UID from $[N]$ at the beginning
of the execution).
Our goal is to identify the smallest location value in $[N]$ 
that is in $T_u \cup T_v$ but not in $T_u \cap T_v$.
To do so, we can implement a binary search over the interval $[N]$,
using $EQTest(\epsilon')$ to test the equality of the interval in question between $u$ and $v$.
In more detail:

\bigskip

\noindent {\bf Transfer}$(\epsilon)$:

$a \gets 1$; $b\gets N$

{\bf while} $a\neq b$

$\>\>\>$ $result \gets$ {\bf EQTest}$(\epsilon')$ executed
 on  $T_u\cap [a,\lfloor b/2 \rfloor]$ and $T_v \cap [a,\lfloor b/2 \rfloor]$
 
 $\>\>\>$ {\bf if} $result = notequal$ {\bf then} $b \gets \lfloor b/2 \rfloor$ {\bf else} $a \gets \lfloor b/2 \rfloor +1$

{\em transfer} token $a$ to the other node if you know token $a$

\bigskip

\noindent The above logic implements a basic binary search over the interval $[N]$ to identify the smallest
value in this interval that is in exactly one of the two sets $T_u$ and $T_v$.
If every call to $EQTest$ succeeds then the search succeeds and $Transfer$ behaves correctly.
There are at most $\log{N}$ calls to $EQTest$, each of which fails with probability
$2^{-\epsilon'} \leq \epsilon/\log{N}$.
Therefore, by a union bound, the probability that at least one of the $\log{N}$ calls to $EQTest$ fails
is less than $\epsilon$, as claimed.
From a communication complexity perspective,
each call to $EQTest(\epsilon')$ requires $O(\log{N}\cdot \epsilon') = O(\log{N}\cdot \log{(\log{N}/\epsilon)})$ bits,
and we make $\log{N}$ such calls.
Therefore, the total communication complexity is in $O(\log^2{N}\cdot \log{(\frac{\log{N}}{\epsilon})})$, as claimed.


\section{Gossip with $b=0$ and $\tau \geq 1$}
\label{sec:b0}	

 Here we consider the most difficult case for gossip in our model:
 nodes cannot advertise any information to their neighbors ($b=0$),
 and the network topology graph can change arbitrarily in every round ($\tau = 1$).
 We will study the straightforward strategy in which nodes randomly select neighbors
 for attempted connections and then use the token transfer routine 
 to select tokens to exchange during successful connections.
 We will show this strategy solves gossip
 in $O((1/\alpha)k\Delta^2\log^2{N})$ rounds
 when executed with $k$ tokens in a network graph with expansion $\alpha$ and maximum degree $\Delta$.
 This result might seem pessimistically large at first glance,
 but as shown in~\cite{newport:2017}, there are networks in which simple blind connection strategies like those
 implemented here
 do require $\Omega(\Delta^2/\sqrt{\alpha})$ rounds to spread even a single message.

\paragraph{The {BlindMatch} Gossip Algorithm.}
At the beginning of each round $r\geq 1$,
each node $u\in V$ flips a fair coin to decide whether to be a {\em sender} or a {\em receiver} in $r$.
If $u$ decides to be a sender, it selects a neighbor uniformly from among its neighbors in this round
and sends it a connection proposal. If $u$ decides to be a receiver it waits to receive proposals.
If two nodes $u$ and $v$ connect, they execute the token transfer subroutine 
which attempts to transfer the smallest token in $(T_u(r) \cup T_v(r)) \setminus (T_u(r) \cap T_v(r))$,
assuming such a token exists.

\paragraph{Analysis.}
We now prove the below theorem concerning about the performance of the BlindMatch algorithm.
The proof adapts our recent analysis of leader election strategies in the mobile
telephone model under the assumption that $b=0$~\cite{newport:2017}.
The main contribution of this section, therefore, is less technical than it is the establishment
of a baseline against which to compare the other results studied in this paper.

\begin{theorem}
\label{thm:blindmatch}
The BlindMatch gossip algorithm solves the gossip problem in $O((1/\alpha)k\Delta^2\log^2{N})$ rounds
when executed with tag length $b=0$ in a network with stability $\tau\geq 1$.
\end{theorem}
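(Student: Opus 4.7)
The plan is to reduce the multi-token gossip analysis to a sequence of single-token rumor-spreading analyses by always focusing on the \emph{lead token} $t^*$, defined as the smallest UID-labeled token that, at the start of the current round, is not yet owned by every node. Let $S_t(r)$ denote the set of nodes owning token $t$ by the start of round $r$. The key structural observation is the following: if a round-$r$ connection forms between $u$ and $v$ with $u \in S_{t^*}(r)$ and $v \notin S_{t^*}(r)$, then by definition of $t^*$ every token with UID smaller than $t^*$ already lies in both $T_u(r)$ and $T_v(r)$, so the smallest element of the symmetric difference $T_u(r) \triangle T_v(r)$ is \emph{exactly} $t^*$, and the $Transfer$ subroutine delivers $t^*$ from $u$ to $v$ whenever it succeeds. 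Setting the subroutine's error parameter to $\epsilon = 1/N^c$ for a sufficiently large constant $c$ and union-bounding over the polynomially many invocations across the execution, all transfers succeed with high probability. Thus it suffices to show that each lead token becomes universally known in $O((\Delta^2/\alpha)\log^2 N)$ rounds, and then iterate over the at most $k$ distinct lead tokens.

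To bound the spreading time of a fixed lead token $t^*$, I would compute a lower bound on the probability that an arbitrary uninformed boundary node $v \in \partial S_{t^*}(r)$ learns $t^*$ in round $r$. Fix some $u \in N(v) \cap S_{t^*}(r)$. The joint event that $u$ is a sender, that $u$ selects $v$ uniformly among its at most $\Delta$ neighbors, and that $v$ is a receiver occurs with probability at least $1/(8\Delta)$. Conditioned on this, a Markov-style bound on the number of \emph{other} neighbors of $v$ that propose to $v$ (whose conditional expectation is at most $\Delta/2$) shows that with constant probability $v$ receives at most $O(\Delta)$ proposals in the round, in which case it accepts $u$'s proposal with probability $\Omega(1/\Delta)$. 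Combining these estimates with the high-probability success of $Transfer$, $v$ learns $t^*$ in a given round with probability $\Omega(1/\Delta^2)$.

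Vertex expansion gives $|\partial S_{t^*}(r)| \geq \alpha |S_{t^*}(r)|$ whenever $|S_{t^*}(r)| \leq n/2$, with a symmetric inequality in the shrinking phase (obtained by applying expansion to $V \setminus S_{t^*}(r)$). In either regime the expected number of new nodes learning $t^*$ in a single round is $\Omega(\alpha \min\{|S_{t^*}(r)|,\, n - |S_{t^*}(r)|\}/\Delta^2)$. A standard doubling argument on $|S_{t^*}|$ in the growing phase and halving argument on $|V \setminus S_{t^*}|$ in the shrinking phase, both combined with Chernoff concentration, yield that $t^*$ becomes universally known within $O((\Delta^2/\alpha)\log^2 N)$ rounds with probability at least $1 - 1/N^{c+2}$: one $\log N$ factor covers the $\Theta(\log N)$ doubling/halving stages needed to traverse the range $[1,n]$, and the second factor comes from per-stage Chernoff amplification to a polynomially small failure probability. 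Summing over the at most $k$ successive lead tokens and taking a final union bound over all failure events proves the claimed $O((1/\alpha) k \Delta^2 \log^2 N)$ round complexity.

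\paragraph{Main Obstacle.}
The main technical difficulty is handling the coupled randomness across nodes when lower-bounding the single-round learning probability at a boundary vertex $v$: the receiver status of $v$, the sender's choice to propose to $v$, and the sender/target decisions of $v$'s other neighbors are not jointly independent of the acceptance event, so a naive product-of-probabilities estimate would undercount the correlations. The clean fix---and the place where adapting the $b=0$ leader-election analysis of \cite{newport:2017} genuinely helps---is to condition in stages: first on $v$ being a receiver, then on $u$ proposing to $v$ (independent of the first event), then on the concentration bound for the number of competing proposers to $v$ (which depends only on the other neighbors' independent coin flips and target choices), and finally on the acceptance probability given the number of incoming proposals.
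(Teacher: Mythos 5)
Your proposal matches the paper's proof in its essential structure: both reduce gossip to $k$ successive single-token spreading phases by observing that the transfer subroutine always delivers the smallest not-yet-universally-known token across any connection between a node that knows it and one that does not, and both charge $O((\Delta^2/\alpha)\log^2{N})$ rounds per phase, for $O((1/\alpha)k\Delta^2\log^2{N})$ total. The only difference is that the paper imports the single-token spreading bound as a black box from the BlindGossip leader-election analysis of~\cite{newport:2017}, whereas you sketch a re-derivation of it (per-round boundary-node learning probability $\Omega(1/\Delta^2)$, vertex expansion, doubling/halving with concentration); your sketch is consistent with that prior analysis, so the argument is sound.
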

\begin{proof}
In~\cite{newport:2017}, we
study a leader election algorithm called BlindGossip that essentially matches the behavior of BlindMatch.
As in BlindMatch, this algorithm has each node in each round flip a coin to decide whether or not to send or receive,
and senders choose a neighbor uniformly to send a connection proposal.
If two nodes connect, they transfer the smallest UIDs they have seen so far in the execution.
In~\cite{newport:2017}, 
we prove that this strategy will disseminate the smallest UID in the network
to all nodes in the network in $O((1/\alpha)\Delta^2\log^2{N})$ rounds, with high probability in $N$.
This existing analysis follows the progress of the smallest token in the network showing that after this many
rounds it will have spread to all nodes.

In BlindMatch, by contrast, a connected pair executes the transfer routine to attempt to transfer
the smallest token known by one but not both of the connected nodes.
It follows, therefore, that under the assumption that the transfer routine works correctly
every time it is called, 
BlindMatch will spread the smallest token in the network to all nodes in the time stated above.
Once this has been accomplished, however, we can turn our attention to the second smallest
token (once all nodes know the smallest token, the transfer routine will always transfer the second
smallest when a node that knows the second smallest is connected to a node that does not).
After the above number of rounds, the second smallest token will also have spread.
We repeat this process for all $k$ tokens to get the final  $O((1/\alpha)k\Delta^2\log^2{N})$ time 
claimed above.
\end{proof}


\section{Gossip with $b=1$ and $\tau \geq 1$}
\label{sec:unstable}	

Here we describe and analyze two gossip algorithm that now assume $b=1$.
The first, called SharedBit, assumes shared randomness, while the second, SimSharedBit, does not.
Both solutions offer a substantial time complexity  improvement over the BlindMatch algorithm
for many graph parameters.

\paragraph{Discussion: Shared Randomness.}
For the sake of clarity, we begin by making a strong assumption that we will subsequently eliminate:  
the nodes have access to a shared randomness source.
In more detail, we assume at the beginning of the execution a bit string $\hat r$ of length
$T={O}(N^3\log{N})$
is selected with uniform randomness from the space ${\cal R}$ of all bit strings of this length.
All nodes can access $\hat r$.
This shared random string simplifies the description and analysis
of an efficient gossip algorithm for the assumptions tackled in this section.
 In particular, the key challenge for gossip in this setting is
 generating useful $1$-bit advertising tags in each round. 
 We would like nodes with the same token set to generate
the {\em same} bit (so they will know not to attempt to connect to each other), while pairs of nearby nodes with different token
sets to have a reasonable probability of generating {\em different} bits (so they will know a connection would prove useful).
Shared randomness enables this property as each node can associate the same fresh random bit for each token in a given round,
and the bit advertised for a given set can simply consist of the sum of the bits associated with tokens in the set (mod 2).

\paragraph{Discussion: Eliminating the Shared Randomness Assumption.}
The assumption of shared randomness might be unrealistic in some settings.
With this in mind, we will then proceed to show how to eliminate this assumption by simulating
public randomness using a much smaller number of private random bits that disseminate quickly throughout the network. 
The core strategy of this simulation borrows and expands key ideas from
the proof of {\em Newman's Theorem} (e.g.,~\cite{newmans})---a well-known result
on public randomness simulation from the study of two-party communication complexity.
Our result is existential in the sense that it establishes that there exists 
an efficient simulation of our shared randomness that works well enough.
An equivalent formulation of this result in the language of pseudorandomness is
that there exists a pseudorandom number generator that can generate the needed
number of bits with a seed sufficiently small to fit in our message size bound.

\subsection{Shared Randomness}
\label{sec:unstable:alg}

Here we describe and analyze the SharedBit gossip algorithm.

%
%
\paragraph{The SharedBit Gossip Algorithm.}
Let ${\hat r}$ be a shared random string of length $cN^3(\lceil \log{N} \rceil + 1)$ bits.
We assume nodes partition $\hat r$ into $cN^2$ {\em groups} each consisting
of $N$ {\em bundles} (one for each id that might show up in the network) that each contain $\lceil \log{N} \rceil + 1$ bits.
We label these groups $1,2,...,cN^2$, and label the bundles within a given group $1,2,...,N$.

At the beginning of each round $r \leq cN^2$,
node $u$ must decide which bit to advertise to its neighbors (i.e., what value to select for $b_u(r)$).
If $T_u(r)$ is empty, then $u$ advertises $0$ (i.e., $b_u(r)=0$).
Otherwise, node $u$ calculates its advertisement
by first extracting a shared bit from $\hat r$ to assign to each $t\in T_u(r)$.
In particular, for each such $t\in T_u(r)$, 
$u$ sets its bit, indicated $t.bit$, 
to be the first bit in bundle $t$ of group $r$ from $\hat r$.
Node $u$ then calculates the bit $b_u(r)$ to advertise in this round as follows:

$$b_u(r) = \left(\sum_{t\in T_u(r)} t.bit\right)\mod{2}.$$

If $b_u(r) = 0$ then $u$ will receive connection proposals in this round.
If $b_u(r)=1$ and $u$ has at least one neighbor advertising $0$,
then $u$ will choose one these neighbors with uniform randomness and send it a connection proposal.
To make this random choice, $u$ uses the random bits in positions $2$ to $\lceil \log{N} + 1 \rceil$
in the the bundle corresponding to its id in group $r$ of $\hat r$.\footnote{The reason we have $u$
use shared random bits to select the receiver of its proposal is because it will simplify our subsequent
effort to eliminate shared randomness for this algorithm. There are many straightforward ways a node can use (up to) $\log{N}$
bits to uniformly select a value from a set containing no more than $N$ values.}

If two nodes $u$ and $v$ connect in round $r$,
they will deploy the token transfer subroutine,
with parameter $\epsilon = n^{-c_t}$, for some sufficiently large constant $c_t \geq 1$ we fix in the analysis.
This routine will
identify and transfer the smallest token in $(T_u(r) \cup T_v(r)) \setminus (T_u(r) \cap T_v(r))$, without sending more than $\polylog{N}$ bits
in the interaction (the bound enforced by our model).
Recall, this {\em transfer subroutine} is probabilistic and succeeds in identifying 
a token to transfer with probability at least $1-\epsilon$.
Once the algorithm proceeds past round $cN^2$ it can terminate or fall back
to a simpler behavior (such as our algorithm for $b=0$), or recycle back to the beginning of the shared string.


\paragraph{Analysis.}
Our goal is to  prove the following theorem regarding the SharedBit gossip algorithm:

\begin{theorem}
The SharedBit gossip algorithm solves the gossip problem in $O(kn)$ rounds
when executed with shared randomness and tag length $b=1$, in a network with stability $\tau \geq 1$.
\label{thm:unstable}
\end{theorem}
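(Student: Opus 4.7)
The plan is to exploit a clean structural property of the \textsc{SharedBit} algorithm: whenever a connection forms, it is automatically useful in that the two endpoints necessarily hold distinct token sets, so the invocation of the Transfer subroutine makes concrete progress. Combined with a per-round lower bound on the probability that at least one connection forms, this will let me drive the potential
\[
\Phi(r) \;=\; \sum_{v\in V}\bigl(k - |T_v(r)|\bigr)
\]
from its initial value $\Phi(1)\leq kn$ down to $0$ within $O(kn)$ rounds.

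First I would verify the ``every connection is useful'' claim. A connection in round $r$ occurs only between a $1$-node $u$ and a $0$-node $v$ with $u$'s proposal accepted by $v$. Because each node's advertised bit is determined by $b_w(r) = \bigoplus_{t\in T_w(r)} t.\mathit{bit}$, the endpoints of any such connection satisfy $b_u(r)\neq b_v(r)$, which forces $T_u(r)\neq T_v(r)$. By the Transfer guarantee from Section~\ref{sec:transfer}, with probability $1-\epsilon$ the call transfers a token from the side that knows it to the side that does not, decreasing $\Phi$ by exactly $1$.

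Second I would establish the per-round progress bound. Whenever $\Phi(r)>0$, connectivity of $G_r$ produces an edge $(u,v)\in E_r$ with $T_u(r)\neq T_v(r)$. The set $T_u(r)\triangle T_v(r)$ is non-empty, and the bits $\{t.\mathit{bit} : t\in T_u\triangle T_v\}$ used in round $r$ are jointly independent and uniform, since they are distinct fresh bits pulled from group $r$ of the shared string $\hat r$. Hence the XOR $b_u(r)\oplus b_v(r) = \bigoplus_{t\in T_u\triangle T_v} t.\mathit{bit}$ is a uniform bit, so $\Pr[b_u(r)\neq b_v(r)]\geq 1/2$. Conditioned on this event, the $1$-endpoint has a $0$-neighbor and therefore issues a proposal, and any $0$-node that receives a proposal accepts one; so at least one connection forms in round $r$.

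Third I would assemble the end-to-end bound. Choose the Transfer error parameter $\epsilon = N^{-c_t}$ for a large enough constant $c_t$; a union bound then ensures that no Transfer call fails over any polynomial number of rounds, with probability at least $1-1/(2n)$. On this event, the indicator $Y_r$ that $\Phi$ strictly decreases in round $r$ (when $\Phi(r)>0$) stochastically dominates an independent $\mathrm{Bernoulli}(1/2)$ variable, so the standard Chernoff bound (Theorem~\ref{thm:chernoff}) applied to a dominating i.i.d.\ sequence guarantees $\sum_{r=1}^{T} Y_r \geq kn$ by some $T = O(kn + \log n) = O(kn)$ with probability at least $1-1/(2n)$. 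Once this sum reaches $kn \geq \Phi(1)$, we must have $\Phi=0$ and gossip is complete; a final union bound gives overall success probability at least $1-1/n$.

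The main leverage is the structural observation in the first step: it converts what would otherwise be an $\Omega(1/\Delta^2)$ per-edge success probability (matched bits, then winning the proposal lottery at both ends) into an $\Omega(1)$ per-round success probability that is independent of $\Delta$ and $\alpha$. Everything else---the XOR uniformity, the Transfer error budget via union bound, and the Chernoff step on a stochastically-dominated sum---is routine bookkeeping.
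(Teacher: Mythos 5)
Your proposal is correct and follows essentially the same route as the paper: the same potential function $\phi(1)\leq kn$, the same XOR-uniformity lemma showing differing token sets yield differing bits with probability $1/2$, the same key observation that any $1$-to-$0$ connection is automatically productive, and the same stochastic-dominance-plus-Chernoff finish. The only cosmetic difference is bookkeeping of Transfer failures (you union-bound them away globally, while the paper folds the per-call success probability into its per-round ``good'' constant of $1/4$); both yield the stated $O(kn)$ bound.
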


To setup our analysis, recall that we define $T_u(r)$ for node $u$ and round $r\geq 1$,
to be the set of tokens $u$ knows at the beginning of round $r$,
and use $b_u(r)$ to indicate the bit advertised by $u$ in round $r$.
Also recall that $cN^2$ is the maximum number of rounds for which the shared string $\hat r$ contains bits
(our below analysis will specify the needed lower bound on constant $c\geq 1$ ),
and
that $t.bit$, for a given token $t$ and a fixed round,
 describes the shared random bit extracted from $\hat r$ and assigned to $t$ in this round.

We begin with
 the following lemma,
  which bounds the probabilistic behavior of the advertising tags generated using a given shared $\hat r$.
 
\begin{lemma}
\label{lem:unstable:adv}
Fix two nodes $u,v\in V$, $u\neq v$,
and a round $r$, $1 \leq r \leq cN^2$.
Fix a $r-1$ round execution of SharedBit, and
let $p=\Pr(b_u(r) \neq b_v(r))$ be the probability (defined
over the random selection of the relevant bits in $\hat r$) that $u$ and $v$ generate
 different advertising bits in round $r$.
If $T_u(r) = T_v(r)$ then $p=0$, else if $T_u(r) \neq T_v(r)$, then $p = 1/2$.
\end{lemma}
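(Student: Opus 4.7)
The plan is to express the difference of the two advertised bits as an XOR over the symmetric difference of the two token sets, and then separately handle the two cases. The key algebraic observation is that, whether $T_u(r)$ is empty or not, the advertised bit satisfies $b_u(r) \equiv \sum_{t \in T_u(r)} t.bit \pmod 2$ (the empty-set case in the algorithm description agrees with the empty sum). Working in $\mathbb{F}_2$, the shared bits associated with tokens in $T_u(r) \cap T_v(r)$ cancel in $b_u(r) \oplus b_v(r)$, so
\[ b_u(r) \oplus b_v(r) \;\equiv\; \sum_{t \in T_u(r) \triangle T_v(r)} t.bit \pmod 2, \]
where $\triangle$ denotes symmetric difference. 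This identity does all the work; the two cases then reduce to reasoning about this sum.

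For the first case, when $T_u(r)=T_v(r)$, the symmetric difference is empty, the sum is $0$, and $b_u(r)=b_v(r)$ deterministically, giving $p=0$. For the second case, when $T_u(r) \neq T_v(r)$, I would fix any token $t^\star \in T_u(r) \triangle T_v(r)$ and argue that its bit $t^\star.bit$ is uniform over $\{0,1\}$ and independent of every other bit appearing in the sum. Conditioning on all other $t.bit$ values for $t \in (T_u(r)\triangle T_v(r))\setminus\{t^\star\}$, the XOR in the display above becomes $t^\star.bit \oplus c$ for a fixed constant $c \in \{0,1\}$, which is uniform on $\{0,1\}$. Taking expectation over the conditioning then yields $\Pr(b_u(r) \neq b_v(r)) = 1/2$.

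The one subtlety I need to address carefully is why the round-$r$ shared bits are independent of the $(r-1)$-round execution history that determines $T_u(r)$ and $T_v(r)$. Here I would invoke the partition structure of $\hat r$: the algorithm extracts the bits used in round $r$ exclusively from group $r$ of $\hat r$, while rounds $1,\dots,r-1$ use groups $1,\dots,r-1$. Since $\hat r$ is drawn uniformly and the groups are disjoint segments of the string, the bits in group $r$ are independent of all bits used earlier. Consequently, conditioning on the fixed $(r-1)$-round execution (and hence on $T_u(r), T_v(r)$) does not bias the distribution of any $t.bit$ in round $r$, and within group $r$ the bundles for distinct tokens are disjoint, so the individual $t.bit$ values are themselves mutually independent and uniform. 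With this independence established, the argument above goes through and gives the claimed $p=1/2$. I expect this independence-from-history step to be the main (if minor) obstacle; everything else is essentially a one-line $\mathbb{F}_2$ computation.
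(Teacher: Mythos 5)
Your proof is correct and follows essentially the same route as the paper's: both cancel the bits for tokens in $T_u(r) \cap T_v(r)$ and reduce the question to the parity of fresh, uniform, independent bits over the (nonempty) symmetric difference, which is unbiased. Your explicit conditioning on all but one bit, and your remark that the group-$r$ bits are independent of the first $r-1$ rounds because the groups of $\hat r$ are disjoint segments, just make precise steps the paper asserts more tersely.
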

\begin{proof}
If $T_u(r) = T_v(r)$ then by definition of the algorithm $b_u(r) = b_v(r)$.
We turn our attention, therefore, to the remaining case where $T_u(r) \neq T_v(r)$.
In the following, for a given non-empty token set $T$,
 define:
 
  $$adv_r(T) =\left(\sum_{t\in T} t.bit\right)\mod{2}.$$

And for the case of an empty set, we define by default $adv_r(\emptyset) = 0$.
Fix $T'_u(r) = T_u(r) \setminus T_v(r)$ and $T'_v(r) = T_v(r) \setminus T_u(r)$.
Let $T'_{u,v}(r) = T_u(r) \cap T_v(r)$.
It follows:

\begin{eqnarray*}
 b_u(r) &=& adv_r(T'_u(r)) + adv_r(T'_{u,v}(r)) \mod{2}\\
 b_v(r) &=& adv_r(T'_v(r)) + adv_r(T'_{u,v}(r)) \mod{2}\\
 \end{eqnarray*}
 
Given the above observation,
we note that $b_u(r) = b_v(r)$ if and only if $adv_r(T'_u(r)) = adv_r(T'_v(r))$.
By definition, $T'_u(r)$ and $T'_v(r)$ have no values in common and
at least one of these sets is non-empty. The bits used in these sums are all therefore pairwise
independent and generated uniformly.
The probability that both these sums are equal is exactly $1/2$,
and therefore so is the complementary probability of inequality.
\end{proof}

We next define the following useful potential function that captures
the amount of information spreading still required in the network to solve gossip after a given round:

\[ \forall r\geq 1: \phi(r) = \sum_{u\in V} \left(   k-|T_u(r)|   \right). \]

 Notice that this function is non-increasing (as nodes never unlearn a token),
and once the function evaluates to $0$, there is no more information to spread and therefore gossip is solved.
We now leverage the definition of potential function $\phi$ from above to define what it means for a round to be {\em good}
with respect to making progress with the gossip problem:

\begin{definition}
We say a given round $r \geq1$ is {\em good} if and only if one of the following two properties is true:
(1) $\phi(r) = 0$; or (2) $\phi(r+1) < \phi(r)$.
\end{definition}

The following result leverages Lemma~\ref{lem:unstable:adv} to formalize the key property 
that each round of our algorithm has a reasonable probability of being good
by our above definition.

\begin{lemma}
\label{lem:unstable:good}
For every
round $r$, $1\leq r \leq cN^2$,
the probability that round $r$ is good is at least $1/4$.
\end{lemma}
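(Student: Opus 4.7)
The plan is to handle the two ways a round can be good separately. If $\phi(r) = 0$ then $r$ is trivially good, so we may condition on $\phi(r) > 0$ and aim to show $\Pr[\phi(r+1) < \phi(r)] \geq 1/4$.

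The first step is to exhibit a ``witness edge.'' I would observe that if every edge of the connected graph $G_r$ satisfied $T_u(r) = T_v(r)$, then by transitivity along a spanning tree all nodes would share a common token set $T^*$; since each of the $k$ tokens is initialized at some node and never unlearned, this would force $T^*$ to contain all $k$ tokens, yielding $\phi(r) = 0$ and contradicting our assumption. Hence there must exist an edge $(u,v) \in E_r$ with $T_u(r) \neq T_v(r)$.

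The second step is to argue that this witness edge creates at least a $1/4$ chance of progress. By Lemma~\ref{lem:unstable:adv}, $\Pr[b_u(r) \neq b_v(r)] = 1/2$. Conditioning on this event and assuming WLOG $b_u(r) = 1$ and $b_v(r) = 0$, the algorithm requires $u$ to propose to some $0$-neighbor (and at least one such neighbor, namely $v$, exists). Let $v^*$ denote this chosen neighbor. Since $v^*$ has bit $0$, it acts as a receiver, and since it has at least one incoming proposal (from $u$), it will accept a proposal from some $1$-neighbor $u^*$, yielding a guaranteed connection. Because $b_{u^*}(r) = 1 \neq 0 = b_{v^*}(r)$, the contrapositive of Lemma~\ref{lem:unstable:adv} forces $T_{u^*}(r) \neq T_{v^*}(r)$, so the transfer subroutine invoked with parameter $\epsilon = N^{-c_t}$ succeeds in transferring a token in the symmetric difference with probability at least $1-\epsilon$, strictly decreasing $\phi$. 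Combining these independent events yields $\Pr[\text{round } r \text{ is good}] \geq (1/2)(1-\epsilon) \geq 1/4$ once $c_t$ is chosen large enough that $\epsilon \leq 1/2$.

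The part I expect to require the most care is the deterministic step above: one is tempted to try to bound the probability that $u$ and $v$ \emph{themselves} connect, which depends unfavorably on the degrees of both endpoints. The cleaner observation is that any connection in this algorithm is necessarily between a sender (bit $1$) and a receiver (bit $0$), and hence between two nodes with different token sets; so a single witness edge attaining differing bits is sufficient to guarantee that \emph{some} useful connection---possibly not $(u,v)$ itself---is formed, which is all that is needed for the constant lower bound on progress.
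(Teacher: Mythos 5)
Your proof is correct and follows essentially the same route as the paper's: exhibit a witness edge with differing token sets, apply Lemma~\ref{lem:unstable:adv} to get differing bits with probability $1/2$, observe that the chosen receiver $v^*$ then necessarily connects to \emph{some} $1$-advertising node with a different token set, and multiply by the transfer subroutine's success probability. The only cosmetic difference is how the witness edge is located (your spanning-tree/transitivity argument versus the paper's cut between nodes that do and do not know a fixed undisseminated token), and your closing remark correctly identifies the same key observation the paper relies on.
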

\begin{proof}
There are two cases depending on the value of $\phi(r)$.
If $\phi(r) = 0$, then by definition this round is good.
Else if $\phi(r) > 0$, we must consider the probability that at least one node learns a new token in this round.
To do so, fix some token $t$ that is not known by all $n$ nodes at the beginning of $r$ (such a token must exist by the assumption that $\phi(r) >0$).
Let $S$ be the nodes that know $t$. 
Because we assume the network topology is connected in each round,
there must be an edge during round $r$ between a node $u\in S$ and a node
$v\in V\setminus S$.

Because $t\in T_u(r)$ and $t\notin T_v(r)$,
we know $T_u(r) \neq T_v(r)$.
By Lemma~\ref{lem:unstable:adv},
the probability that $b_u(r) \neq b_v(r)$ is $1/2$.
Assume this event occurs.
Also assume $b_u(r)=1$ and $b_v(r)=0$ (the opposite case is symmetric).
By the definition of the algorithm, $u$ will attempt to send a proposal in this
round and it has at least one neighbor to choose from to receive this proposal.
Let $v'$ be the neighbor $u$ chooses.
Whether or not $v'=v$,
we know that $v'$ advertised $0$ in this round.
By Lemma~\ref{lem:unstable:adv}, it follows that $v'$ has a different token
set than $u$ in this round.
Indeed, this must be true of $v'$ and {\em any} node that sends it a proposal in this round.

Now that we have established that $v'$ receives at least one proposal,
we know $v'$ will form a connection this round.
As we just noted, this connection will be with a node $u'$ such that
$T_{u'}(r)  \neq T_{v'}(r)$.
Therefore, with high probability in $n$, the transfer subroutine
will successfully identify a missing token to transfer between $u'$ and $v'$---reducing $\phi$.

We have just shown that for $r$ to be good in the case where $\phi(r) > 0$,
it is sufficient that the following two events occur: (1) $b_u(r) \neq b_v(r)$;
and (2) the transfer subroutine between $u'$ and $v'$ succeeds.
The first occurs with probability $1/2$, and the second with high probability,
which is at least $1/2$ for $n>1$ (which must be true if $\phi(r) >0$).
Both events occur, therefore, with probability at least $1/4$---as required.
\end{proof}

We can now leverage Lemma~\ref{lem:unstable:good} to prove Theorem~\ref{thm:unstable}.
The key argument in the following is that $\phi(1) \leq kn$,
therefore $kn$ good rounds are sufficient to solve the gossip problem.
With high probability, $T=\Theta(kn)$ total rounds is sufficient to achieve this goal---{\em assuming}
that $\hat r$ is long enough to supply random bits for $T$ rounds.
To assure this holds we fix the constant $c$ in the definition of $\hat r$
to be at least the constant identified in the analysis below for the definition of $T$ (which turns out to be $32$).

Formalizing this intuition, however, requires some care in dealing with potential dependencies
between different rounds with respect to their goodness.

\begin{proof}[Proof (of Theorem~\ref{thm:unstable})]
The potential function $\phi$ measures the number of missing values over the $n$ total nodes.
Each node can miss at most $k$ values. Therefore: $\phi(1) \leq kn$.
Because $\phi$ is non-increasing, it is sufficient to ask how many rounds are required
to ensure $kn$ good rounds with high probability.
Here we show that $32kn$ rounds are more than sufficient. 
If we fix the constant $c$ used in the definition of $\hat r$ to $32$,
therefore, it follows that $\hat r$ is sufficiently long to supply random bits
for all $32kn$ rounds needed for high probability termination.

Continuing with the proof, let $X_r$, for each round $r\geq 1$, be the random indicator variable that evaluates
to $1$ if and only if round $r$ is good.
Let $Y_t$, for some round count $t \geq 1$, be defined as:

\[ Y_t = \sum_{r=1}^t X_r. \]

The $Y_t$ variable, in other words, measures the number of good rounds in the first $t$ rounds.
By Lemma~\ref{lem:unstable:good},
we know $E(Y_t) \geq t/4$.
Therefore, in expectation, $4kn$ rounds are sufficient to achieve $kn$ good rounds.
To achieve high probability, however, we cannot simply concentrate on this expectation
as there may be dependencies between different $X$ variables (e.g., the outcome in one round might
increase the probability that the next is good).

Because  Lemma~\ref{lem:unstable:good} establishes a lower bound on this probability that holds
regardless of the execution history, we can deploy a  stochastic dominance argument to achieve
our needed result.
In more detail, let $\hat X_r$, for each $r \geq 0$, be the trivial random indicator variable that
evaluates to $1$ with independent probability $1/4$.
Let $\hat Y_t = \sum_{r=1}^t \hat X_r$.
Clearly, $E(\hat Y_t) = t/4$.
Because the $\hat X$ variables are pairwise independent,
we can concentrate on this expectation.
For example, fix $t= 32kn$.
Applying the Chernoff bound from Section~\ref{sec:model} (Theroem~\ref{thm:chernoff})
with $\delta = 1/2$ and $\mu = E(\hat Y_t) = t/4 = 8kn$,
it follows:

\[ \Pr(\hat Y_t \leq 4kn) \leq e^{- \frac{8kn}{8}} \leq e^{-n} < 1/n. \]

That is, for this particular value of $t\in \Theta(kn)$,
the probability that $\hat Y_t$ is less than $kn$ is small in $n$.
We now note that for each $r\geq 1$,
$X_r$ stochastically dominates $\hat X_r$.
It follows that our above bound on $\hat Y_t$ holds for $Y_t$ as well---which is sufficient to conclude the proof.
%
%
%
%
\end{proof}

\subsection{Eliminating the Shared Randomness Assumption}
\label{sec:unstable:derandomize}

Here we discuss how to remove the assumption of shared randomness.
In more detail, we describe {SimSharedBit}, a variation of {SharedBit}
that does not use shared randomness.
We emphasize that this new algorithm is existential instead of constructive.
Formally, it depends on a small set of bit strings, called ${\cal R'}$, that we prove exists but do not explicitly construct.
Accordingly,
our main  theorem statement below references the {\em existence} of a string set ${\cal R'}$ for which SimSharedBit is 
an efficient solution.

The {SimSharedBit} algorithm adds an additive cost of $\tilde{O}(\Delta^{1/\tau}/\alpha)$ rounds to the
existing time complexity of {SharedBit}. For most combinations of $\Delta$, $\tau$, and $\alpha$,
and $k$, 
this additive cost is swamped by the $O(kn)$ time complexity of {SharedBit}.
For the worst-case values of these parameters,
this extra cost can make SimSharedBit up to a factor of $n$ slower than SharedBit 
(e.g., when $k=1$, $\alpha = 1/n$, $\Delta = n-1$, and $\tau=1$).

\paragraph{Strategy Summary.}
The high-level strategy for {SimSharedBit} is to first elect a leader that disseminates
a {\em seed} string that can be used to generate sufficient randomness to run SharedBit.
Notice, the number of shared bits required by SharedBit is much too large to be efficiently
disseminated (our model restricts connections to deliver $\polylog{N}$ bits per round,
while SharedBit requires $\Omega(N^3)$ shared bits).
The {seed} selected and disseminated by the leader, by contrast, is small enough to be
fully transmitted over a connection in a single round.
To prove that there exists a randomness generator
that can extract sufficient randomness for our purpose from seeds of this small size,
we adapt the technical details of Newman's Theorem ( e.g.,~\cite{newmans})
from the simpler world of two-party communication to the more complicated world of $n$ parties
on a distributed and changing network topology.
In more detail, 
we prove the existence of a multiset ${\cal R'}$, containing only poly($N$) bit
strings of the length required for SharedBit, that is {\em sufficiently} random to guarantee
that if a leader chooses $\hat r$ uniformly
from ${\cal R'}$, the {SharedBit} algorithm using shared randomness $\hat r$ is still likely to solve gossip efficiently.
Because ${\cal R'}$ contains only poly($N$) strings,
the leader can identify the string it selected using only polylog($N$) bits (this selection is the {\em seed} it disseminates)---enabling efficient dissemination of this information.
The existential nature of {SimSharedBit} is entirely encapsulated in the existence of this set ${\cal R'}$.

\bigskip

\noindent Below we begin by describing the guarantees of the leader election primitive we will leverage in the {SimSharedBit} algorithm.
We then describe the operation of {SimSharedBit} before proceeding with its analysis.

\paragraph{Leader Election.}
To elect a leader we can deploy the {BitConvergence} leader algorithm
described in our recent study of leader election in the mobile telephone model~\cite{newport:2017}.
When run in a network with expansion $\alpha$, stability factor $\tau \geq 1$, and maximum degree $\Delta$,
this algorithm guarantees with high probability in $N$
to solve leader election in $O((1/\alpha)\Delta^{1/\tau}\polylog{N})$ rounds.
We emphasize that the algorithm does not require advance knowledge of $\alpha$, $\Delta$, or $\tau$---its time complexity
adapts to the network in which it is executed.

To provide slightly more detail about this algorithm, 
in each round, each node identifies a single identifier to be its candidate leader for that round.
To ``solve leader election" means that eventually all candidate leaders in the network have permanently stabilized to the same identifier.
As noted in~\cite{newport:2017},
a trivial extension to the algorithm allows each node to also generate a {\em payload} consisting of polylog($N$) bits
that follows its identifier. Each node now maintains a variable for its current candidate leader and a variable for that candidate's payload.
We will leverage this payload in {SimSharedBit} to carry a pointer to a $\hat r$ value from ${\cal R'}$.
Finally, we note that {BitConvergence} also maintains the useful property that 
the eventual leader will be the node with the {\em smallest} identifier of all participating nodes.
This simplifies our analysis.

\paragraph{The {SimSharedBit} Gossip Algorithm.}
We are now ready to describe the {SimSharedBit} gossip algorithm.
This new  gossip algorithm interleaves the BitConvergence leader election algorithm described above with the logic
from {SharedBit} gossip.
In more detail,
we will prove below the existence of a multiset ${\cal R'}$, containing poly($N$) bit strings,
that is ``sufficiently random" (a concept we will formalize soon)
that it is sufficient for the nodes in the network to agree on a shared string $\hat r$ sampled from ${\cal R'}$,
instead of from the space of all possible strings of the needed length.

In more detail, at the beginning of the execution,
each node selects its own string from ${\cal R'}$ with uniform randomness.
Assume we have fixed in advance a deterministic unique labeling of the poly($N$) strings
in ${\cal R'}$ with the values $1,2,...,|{\cal R'}|$.
Each node can therefore refer to the string it selected with its label.
Following the standard conventions of pseudoranomness,
we call this label the {\em seed} for the string.
Notice, each seed can be described with only polylog($N$) bits.
We take advantage of this small
size by having each node run the leader election algorithm summarized above with this string stored in its payload.
Therefore, once we elect a leader, all nodes also know its seed.

To interleave gossip and leader
election we will treat even and odd rounds differently.
In even rounds, nodes execute the {BitConvergence} leader election algorithm described above,
using their seed as their payload.
In odd rounds, nodes execute the {SharedBit} gossip algorithm.
In each odd round, 
each node uses as the shared string $\hat r$ whatever string from ${\cal R'}$
is pointed to by the seed in their current candidate leader's payload.
In defining ${\cal R'}$ below,
we will fix the length of strings in this set to be slightly longer than the strings used
by SharedBit, so as to capture the extra rounds required for the network to converge
on a single string (the rounds before this point are potentially wasted with respect
to making gossip progress).


\paragraph{Proving the Existence of a Sufficiently Random ${\cal R'}$.}
To prove {SimSharedBit} solves gossip efficiently with high probability,
we must prove that a shared string sampled uniformly from ${\cal R'}$
is sufficiently random that the {SharedBit} logic executed in odd rounds will
still solve gossip with high probability.

To do so, we begin by establishing some preliminary assumptions and definitions.
First, we note that the string $\hat r$ used by {SharedBit}
consists of $t_{SB} = cN^2$ groups consisting of $N$ bundles that in turn
each contain $t_{b} =(\lceil \log{N} \rceil + 1)$ bits.
The algorithm consumes bits from one group per round, and the analysis
of {SharedBit} requires at most $t_{SB}$ rounds worth of shared randomness to terminate with high probability.

For {SimSharedBit}, we will need to extend this length to account for
the early rounds in the execution when leader election has not yet
converged, and therefore we cannot yet guarantee useful progress for the gossip
logic executing in the odd rounds.
For the worst case values of $\alpha$, $\tau$, and $n$,
{BitConvergence} requires no more than $t_{BC} = O(N^2\polylog{N})$ rounds to converge.
Therefore we  extend the length of shared bit strings to consist of $t_{SSB} = t_{SB} + t_{BC} = {O}(N^2\polylog{N})$ groups.
This ensures that after leader election converges we still have at least the full $t_{SB}$ rounds of randomness needed
for the analysis of {SharedBit} to apply.
At the risk of slightly overloading previous notation, 
we will use ${\cal R} = \{0,1\}^{t_{SSB} \cdot N\cdot t_b}$ to refer to the set of all bit strings of length $t_{SSB} \cdot N\cdot t_b$---the
maximum size shared string needed to give nodes time to converge to a leader and then subsequently solve gossip with the
leader's shared string.
The shared strings used in {SimSharedBit} come from ${\cal R}$.

Next, for a given network size $n>1$,
let $\mathbb{G}(n)$ be the set containing every $t_{SB}$-round dynamic graph defined over $n$ nodes.
That is, if we run our algorithm for $t_{SB}$ rounds in a network of size $n$,
it will be executed in some dynamic graph ${\cal G} \in \mathbb{G}(n)$.
Let ${\cal A}(n)$ be the set containing every assignment of  token sets to the $n$ nodes in a network of size $n$.
We define ``assignment" to capture two key pieces of information: (1) which nodes in the network started with a token;
and (2) which of these tokens does each node know at the moment.
Formally, a given $A\in {\cal A}(n)$ can be described as a function
from $[n]$ to $2^n$.\footnote{This function maps each of the $n$ nodes to some subset of $[1,n]$ indicating
the tokens that node knows. The set of nodes that started with a token according to this assignment
is the set of nodes that have a token show up somewhere in the assignment function's range.}

For each network size $n \in [2,N]$,
round $\ell \in [1,t_{BC}]$,
dynamic graph ${\cal G}\in \mathbb{G}(n)$,
 token assignment $A\in {\cal A}(n)$,
and shared bit string $\hat r \in {\cal R}$:
let $Z(n,\ell,{\cal G}, A, \hat r)$ be the random indicator variable
that evaluates to $0$ if {SharedBit}
solves gossip when run in a network of size $n$,
 starting with token assignment $A$,
and 
executing for $t_{SB}$ rounds in dynamic graph ${\cal G}$,
using the shared random bits from groups $\ell$ to $\ell + t_{SB}$ in $\hat r$.
It otherwise evaluates to $1$.
(In the evaluation of $Z$,
assume that the probabilistic token transfer subroutine used by {SharedBit} 
always works correctly.)
Notice, we are using $0$ to indicate a positive outcome (gossip works),
 and a $1$ to indicate a negative outcome (gossip failed).

In other words, $Z(n,\ell,{\cal G}, A, \hat r)$ answers the following question (with $0$ indicating {\em yes}) :

\begin{quote}
If we assume we are in a network of size $n$,
and that leader election converges to a single leader at round $\ell$,
and this leader points toward shared string $\hat r$,
and that at this point the tokens in the network are spread according to $A$:
will the {SharedBit} logic solve gossip sometime in  the next
$t_{SB}$ rounds, using the corresponding bits from $\hat r$,
assuming the graph evolves as ${\cal G}$ during this round interval? 
\end{quote}

Our analysis of {SharedBit} tell us that if we select $\hat r$
uniformly from ${\cal R}$,
with high probability: $Z(n,\ell,{\cal G}, A, \hat r) = 0$.
Our goal is to prove that there exists a multiset ${\cal R'}$,
made up of values from ${\cal R}$,
such that ${\cal R'}$ only contains poly($N$) strings,
and yet if we select $\hat r$ uniformly from ${\cal R'}$,
the probability $Z(n,\ell,{\cal G}, A, \hat r) = 0$ remains high.
In particular, if $\epsilon$ is an upper bound on the small failure probability of {SharedBit} gossip
when run in a setting with shared randomness,
then we show the probability that $Z$ evaluates to $1$ when drawing $\hat r$ from our multiset ${\cal R'}$
is at most only a constant factor larger.
We formalize this goal with the following lemma.
We emphasize that this setup (analyzing the probability that $Z$ evaluates to $1$ with our reduced ${\cal R'}$)
comes from the proof of Newman's Theorem. We are generalizing this approach, however, to account
for multiple nodes operating on a dynamic graph starting from an arbitrary round within a larger interval,
with an arbitrary distribution of gossip tokens:

\begin{lemma}
\label{lem:z}
There exists a multiset ${\cal R'}$ of size $N^{\Theta(1)}$ containing values from ${\cal R}$,
such that for every $n \in [2,N]$,  $\ell \in [1,t_{BC}]$, ${\cal G}\in \mathbb{G}(n)$ and $A\in {\cal A}(n)$, it follows:
\[\Pr_{\hat r \gets {\cal R'}}\left(  Z(n,\ell,{\cal G}, A, \hat r) = 1  \right) <  2\epsilon,\]

\noindent where $\epsilon=N^{-c}$ (for some constant $c \geq 1$) is an upper bound on the failure probability of {SharedBit} gossip
when executed with shared randomness.
\end{lemma}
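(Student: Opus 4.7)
\textbf{Proof Plan for Lemma~\ref{lem:z}.}

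The plan is a probabilistic--method / Newman--style argument. I will construct $\mathcal{R}'$ by sampling, then show that with positive probability the sample satisfies the claimed property for every choice of $(n,\ell,\mathcal{G},A)$. Concretely, let $t = N^{c'}$ for a constant $c'$ to be fixed below, and draw $\hat r_1,\ldots,\hat r_t$ independently and uniformly from $\mathcal{R}$. Let $\mathcal{R}'$ denote the resulting multiset.

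First I would fix an arbitrary tuple $(n,\ell,\mathcal{G},A)$ and, for each $i\in[t]$, define the indicator $X_i = Z(n,\ell,\mathcal{G},A,\hat r_i)$. Since $\hat r_i$ is uniform over $\mathcal{R}$, the analysis of SharedBit (Theorem~\ref{thm:unstable}), applied with appropriate constants so that the failure probability is at most $\epsilon = N^{-c}$, gives $\mu := E(X_i)\le \epsilon$. The $X_i$ are i.i.d., so by the Chernoff--Hoeffding bound (Theorem~\ref{thm:chernoff2}) with $\delta=\epsilon$,
\[
\Pr\!\left(\tfrac{1}{t}\sum_{i=1}^t X_i \ge 2\epsilon\right) \le \Pr\!\left(\tfrac{1}{t}\sum_{i=1}^t X_i \ge \mu + \epsilon\right) \le e^{-2\epsilon^2 t}.
\]
So, conditioned on $\mathcal{R}'$, the probability (over $\hat r\gets\mathcal{R}'$) that $Z=1$ exceeds $2\epsilon$ only in the event above, whose probability is at most $e^{-2\epsilon^2 t}=e^{-2N^{c'-2c}}$.

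Next I would union bound this bad event over all tuples $(n,\ell,\mathcal{G},A)$. The counts are: at most $N$ choices for $n$; at most $t_{BC}=N^{O(1)}$ choices for $\ell$; at most $2^{\binom{n}{2}\cdot t_{SB}}\le 2^{O(N^4)}$ dynamic graphs in $\mathbb{G}(n)$ (each of the $t_{SB}=\Theta(N^2)$ rounds specifies one graph on $\le N$ vertices); and at most $2^{n\cdot n}\le 2^{N^2}$ token assignments in $\mathcal{A}(n)$. Thus the total number of tuples is $2^{O(N^4)}$. Choosing $c'$ a sufficiently large constant multiple of $c$ (e.g.\ so that $2\epsilon^2 t\ge N^5$) makes $2^{O(N^4)}\cdot e^{-2\epsilon^2 t}<1$. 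By the probabilistic method, there exists a concrete choice of $\mathcal{R}'$ of size $t=N^{\Theta(1)}$ for which, simultaneously for every tuple $(n,\ell,\mathcal{G},A)$, the fraction of strings $\hat r\in\mathcal{R}'$ with $Z(n,\ell,\mathcal{G},A,\hat r)=1$ is strictly less than $2\epsilon$. This is precisely the claim.

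The main technical obstacle is verifying that the combinatorial ``universe'' of tuples to union--bound over is really only $2^{\mathrm{poly}(N)}$ in size, so that a polynomial--size sample suffices. The delicate counts are $|\mathbb{G}(n)|$ (dynamic graphs over the full $t_{SB}$--round window) and $|\mathcal{A}(n)|$ (all possible distributions of tokens to nodes); I would confirm $2^{O(N^4)}$ and $2^{O(N^2)}$ bounds by encoding each object in $\mathrm{poly}(N)$ bits. Everything else is routine: the Chernoff--Hoeffding step is immediate from Theorem~\ref{thm:chernoff2}, and turning the probabilistic existence into a deterministic multiset of the claimed polynomial size requires no further work beyond choosing the constant $c'$ large enough relative to $c$.
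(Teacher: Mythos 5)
Your proposal is correct and follows essentially the same route as the paper: sample $t=\mathrm{poly}(N)$ strings uniformly from ${\cal R}$, apply the Chernoff--Hoeffding bound with $\delta=\epsilon$ to each fixed tuple $(n,\ell,{\cal G},A)$, union bound over the $2^{O(N^4)}$ tuples, and invoke the probabilistic method; your parameterization $t=N^{c'}$ matches the paper's $t=N^{\beta}/\epsilon^2$ up to renaming. The tuple counts you flag as the delicate point are exactly the ones the paper verifies, with the same bounds.
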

\begin{proof}
Fix some network size $n\in [2,N]$, leader election termination round $\ell \in [1,t_{BC}]$, ${\cal G} \in \mathbb{G}(n)$ and $A\in {\cal A}(n)$.
Consider an experiment in which we uniformly select $t$ values $r_1,r_2,...,r_t$ from ${\cal R}$ (with replacement),
where $t > 0$ is a value defined with respect to $N$ that we fix below.
Let $X_i$ be the random indicator 
 variable defined as $X_i = Z(n,\ell,{\cal G}, A, r_i)$.
 That is, $X_i = 0$ if {SharedBit} solves gossip using the relevant bits in $r_i$ in ${\cal G}$
 starting with assignment $A$.
 By Theorem~\ref{thm:unstable} and our definition of $t_{SB}$ (which captures the worst case time complexity from this theorem),
 we know $X_i = 0$ with probability at least $1-\epsilon$.
 Therefore:
 
 \[ E(X_i) = 0\cdot\Pr(X_i=0) + 1\cdot\Pr(X_i=1) \leq \epsilon. \]
 
Note that these random variables $X_1, X_2, ..., X_t$ are i.i.d.~as they are each
 determined by a random string selected with uniform and independent randomness with replacement from a common set.
 It follows that we can apply a Chernoff-Hoeffding bound (Theorem~\ref{thm:chernoff2} from Section~\ref{sec:model})
to $X_1$, $X_2$, ..., $X_t$ to prove that their average value is unlikely to deviate too much from the expected average.
In more detail, let $\mu = E(X_i)$.
This bound tells us that for any $\delta > 0$: 

\[  \Pr\left(  \frac{1}{t}\sum_{i=1}^t X_i \geq \mu + \delta  \right) \leq e^{- 2 \delta^2 t}. \]

Fix $\delta = \epsilon$ and $t = N^{\beta}/\epsilon^2$, for a constant $\beta \geq 1$ we will define below.
We say for our fixed choice of $n$, $\ell$, ${\cal G}$ and $A$,
that a given selection of $t$ strings from ${\cal R}$
is {\em bad} if $\frac{1}{t}\sum_{i=1}^t X_i \geq p = 2\epsilon$.
For our fixed values of $\delta$ and $\epsilon$,
and our above bound,
we know our random choice of strings is bad with probability no more than
$e^{-2 N^{\beta}} < 2^{-N^{\beta}}$.
Put another way, 
for a fixed network size, leader election termination round, dynamic graph and token assignment,
we are very unlikely to have made a bad selection of strings.

Now we consider other values for our parameters.
We know there are no more than $N$ choices for $n$ and $c'N^2\polylog{N}$ choices for $\ell$, for some constant $c' \geq 1$.
For a given $n$,
we can bound $\mathbb{G}(n)$ as

\[ |\mathbb{G}(n)| < (2^{n^2})^{t_{SB}} = 2^{n^2\cdot t_{SB}} \leq 2^{N^{\gamma}},\]

\noindent for some small constant $\gamma \approx 4$. And to bound ${\cal A}(n)$, we note:

\[  |{\cal A}(n)| \leq (2^n)^n \leq 2^{n^2} \leq 2^{N^2}.  \]

\noindent The total number of combinations of $n$, $\ell$, ${\cal G}$ and $A$ values,
therefore,
is upper bounded by: 

\begin{eqnarray*}
N\cdot (c'N^2\polylog{N})  \cdot 2^{N^{\gamma}} \cdot 2^{N^2} 
&\leq&
 c' \cdot 2^{ \log{N^3}+ \log{(\polylog{N})}   + N^{\gamma} + {N^2}   }   \\
 & \leq & 2^{N^{\gamma \cdot c''}}
\end{eqnarray*}

\noindent for some constant $c'' \geq 1$.
Given this upper bound value, we fix the constant $\beta$ used in the definition of $t$
to be some constant strictly greater than $c'' \cdot \gamma$ (say, $\lceil c''\cdot \gamma + 1\rceil$).

We now apply the probabilistic method to prove the existence of a selection of $t$ values from
${\cal R}$ that is {\em not} bad for any of the possible combinations of network sizes,
leader election termination points, graphs and token assignments.
To do, note that the probability of a given selection being bad for a fixed set of parameters
was shown above to be less than $2^{-N^{\beta}}$.
By applying a union bound over the less than $2^{N^{c''\cdot \gamma}}$ combinations of parameters,
the probability that there exists {\em at least} one such combination for which our selection
is bad is less than: $(2^{N^{c''\cdot\gamma}})\cdot(2^{-N^{\beta}}) < 1$.

It follows that there exists at least one collection of $t$ values from ${\cal R}$ that is not bad
for every combination of the relevant parameters.
Let us call this multiset of $t$ values ${\cal R'}$. 

The definition of being not bad for a given graph and assignment is that: 
$\frac{1}{t}\sum_{i=1}^t X_i \leq 2\epsilon$.
It follows that $\sum_{i=1}^t X_i \leq 2t\epsilon$.
From this it follows that at most a $2\epsilon$ fraction of the $X_i$ values evaluate to $1$.
Therefore, if we uniformly sample a string $r_i$ from ${\cal R'}$,
the probability that $X_i = 0$ is at least $1-2\epsilon$, as required by the lemma statement.

To conclude the proof, we must show that $|{\cal R'}| = t$ is in poly($N$).
We earlier fixed: $t=N^{\beta}/\epsilon^2$, where $\beta = \Theta(1)$ and $\epsilon = N^{-c}$ for a constant $c\geq 1$.
It follows that $t=N^{\beta+2c} = N^{\Theta(1)}$.
 \end{proof}

\noindent We now leverage Lemma~\ref{lem:z} to prove our main theorem concerning SimSharedBit:

\begin{theorem}
\label{thm:ssb}
There exists a bit string multiset ${\cal R'}$ of size $N^{\Theta(1)}$,
such that
the SimSharedBit gossip algorithm using this ${\cal R'}$ as its source of simulated shared bit strings
solves the gossip problem in $O(kn + (1/\alpha)\Delta^{1/\tau}\log^6{N})$ rounds when
executed with tag length $b=1$ in a network  with stability $\tau \geq 1$.
\end{theorem}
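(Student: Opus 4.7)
The plan is to assemble the theorem from three ingredients already in hand: the small randomness multiset ${\cal R'}$ produced by Lemma~\ref{lem:z}, the time complexity of BitConvergence, and the per-instance correctness of SharedBit captured by the indicator $Z$. Fix ${\cal R'}$ as in Lemma~\ref{lem:z}, and have each node select its seed uniformly at random from ${\cal R'}$ at the start of the execution. On even rounds the nodes run BitConvergence with their seed as payload; by the result imported from~\cite{newport:2017}, this converges in $T_{LE} = O((1/\alpha)\Delta^{1/\tau}\log^6{N})$ rounds with high probability in $N$, after which every node agrees on the minimum-identifier node as leader and hence on the common shared string $\hat r^\star \in {\cal R'}$ pointed to by that leader's seed.

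Let $\ell \leq T_{LE}$ be the (even) round by which BitConvergence has converged, let $A$ be the token assignment present at the start of round $\ell$, and let ${\cal G} \in \mathbb{G}(n)$ describe the topology sequence over the $t_{SB}$ odd rounds that follow. Because seeds are drawn uniformly and independently of the topology at initialization, and because the identity of the minimum-id node is a deterministic function of the identifier set, conditioning on the convergence event leaves the leader's seed $\hat r^\star$ uniformly distributed on ${\cal R'}$. Applying Lemma~\ref{lem:z} to the tuple $(n,\ell,{\cal G},A)$ then yields that with probability at least $1 - 2\epsilon = 1 - 2N^{-c}$, the odd-round SharedBit executions following round $\ell$ complete gossip within $t_{SB} = O(kn)$ further odd rounds.

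A union bound over three potential failure events finishes the proof: (i) BitConvergence fails to converge within $T_{LE}$ rounds; (ii) SharedBit fails starting from the post-convergence state even though $\hat r^\star$ is drawn uniformly from ${\cal R'}$; and (iii) one of the at most $O(kn)$ token transfer subroutine calls fails (each with probability $n^{-c_t}$). Choosing the constants $c$ and $c_t$ sufficiently large drives the total failure probability below $1/n$. Since each odd round is paired with an even round, the overall round count is $2(T_{LE} + t_{SB}) = O(kn + (1/\alpha)\Delta^{1/\tau}\log^6{N})$, matching the claimed bound. The length $t_{SSB} = t_{SB} + t_{BC}$ that was baked into the definition of ${\cal R}$ is precisely what guarantees that the shared string never runs out of random groups during either phase.

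The main obstacle I expect is handling the probabilistic conditioning cleanly: one must argue that conditioning on the event ``BitConvergence converged by round $\ell$ with leader $u^\star$'' does not bias $u^\star$'s seed away from uniform on ${\cal R'}$, since otherwise the hypothesis of Lemma~\ref{lem:z} would not directly apply. The resolution is the observation above, namely that the minimum-id node is determined by the fixed identifier set and is independent of the random seed choices, so its seed remains uniformly distributed on ${\cal R'}$ conditional on its being the eventual leader. A secondary subtlety is that during the pre-convergence phase different nodes may be consulting different seeds, so gossip may progress erratically or not at all; but because Lemma~\ref{lem:z} quantifies over every possible token assignment $A$ at round $\ell$, the post-convergence analysis simply absorbs whatever state the pre-convergence phase happens to leave behind.
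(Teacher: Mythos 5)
Your proposal is correct and follows essentially the same route as the paper's proof: fix the ${\cal R'}$ from Lemma~\ref{lem:z}, invoke the BitConvergence bound for the leader-election phase, apply Lemma~\ref{lem:z} to the post-convergence state to cover the SharedBit phase, and union-bound over the leader-election, transfer-routine, and simulated-randomness failure events. Your explicit handling of the conditioning issue (that the minimum-id leader's seed remains uniform on ${\cal R'}$ given the convergence event) is a point the paper's own proof passes over silently, so it is a welcome addition rather than a divergence.
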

\begin{proof}
Fix the multiset ${\cal R'}$ proved to exist in Lemma~\ref{lem:z}.
We now study the performance of SimSharedBit using this multiset as the source of shared random
strings selected by leader candidates.

First, we note that by Theorem~\ref{thm:unstable},
we know that {SharedBit} gossip solves gossip in $O(kn)$ rounds with high probability.
In~\cite{newport:2017},
we proved that {BitConvergence} leader election solves leader election in 
$O( (1/\alpha)\Delta^{1/\tau}\log^6{N})$ rounds with high probability.
In Section~\ref{sec:transfer}, we proved that the transfer routine succeeds with high probability.
By a union bound, we can therefore assume that with (slightly less) high probability the transfer routine works every
time it is called in a poly($N$) round execution.

Let $\epsilon$ be the smallest of these three small failure probabilities.
In a given execution of {SimSharedBit},
it follows (by a union bound) that the probability that the transfer routine fails at least
once, or {BitConvergence} fails to elect a leader in the provided time bound,
is less than $2\epsilon$.

Assume neither of these two bad events occur.
We now study the probability that {SimSharedBit},
running with a $\hat r$ selected uniformly by the node with the smallest ID from the
${\cal R'}$, starting from the round right after leader election succeeds,
and runnings on the given dynamic graph for the execution.
By Lemma~\ref{lem:z},
the probability that SimSharedBit fails to solve gossip is also less than $2\epsilon$.

A final union bound on these two failure probabilities establishes that the probability
SimSharedBit gossip fails is less than $4\epsilon$,
and therefore it succeeds with probability at last $1-4\epsilon$.
So long as we set the constant factors in the time complexity of SharedBit, BitConvergence,
and the transfer routine, to ensure that $\epsilon \leq \frac{1}{4N}$, 
SimSharedBit succeeds with high probability.
\end{proof}

%
%
%
%
%
%
%
%
%
%
%


\section{Gossip with $b=1$ and $\tau = \infty$}
\label{sec:stable}

Here we describe and analyze a gossip algorithm that requires
only $\tilde{O}(k/\alpha)$ rounds when executed with $b=1$ and a stable network (where $\tilde{O}$ hides $\polylog{N}$ factors).
Because $\Omega(k)$ is a trivial lower bound for gossip $k$ messages in our model,
this algorithm is optimal for larger $\alpha$. Recall that for $\tau \geq 1$ our best solution required $O(kn)$ rounds.
This algorithm matches this time for the worst-case $\alpha$ values but then improves over it as $\alpha$ increases.
For constant $\alpha$, this algorithm performs a factor of $n$ faster (ignoring log factors).
These results indicate that network stability is valuable from a gossip algorithm perspective.
Notice, for the sake of presentation clarity, the algorithm analysis that follows does not attempt to optimize the polylogarithmic factors
multiplied to the leading $k/\alpha$ term. 

\paragraph{Discussion: Crowded Bins}
We call this algorithm CrowdedBin gossip.
This name comes from a core behavior in the algorithm in which nodes toss
their tokens into a fixed number of bins corresponding to their current estimate $\hat k$ of $k$ (the number of tokens in the network).
Nodes do not know $k$ in advance. Determining this value is crucial to enabling efficient parallel dissemination of their
tokens.
Leveraging a new balls-in-bins analysis, we upper bound the number
of tokens in any given bin {\em if} the estimate $\hat k$ is sufficiently large.
The nodes therefore search for crowded bins as evidence that they need a larger estimate of $k$.
This mechanism provides a way to check that a current guess $\hat k$ is too small while only
paying a time complexity price relative to $\hat k$ (as there are only $\hat k$ bins required
to check for crowding). Because the sequence of guesses we try are geometrically increasing,
the cost of checking estimates smaller than $k$ will sum up to $\tilde{O}(k)$.

\paragraph{Discussion: Spreading Bits versus Spreading Tokens.}
We also emphasize that the CrowdedBin algorithm makes a clear distinction between
propagating information using the advertising bits and propagating the tokens themselves
(which are treated as black boxes, potentially large in size, that require a pairwise connection
for transfer). Combining the stability of the network with each node's ability to advertise a bit
to {\em all} its neighbors in each round,
nodes first attempt to stabilize to a consistent and accurate estimate of $k$,
and a consistent set of {\em tags} describing the network's tokens.
Once stabilized, this information can then support the efficient spreading 
of the tokens, link by link, to the whole network.

\paragraph{The PPUSH Rumor Spreading Strategy.}
The CrowdedBin algorithm uses a simple rumor spreading strategy called PPUSH as a subroutine to help
spread tokens once the network has stabilized.
This algorithm was introduced in our earlier study of rumor spreading in the mobile telephone model~\cite{ghaffari:2016}.
PPUSH assumes a subset of nodes start with a common rumor $m$,
and the goal is to spread $m$ to all nodes.
It requires $b \geq 1$.

In more detail,
the strategy PPUSH works as follows:
(1) at the beginning of each round, if a nodes knows $m$ (i.e., it is {\em informed}), it advertises bit $1$, otherwise if it does not know $m$ (i.e.,
it is {\em uninformed}), it advertises bit $0$;
(2) each informed node that has at least one uninformed neighbor in this round, 
chooses an uninformed neighbor with uniform randomness and attempts to form a connection to spread the rumor.
In~\cite{ghaffari:2016}, we proved the following key result about the performance of PPUSH:

\begin{theorem}[Adapted from~\cite{ghaffari:2016}]
With high probability in $N$: PPUSH succeeds in spreading the rumor to all nodes in $O(\log^4{N}/\alpha)$
rounds when executed in the mobile telephone model with $b\geq 1$, $\tau = \infty$,
and a topology graph with expansion $\alpha$.
\label{thm:ppush}
\end{theorem}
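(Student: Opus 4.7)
\medskip

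\noindent\textbf{Proof Proposal for Theorem~\ref{thm:ppush}.} The plan is to track the set $S_r$ of informed nodes across rounds and show that each round either $|S_r|$ grows multiplicatively or the algorithm has already succeeded. Fix a round $r$ with informed set $S = S_r$ of size $1 \leq |S| \leq n/2$ (the symmetric case $|S| > n/2$ can be handled by shifting attention to the uninformed set, since the same vertex expansion bound applies to the complement in an expander-like setup, or by noting that once $|S| > n/2$ the boundary of the uninformed set is still $\geq \alpha \cdot |V \setminus S|$ by definition of $\alpha$). By the definition of $\alpha$, the boundary $\partial S$ satisfies $|\partial S| \geq \alpha |S|$. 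Since $b \geq 1$ and each informed node advertises bit $1$ while each uninformed node advertises bit $0$, every informed $u$ with at least one uninformed neighbor can correctly identify such a neighbor from its scan and send a proposal to one of them chosen uniformly at random. Each $v \in \partial S$ that receives at least one such proposal becomes informed (it accepts a uniformly random incoming proposal, but its identity is irrelevant for our progress measure).

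The central technical step is to prove a per-round growth lemma: with constant probability, the number of newly informed nodes is at least $\Omega(\alpha |S|/\log^2 N)$. This is the key obstacle, because proposals can be ``wasted'' in two conflicting ways. On the sender side, an informed $u$ with $d$ uninformed neighbors picks a specific target with probability only $1/d$, so the signal spreads thin if informed nodes have many uninformed neighbors. On the receiver side, a $v \in \partial S$ with many informed neighbors may attract multiple proposals, of which only one succeeds, wasting the rest. The strategy to overcome this is a case split on the bipartite ``candidate edge'' graph $H$ between $S$ and $\partial S$: for each boundary node $v \in \partial S$, let $d_S(v) = |N(v) \cap S|$, and for each informed $u \in S$, let $d_U(u) = |N(u) \cap (V \setminus S)|$. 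Partition $\partial S$ by the value of $d_S(v)$ (or equivalently, classify edges of $H$ by $d_U$ on the sender side). In the regime where a constant fraction of edges have $d_U(u) \leq C$ for some $C = O(\log N)$, the corresponding senders succeed with probability $\Omega(1/C)$ in targeting any particular uninformed neighbor; one then shows that in expectation $\Omega(|\partial S|/C)$ boundary nodes receive a proposal by inclusion-exclusion. In the complementary regime where most senders have large $d_U$, one can afford to throw away a $1/\log N$ fraction of candidates and still cover many boundary nodes by a balls-into-bins style argument, because the ``load'' on any single $v$ is bounded using $d_S(v)$ against the total number of edges in $H$.

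Granted such a per-round lemma, the rest is standard. I would define phases indexed by doublings $|S| \in [2^i, 2^{i+1})$ for $i = 0, 1, \dots, \lceil \log n \rceil$. Within a phase, each round adds $\Omega(\alpha \cdot 2^i/\log^2 N)$ new informed nodes with constant probability, so in expectation $O(\log^2 N / \alpha)$ rounds suffice to advance to the next phase. Using a standard stochastic dominance argument against independent Bernoulli $(1/4)$ trials (exactly as in the proof of Theorem~\ref{thm:unstable}) together with a Chernoff bound (Theorem~\ref{thm:chernoff}), an additional $\log N$ factor yields concentration with failure probability $N^{-\Omega(1)}$ per phase. Taking a union bound over the $O(\log N)$ phases preserves high probability and yields the total bound
\[
O\!\left(\log N \cdot \log N \cdot \frac{\log^2 N}{\alpha}\right) = O\!\left(\frac{\log^4 N}{\alpha}\right),
\]
as claimed. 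The same analysis applied to the uninformed set handles the symmetric ``contracting'' half of the process once $|S| > n/2$.

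The main obstacle, as outlined above, is the per-round lemma: the interaction between the ``sender dilution'' and ``receiver collision'' effects is what distinguishes the mobile telephone model from the classical telephone model, where the $\log^2 N / \alpha$ bound can be proved cleanly because receivers accept arbitrarily many concurrent connections. Handling this correctly requires the bipartite case split described above, and it is also the reason the bound degrades by an extra $\log^2 N$ factor compared to the classical $\Theta(\log^2 N / \alpha)$ result of Giakkoupis.
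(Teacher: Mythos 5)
First, note that the paper does not prove this theorem at all: it is imported verbatim from~\cite{ghaffari:2016} (hence the ``Adapted from'' tag), so your blind attempt is being compared against a citation rather than an in-paper argument. That said, your proposal has a genuine gap, and it sits exactly where you locate the difficulty. The entire content of the theorem is the ``per-round growth lemma,'' which you state but do not prove --- the case split on the bipartite candidate graph and the ``balls-into-bins style argument'' are gestures at a strategy, not an argument. Worse, the lemma is false as stated. Consider an informed set $S$ of size $s$ whose boundary $\partial S$ (of size $\alpha s$) consists of $D \approx \sqrt{\alpha s}$ ``attractor'' nodes each adjacent to every node of $S$, plus $\alpha s - D$ ``target'' nodes each adjacent to a single distinct informed node. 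Every informed sender then has about $D$ uninformed neighbors, so each target receives a proposal with probability only about $1/D$, while each attractor absorbs many proposals but can accept just one connection. The expected number of newly informed nodes in one round is $\Theta\bigl(D + (\alpha s)/D\bigr) = \Theta(\sqrt{\alpha s})$, which is far below $\Omega(\alpha s/\log^2 N)$ once $\alpha s \gg \log^4 N$. So no single-round argument of the form you propose can work.

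The missing idea is that the analysis must be amortized over multiple rounds of the \emph{same} topology, which is precisely where the hypothesis $\tau = \infty$ enters --- and your sketch never uses stability anywhere. In the example above, the attractors all become informed in round one; in round two the senders' uninformed neighborhoods have shrunk to their matched targets, and the full $\Omega(\alpha s)$ of progress is recovered over the two rounds combined. The analysis in~\cite{ghaffari:2016} formalizes this ``unclogging'' effect across a phase of polylogarithmically many rounds on a fixed graph (via a degree-bucketed matching argument in the spirit of Lemmas~\ref{lem:msize} and~\ref{lem:7.3}, but iterated). This is not a cosmetic omission: a per-round lemma that never invokes $\tau$ would prove the same $O(\log^4 N/\alpha)$ bound for $\tau = 1$, contradicting the fact that the known bound for dynamic topologies degrades to $O\bigl((1/\alpha)\Delta^{1/\tau}\,\text{polylog}(N)\bigr)$. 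Your doubling-phase scaffolding, the stochastic-dominance step, and the handling of $|S| > n/2$ are all fine, but they are the routine part; the theorem lives or dies on the multi-round growth argument you have not supplied.
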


\noindent We will leverage this theorem in our analysis of our gossip algorithm.
We also use the following useful property proved in~\cite{ghaffari:2016} which relates network diameter to expansion:\footnote{The actual
result we proved in~\cite{ghaffari:2016} is that it is always possible to spread a rumor in $O(\log{n}/\alpha)$ rounds in the mobile
telephone model in a graph with expansion $\alpha$. The rumor spreading time in a given network can never be smaller than
the network diameter, which provides a trivial lower bound on the problem.}

\begin{theorem}[Adapted from~\cite{ghaffari:2016}]
\label{thm:diam}
Fix a connected graph with $n$ nodes, expansion $\alpha$, and diameter $D$.
It follows that $D = O(\log{n}/\alpha)$.
\end{theorem}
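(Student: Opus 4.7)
The plan is to bound the diameter by tracking how fast a ball around a fixed vertex grows under the expansion hypothesis. First I would fix an arbitrary vertex $v$ and let $B_r(v)$ denote the set of vertices at graph distance at most $r$ from $v$, with $B_0(v) = \{v\}$. So long as $|B_r(v)| \leq n/2$, the vertex expansion definition applies directly to $B_r(v)$, yielding $|\partial B_r(v)| \geq \alpha \cdot |B_r(v)|$. Since $B_{r+1}(v) \supseteq B_r(v) \cup \partial B_r(v)$ and these two sets are disjoint by the definition of $\partial$, I obtain the recurrence $|B_{r+1}(v)| \geq (1+\alpha)\,|B_r(v)|$.

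Iterating this recurrence and invoking $(1+\alpha) \geq 2^\alpha$ from Theorem~\ref{fact:prob}, it follows that $|B_r(v)| \geq 2^{\alpha r}$ for every $r$ for which the preceding ball has not yet exceeded $n/2$. Setting $r^* = \lceil (\log n)/\alpha \rceil + 1$, this forces some ball $B_r(v)$ with $r \leq r^*$ to satisfy $|B_r(v)| > n/2$; otherwise the geometric bound would push the size past $n$, a contradiction. Hence every vertex $v$ has more than half of the graph within distance $r^* = O(\log n/\alpha)$.

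To finish, I would apply this fact to an arbitrary pair of vertices $u, v$. Since $|B_{r^*}(u)| > n/2$ and $|B_{r^*}(v)| > n/2$, the pigeonhole principle forces the two balls to share at least one vertex $w$. Concatenating a shortest $u$-to-$w$ path with a shortest $w$-to-$v$ path produces a $u$-to-$v$ walk of length at most $2r^*$, so $d(u,v) \leq 2r^* = O(\log n/\alpha)$. Taking the maximum over all pairs gives $D = O(\log n/\alpha)$, as claimed.

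The proof is essentially routine; the only subtle point is the edge case in which a single expansion step pushes the ball size from $\leq n/2$ to some value between $n/2$ and $n$. That is harmless because the pigeonhole step only requires the ball to exceed $n/2$, so I do not need the expansion inequality to continue holding past that threshold. Consequently, no additional care is required beyond verifying the recurrence is applied only while the hypothesis $|B_r(v)| \leq n/2$ is in force.
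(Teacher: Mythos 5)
Your proof is correct, and it takes a genuinely different route from the paper. The paper does not prove this theorem directly: it imports it from the earlier rumor-spreading work, observing (in a footnote) that a rumor can always be spread in $O(\log n/\alpha)$ rounds in a graph with expansion $\alpha$, and that any rumor-spreading scheme needs at least $D$ rounds, so $D = O(\log n/\alpha)$ falls out as a corollary of an algorithmic upper bound. You instead give a direct combinatorial argument: grow the ball $B_r(v)$, use the expansion definition to get $|B_{r+1}(v)| \geq (1+\alpha)|B_r(v)|$ while $|B_r(v)| \leq n/2$, conclude via $(1+\alpha) \geq 2^{\alpha}$ that every ball of radius $r^* = O(\log n/\alpha)$ exceeds $n/2$, and then intersect two such balls to bound $d(u,v) \leq 2r^*$. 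All the steps check out against the paper's definitions ($\partial S$ consists of vertices adjacent to $S$, so $B_r(v) \cup \partial B_r(v) \subseteq B_{r+1}(v)$ with the union disjoint, and the expansion bound applies precisely to sets of size at most $n/2$), and your handling of the moment the ball first crosses $n/2$ is the right way to close the edge case. What each approach buys: yours is self-contained, elementary, and portable to any setting where only the expansion definition is available; the paper's is shorter on the page but leans on a nontrivial external result, and as a bonus your argument makes the constant in the $O(\cdot)$ explicit (the diameter is at most roughly $2\log n/\alpha$).
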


\subsection{The {CrowdedBin} Gossip Algorithm}
\label{sec:alg:stable}

We divide our description of this analysis into several named parts to clarify its presentation.
In the following, we assume each node $u\in V$ identifies itself with a {\em tag} $t_u$ 
chosen uniformly from the space $\{1,2,...,N^{\beta}\}$,
where $\beta \geq 2$ is constant we fix in our analysis.
Let $\ell = \beta \log{N}$ be the number of bits needed to describe a tag.
To simplify notation, we assume in the following that $N$ is a power of $2$.

\paragraph{Parallelizing Instances.}
Nodes do not know in advance the value of $k$ (the number of tokens in the system).
They consider $\log{N}$ {\em estimates} of $k$: $k_1, k_2,...,k_{\log{N}}$,
where each $k_i = 2^i$.
The nodes run in parallel a separate gossip {\em instance} for each estimate.
We use the notation {\em instance $i$} to refer to the instance corresponding to estimate $k_i$.
In order to run $\log{N}$ instances in parallel, each node uses $\log{N}$ rounds to simulate one round
each of the $\log{N}$ instances.
That is, nodes divide rounds into {\em simulation groups}
consisting of $\log{N}$ rounds.
Round $j$ of simulation group $i$
is used to simulate round $i$ of instance $j$.

\paragraph{Instance Schedules.}
Each instance $i$ groups its rounds into {\em blocks} containing $\ell + \log{N}$ rounds each.
It then groups these blocks into {\em bins} containing $\gamma \log{N}$ blocks each,
where $\gamma > 1$ is a constant we fix in our analysis below.
Finally, it groups the bins into {\em phases} consisting of $k_i$ bins each.
In other words, the schedule for instance $i$ is made up of phases,
where each phase has $k_i$ bins, which are each made up of $\gamma \log{N}$ blocks,
which each contain $\ell+\log{N}$ rounds: adding to a total of $\gamma(\beta+1) k_i \log^2{N}$  total rounds per phase.

\paragraph{Initialization.}
Each node $u\in V$ that begins an execution of the CrowdedBin algorithm with a gossip token,
independently selects a bin for its token for each of the $\log{N}$
instances.
That is, for each instance $i$,
$u$ selects a bin $b_u(i)$ with uniform independent randomness from $\{1,2,...,k_i\}$.
Each node $u$ also maintains, for each instance $i$,
and each bin $j$ for this instance,
a set $T_u(i,j)$ containing the tags it has seen so far for tokens in bin $j$ in instance $i$.
For each instance $i$,
if node $u$ has a token it initializes $T_u(i, b_u(i))= \{t_u\}$ (i.e., it places its own tag in the bin it selected for that instance).
Node $u$ also maintains a set $Q_u$ containing the tokens it has received
so far, where each token in $Q_u$ is also labeled with its tag.
Finally, each node $u$ maintains a variable $est_u$, initialized to $1$,
which describes the current instance node $u$ is participating in.

\paragraph{Participation.}
Each node will only participate in a single instance at a time,
and it will only participate in complete phases of an instance.
In more detail, if some instance $i$ starts a new phase in round $r$,
and some node $u$ has $est_u = i$ at the start of round $r$,
node $u$ is now committed to participate in this full phase of instance $i$.
As we will detail, its estimate cannot change again until this phase completes.

To participate in a phase of instance $i$,
node $u$ does the following.
First, for each bin $j$, $1 \leq j \leq k_i$,
$u$ orders the tags in $T_u(i,j)$ (if any) in increasing order.
It will use the first $\ell$ rounds of the first
block to spell out the smallest such tag,
bit by bit, using its advertising bits (here the assumption that $b\geq 1$ is needed).
It will then use the first $\ell$ rounds of the
second block to spell out the second smallest tag, and so on.
There are $\gamma \log{N}$ total blocks in this bin.
If $u$ knows more than this many tags for this bin, it transmits only the first $\gamma \log{N}$.
Node $u$ transmits all $0$'s during the blocks in this bin for which it has no tags to advertise
(here is where we use the assumption that the smallest possible tag is $1$---preventing a block of
all $0$'s from being mistaken for a tag.)

During the rounds dedicated to bin $j$, node $u$
also collects the bits advertised by its neighbors in each block.
If it learns of a tag $t_v$ that is not currently in $T_u(i,j)$,
it will put it aside and then add it to this set once the rounds dedicated
to bin $j$ in this phase conclude. 

We have only so far described what node $u$ does during the first $\ell$ rounds for each block
in our fixed instance $j$.
During the remaining $\log{N}$ rounds in these blocks,
$u$ will attempt to disseminate 
the actual tokens corresponding to the tags advertised
(here we emphasize the difference between spelling out the bits of
a tag using advertising bits and actually transmitting a token, which requires
two nodes to form a connection).
In more detail, $u$ executes the PPUSH rumor spreading strategy discussed above
during the last $\log{N}$ rounds of each block in the current bin.
In more detail, for a given block $h$ in this bin, if $u$ advertised tag $t$ in the first $\ell$
rounds of this block, {\em and} $u$ actually has the token
corresponding to tag $t$ in $Q_u$,
it executes PPUSH in the remaining rounds of this block using this 
token as the rumor and advertising $1$ (i.e., it runs PPUSH
with the status of an already informed node).
Otherwise, node $u$ runs PPUSH advertising $0$ (i.e., it runs the
PPUSH as an uniformed node).

\paragraph{Increasing Size Estimates.}
A core behavior in this algorithm is how nodes upgrade their current estimate of the value $k$
(stored in $est_u$ for each node $u$).
As described above, each node initializes their estimate to $1$.
As described below, these estimates can only grow during an execution.
We call an increase in this estimate at a given node an {\em upgrade.}
There are two events that trigger an upgrade at a given node $u$.

The first event is that node $u$ sees ``activity" on an instance $i' > est_u$,
where $est_u$ is its current estimate.
The term ``activity" in this context means seeing a $1$-bit advertised
in an instance $i'$ round. 
If this event occurs, then $u$ knows that some other node has already increased
its estimate beyond $est_u$, so $u$ should upgrade its estimate as well.
The second event is that node $u$ fills a bin in its current estimate.
That is, there is some bin $j$ such that $|T(est_u, j)| \geq \gamma \log{N}$.
We call this event a {\em crowded bin},
and $u$ can use this as evidence that $est_u$ does not have enough bins
for the number of tags in the system and therefore $est_u$ is too small of an estimate for $k$.
If this event occurs, $u$ will increase $est_u$ by $1$ (unless $est_u$ is already
at its maximum value in which case it will remain unchanged.).

Recall, as specified above, that if a node $u$ increases its estimate $est_u$ to a new value,
it will complete the phase of whatever instance it was participating in before switching to the new estimate moving forward.
This restriction simplifies the analysis that follows.

%

\subsection{Analysis}

In the following analysis, let $D$ be the diameter of the fixed underlying topology graph.
Some of intermediate results below will reference $D$.
Our final result, however, will be expressed only with respect to $\alpha$ to maintain comparability
to earlier results defined for non-stable networks in which $D$ is not well-defined.

At the beginning of an execution each node randomly assigns a tag from $\{1,2,...,N^{\beta}\}$
to its token, and then randomly assigns the token to a bin in each of the $\log{N}$ instances.
We call the global collection of these assignments for a given execution a {\em configuration.}
Fix a configuration.
We call a given instance $i$ of this configuration, $1\leq i \leq \log{N}$, {\em crowded},
if the configuration has an instance $i$ bin with at least $\gamma \log{N}$ unique tags assigned to it.
The {\em target} instance for our fixed configuration is the smallest instance $i$ that is {\em not} crowded.
If every instance is crowded, then we say the {\em target} instance is {\em undefined}.
We begin our analysis by defining what it means for a configuration to be {\em good} with respect to these terms:

\begin{definition}
A configuration is {\em good} if and only if it satisfies the following two properties:
(1) every token is assigned a unique tag; and (2) the target instance $i$ is defined,
and $k_i \leq 2k$.
\end{definition}

A direct corollary of the above definition is that if a configuration is good,
and $i$ is the target,
then $k_i > k/(\gamma \log{N})$. We now bound the probability that the nodes generate a good configuration.
We will show that increasing the constant $\beta$,
used to define the space $\{1,2,...,N^{\beta}\}$ from which tags are drawn,
and the constant $\gamma$,
used to define the number of blocks per bin, 
increases the high probability that a configuration is good.
To make this argument we begin by proving a non-standard balls-in-bins argument
that will prove useful to our specific algorithm's behavior.

\begin{lemma}
\label{lem:balls}
Fix some constant $\gamma \geq 9$.
Assume $k$ balls, $1 \leq k \leq N$, are thrown into $k' \geq k$ bins with independent and uniform randomness.
The probability that at least one bin has at least $\gamma \log{N}$ balls, 
is less than $1/N^{(\gamma/3) -2}$.
\end{lemma}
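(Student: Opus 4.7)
The plan is to bound the overload probability for a single bin using the Chernoff upper bound from Section~\ref{sec:model}, and then take a union bound over the (at most $k$) occupied bins. First, I would fix an arbitrary bin $j$ and let $X_j$ count the number of balls landing in it. Since each of the $k$ balls independently lands in bin $j$ with probability $1/k'$, the variable $X_j$ is a sum of $k$ i.i.d.\ indicators with expectation $\mu = k/k'$. The hypothesis $k' \geq k$ then forces $\mu \leq 1$, which is the key estimate that drives the rest of the argument.

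Next I would invoke the Chernoff upper bound with the parameter choice $(1+\delta)\mu = \gamma\log N$, so that $\delta\mu = \gamma\log N - \mu \geq \gamma\log N - 1$. The precondition $\delta > 1$ reduces to $\gamma\log N > 2\mu$, which holds comfortably for $\gamma \geq 9$ and $N \geq 2$ since $\mu \leq 1$. The bound then yields
\[
\Pr[X_j \geq \gamma\log N] \;\leq\; e^{-(\gamma\log N - 1)/3} \;\leq\; e^{1/3}\cdot N^{-\gamma/3} \;<\; 2\,N^{-\gamma/3}.
\]

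Finally I would observe that at most $k \leq N$ bins can contain any ball at all, so at most $N$ bins could possibly reach occupancy $\gamma\log N$. A union bound over these occupied bins gives
\[
\Pr\bigl[\exists\, j:\, X_j \geq \gamma\log N\bigr] \;\leq\; N \cdot 2\,N^{-\gamma/3} \;\leq\; N^{2-\gamma/3} \;=\; 1/N^{(\gamma/3)-2},
\]
as claimed. The main subtlety is applying the Chernoff bound when $\mu$ may be much smaller than $1$: one must confirm that $\delta = (\gamma\log N)/\mu - 1$ really does exceed $1$ and that $\delta\mu$ still concentrates near $\gamma\log N$. The second important move—union-bounding only over the $\leq k$ \emph{occupied} bins, rather than all $k'$ bins—is what makes the final estimate independent of $k'$, which could otherwise be arbitrarily large compared to $k$.
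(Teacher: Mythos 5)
Your overall strategy is the same as the paper's: a Chernoff upper bound on the occupancy of a single bin (exploiting $\mu \leq 1$ from $k' \geq k$), followed by a union bound over only $k \leq N$ bins rather than all $k'$ bins. The single-bin computation is fine, including the implicit use of $\log N \geq \ln N$ to pass from $e^{-\gamma\log N/3}$ to $N^{-\gamma/3}$.

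The gap is in the step you yourself flag as the second key move. You bound $\Pr[X_j \geq \gamma\log N]$ for a \emph{fixed} bin $j$, and then claim a union bound ``over the occupied bins.'' But the set of occupied bins is a random set determined by the same ball throws that define the events $\{X_j \geq \gamma\log N\}$; a union bound over a random collection of fixed-index events is not valid as written. The honest union bound over fixed bins is $\sum_{j=1}^{k'}\Pr[X_j \geq \gamma\log N] \leq k'\cdot 2N^{-\gamma/3}$, which reintroduces the dependence on $k'$ you were trying to avoid. The paper sidesteps this by indexing the events by \emph{balls} rather than bins: for each ball $i$, let $b_i$ be the (random) bin it lands in, and bound the probability that $b_i$ receives at least $\gamma\log N - 1$ \emph{additional} balls from the other $k-1$ throws (a sum of $k-1$ i.i.d.\ indicators with mean $(k-1)/k' < 1$). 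Any bin reaching occupancy $\gamma\log N \geq 1$ must equal $b_i$ for some $i$, so by symmetry a union bound over the $k \leq N$ balls legitimately yields the claimed $1/N^{(\gamma/3)-2}$. Your argument is repaired by exactly this reindexing (and the corresponding shift of the threshold by one for the conditioned-upon ball); everything else carries over.
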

\begin{proof}
Label the balls $1,2,...,k$ and the bins $1,2,...,k'$.
Let $b_1$ be bin in which ball $1$ is thrown.
We now calculate the expected number of other balls to land in $b_1$.
To do so, for each ball $i > 1$, 
let $X_1$ be the random indicator variable that evaluates to $1$ if $i$ lands in $b_1$
and otherwise evaluates to $0$.
Let $Y_{b_1} = \sum_{1< i\leq k} X_i$ be the total number of additional balls to
land in $b_1$.
By linearity of expectation and the observation that $E(X_i) =1/k' \leq 1/k$,
it follows that $\mu = E(Y_{b_1}) < 1$.

By definition of the process, $X_i$ and $X_j$ are independent for $i \neq j$.
We can therefore apply an upper bound form of a Chernoff Bound  (Theorem~\ref{thm:chernoff2})
to concentrate near this expectation.
In particular, define $\delta = (\gamma \log{N} - 2)/\mu$.
Notice, $\delta > (\gamma \log{N} -2) > 1$.
We can therefore apply Theorem~\ref{thm:chernoff2}
to $Y=Y_{b_1}$, and our above definitions of $\delta$ and $\mu$.
It follows
that:

\begin{eqnarray*}
 \Pr(Y_{b_1} \geq (1+\delta)\mu)   & \leq & \text{exp}\{- (\gamma \log{N} -2)/3    \}  \\
 & = & \text{exp}\{ -((\gamma/3)\log{N} - 2/3)           \} \\
 & = & \text{exp}\{ -((\gamma/3)\ln{N}\log{e} - 2/3)           \} \\
 & < & \frac{e^{2/3}}{e^{(\gamma/3)\ln{N}}} \\
 & < & 2/N^{\gamma/3} \\
 & \leq &1/N^{\gamma/3 - 1}
\end{eqnarray*}

Notice, $(1+\delta)\mu = \mu + (\gamma \log{N} - 2)$, and $\mu = 1/k' \in (0,1)$. 
Therefore, we can interpret the above bound saying that the probability that $b_1$ has at least $\gamma \log{N} -1$ extra balls is less
than $1/N^{\gamma/3 -1}$.
When we add in ball $1$, which by definition is also in $b_1$,
we get that the probability that $b_1$ has at least $\gamma \log{N}$ balls  is also less than $1/N^{\gamma/3 -1}$.
By symmetry, the same result holds for $b_2$ through $b_k$ as well.
There are dependencies between the outcomes in different bins,
but we can dispatch this issue by applying a union bound
over the $k\leq N$ occupied bins,
which provdes that the probability {\em at least} one bins
has more than $\gamma \log{N}$ balls is less than $N/N^{(\gamma/3) - 1} = 1/N^{(\gamma/3) - 2}$.
\end{proof}

\begin{lemma}
\label{lem:good}
Fix some constant $c\geq 1$.
For a tag space constant $\beta \geq c+3$,
and a bin size constant $\gamma \geq 3c + 9$,
the nodes generate a good configuration with probability at least $1-1/N^c$.
\end{lemma}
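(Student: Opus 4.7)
The plan is to decompose the event ``the configuration is good'' into the two named sub-properties and bound the failure probability of each separately, then combine by a union bound. The first property (unique tags) is a straightforward birthday-style calculation, while the second (existence and size of the target instance) is where Lemma~\ref{lem:balls} does the real work. The main obstacle will be matching up the balls-in-bins statement (which counts \emph{tokens} in a bin) with the definition of ``crowded'' (which counts \emph{unique tags} in a bin); this is why property (1) must be established first, so that under its event the two counts coincide.

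For property (1), each of the $k \leq N$ tokens is assigned an independent uniform tag from $\{1,\ldots,N^{\beta}\}$, so any fixed pair of tokens collides with probability $1/N^{\beta}$. Union-bounding over the at most $\binom{k}{2} \leq N^{2}/2$ pairs gives a collision probability of at most $1/(2N^{\beta-2})$, which for $\beta \geq c+3$ is at most $1/(2N^{c+1})$.

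For property (2), let $i^{*}$ be the smallest index in $\{1,\ldots,\log N\}$ with $k_{i^{*}} = 2^{i^{*}} \geq k$; by construction $k_{i^{*}} < 2k$ (and the edge cases $k=1$ and $k=N$ check directly against the range of indices). Since $k_{i^{*}} \geq k$, Lemma~\ref{lem:balls} (applied with the $k$ tokens as balls and the $k_{i^{*}}$ instance-$i^{*}$ bins) says that with probability at least $1 - 1/N^{\gamma/3-2}$, no instance-$i^{*}$ bin contains $\gamma \log N$ or more tokens. Conditioning additionally on property (1), every token in a bin carries a distinct tag, so the token-count in each bin equals the unique-tag-count, and hence instance $i^{*}$ is not crowded. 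The target instance $i$ is therefore defined and satisfies $i \leq i^{*}$, giving $k_{i} \leq k_{i^{*}} < 2k$ as required. For $\gamma \geq 3c+9$ the failure probability here is at most $1/N^{c+1}$.

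Combining both failure probabilities by a union bound gives a total failure probability of at most $1/(2N^{c+1}) + 1/N^{c+1} < 1/N^{c}$, which yields the claimed $1 - 1/N^{c}$ success bound. The only subtlety to track carefully in the write-up is the conditioning step joining properties (1) and (2): since property (1) holds with overwhelming probability, its conditional and unconditional failure probabilities differ only by a negligible amount, so the clean union-bound argument above is justified.
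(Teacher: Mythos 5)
Your proof is correct and follows essentially the same route as the paper's: bound the tag-collision probability via a birthday argument with $\beta \geq c+3$, apply Lemma~\ref{lem:balls} to the smallest estimate $k_{\hat i} \geq k$ (which satisfies $k_{\hat i} < 2k$) to show that instance is uncrowded with $\gamma \geq 3c+9$, and union-bound the two failures to get $1/N^c$. Your extra care about tokens versus unique tags is fine but not strictly needed, since the unique-tag count in a bin never exceeds the token count, so a plain union bound over the two bad events already suffices without any conditioning.
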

\begin{proof}
There are two parts to the definition of {\em good}.
The first requires each tag to be unique.
The probability that there is at least one collision among the tag chocies, given that no more than $N$ tags
are drawn from $N^{\beta}$ options, can be loosely upper bounded as $1/N^{\beta -2}$.
If we define $\beta = c+3$ then this failure probability is less than $1/N^{c+1}$.

The second part of the definition requires that the target instance is defined and it is not too large compared
to the actual number of tokens, $k$.
Let $\hat i = \text{argmin}_{1 \leq i \leq \log{N}}\{k \leq k_i\}$.
That is, $k_{\hat i}$ is the smallest estimate of $k$ considered by our algorithm that is at least as large as $k$.
Because our estimates grow by a factor of $2$, we know that $k_{\hat i} < 2k$.
If we can show that $k_{\hat i}$ is not crowded, therefore,
it will follow that the target instance $i$ for this configuration is defined, 
and $i \leq \hat i$: which is sufficient to satisfy the second part of the definition of {\em good}.

To make this argument, we can treat the selection
of bins for each token in instance $\hat i$ as a balls in bins problem.
We therefore apply
Lemma~\ref{lem:balls} to $k$ and $k'=k_{\hat i}$,
which tells us that for any constant $\gamma \geq 9$,
the probability that instance $\hat i$ crowded is less than $1/N^{(\gamma/3) -2}$.
If we set out bin size constant $\gamma \geq 3c+9$, this probability is less than $1/N^{c+1}$.

Pulling together the pieces, for $\beta \geq c+3$ and $\gamma \geq 3c + 9$,
a union bound provides
that the probability that we fail to satisfy at least one of the two parts of the definition of {\em good}
is less than $2/N^{c+1} \leq 1/N^c$, satisfying the lemma statement. 
\end{proof}

Now that we have established that good configurations are likely,
we establish the below lemma about these configurations that follows directly from the definition of good and the mechanism
by which our algorithm updates estimates:

\begin{lemma}
In an execution with a good configuration with target instance $i$,
no node ever sets its local estimate to a value larger than $i$. That is, for all $u$ and all rounds, $est_u \leq i$.
\label{lem:good2}
\end{lemma}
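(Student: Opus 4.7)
The plan is to argue by contradiction, focusing on the first moment in the execution at which any node's local estimate crosses the threshold $i$. Formally, I would assume that in some execution with a good configuration and target instance $i$ there is a node whose estimate eventually exceeds $i$, and consider the earliest round $r^\ast$ in which some node $u$ actually sets $est_u > i$ (breaking ties arbitrarily). By the algorithm's rule that an estimate can only change at the end of the current phase, this upgrade is caused by one of the two triggers described in the \emph{Increasing Size Estimates} paragraph: either (a) $u$ observed a $1$-bit advertised in a simulated round of some instance $i' > est_u$, or (b) $u$'s current estimate was exactly $i$ and some bin $T_u(i,j)$ accumulated at least $\gamma\log{N}$ tags during its phase.

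For case (b), I would appeal directly to the definition of the target instance: since $i$ is the smallest instance that is \emph{not} crowded, no bin $j$ of instance $i$ receives $\gamma\log{N}$ distinct tags in the initialization. Because each $T_u(i,j)$ contains only tags of tokens that were actually assigned to bin $j$ of instance $i$, we have $|T_u(i,j)| < \gamma\log{N}$ at every point of the execution. So the crowded-bin trigger simply cannot fire at $est_u = i$, contradicting the assumption.

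For case (a), the key observation is that by the \emph{Parallelizing Instances} description, a $1$-bit appearing in a simulated round of instance $i'$ can only be advertised by a node whose current estimate equals $i'$ (nodes participate only in their current instance). Hence some node $v \neq u$ must have had $est_v = i' > i$ at or before round $r^\ast$; but then $v$'s estimate crossed $i$ no later than $u$'s did, contradicting the minimality of $r^\ast$ (either strictly earlier, or simultaneously with $u$ but requiring an earlier upgrade by $v$ to support the advertisement). Combining the two cases yields that no node ever sets $est_u > i$, as claimed.

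The main obstacle I anticipate is handling the timing subtlety in case (a): because estimate changes are deferred to phase boundaries, I need to be careful that ``first time an estimate exceeds $i$'' is well-defined across the staggered phases of different instances and that observing activity in instance $i'$ really does require some node to have already committed to $est = i'$. Making this precise requires tracing the bit back to whichever node advertised it in that specific simulated round of instance $i'$ and invoking the single-instance participation rule; once that is nailed down, the rest of the argument is essentially a short extremal-time argument.
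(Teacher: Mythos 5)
Your argument is correct, and it supplies precisely the reasoning the paper leaves implicit: the paper states Lemma~\ref{lem:good2} with no proof at all, asserting only that it ``follows directly from the definition of good and the mechanism by which our algorithm updates estimates.'' Your extremal-time case analysis --- the crowded-bin trigger cannot fire at the target instance because, by an easy induction, every $T_u(i,j)$ contains only tags of tokens actually assigned to bin $j$ of the (uncrowded) instance $i$, and the activity trigger for an instance $i'>i$ would require some node to have already committed to estimate $i'$, contradicting minimality --- is exactly the intended justification, made explicit.
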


We now continue our analysis by bounding the time required for all nodes to reach the target instance.
We do so with two arguments: the first concerning the rounds required for nodes to learn of a larger estimate
existing in the system, and the second concerning the rounds required for the largest estimate to increase
if it is still less than the target. 
For the following results, recall that $D$ is the network diameter.
\begin{lemma}
Fix an execution with a good configuration with target instance $i$.
Assume that at the beginning of round $r$ of this execution the largest estimate in the system is $i_{max} \leq i$.
By round $r' = r + {O}(D k_{i_{max}}\log^3{N})$ either: the largest estimate in the system is larger than $i_{max}$,
or all nodes have estimate $i_{max}$.
\label{lem:prop1}
\end{lemma}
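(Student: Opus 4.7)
The plan is to show that, within one phase of instance $i_{max}$, the set of nodes with estimate $i_{max}$ either grows by at least one graph hop, or some node in the system upgrades past $i_{max}$. Iterating this one-hop expansion at most $D$ times (where $D$ is the diameter) then either pushes the largest estimate above $i_{max}$ or brings every node to estimate $i_{max}$, giving the claimed round bound.

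First I would compute the real-time length of a single phase of instance $i_{max}$. By the schedule definition, such a phase contains $k_{i_{max}}$ bins, each made up of $\gamma \log{N}$ blocks of $\ell + \log{N} = \Theta(\log{N})$ rounds; this totals $\Theta(k_{i_{max}} \log^2{N})$ instance-local rounds, and the $\log{N}$-way round-robin that simulates the $\log{N}$ parallel instances multiplies this by a further $\log{N}$, yielding $\Theta(k_{i_{max}} \log^3{N})$ real rounds per phase.

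Next I would prove the one-hop expansion claim. Fix a node $u$ with $est_u = i_{max}$ at the start of an $i_{max}$-phase and a neighbor $v$ with $est_v < i_{max}$, and assume no node exceeds $i_{max}$ during this phase. Since $u$ participates in instance $i_{max}$ and spells out each tag it holds bit-by-bit using its $1$-bit advertisement, and since tags are drawn from $\{1,\ldots,N^{\beta}\}$ (so every tag contains a $1$-bit and cannot be confused with an empty-bin block of $0$s), $u$ emits at least one $1$-bit on an instance-$i_{max}$ round of the phase. Node $v$ observes the advertisements of its neighbors in every round, in particular this one, which counts as \emph{activity} on an instance $i' = i_{max} > est_v$ and triggers an upgrade. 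Because $v$ must finish its current phase of instance $est_v$ before the upgrade takes effect, and that pending phase lasts at most $\Theta(k_{est_v} \log^3{N}) \leq \Theta(k_{i_{max}} \log^3{N})$ real rounds (since $k_{est_v} \leq k_{i_{max}}$), $v$ commits to $i_{max}$ within at most two consecutive $i_{max}$-phases. Summed over at most $D$ hops, this yields the $O(D k_{i_{max}} \log^3{N})$ bound stated in the lemma.

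The main obstacle is ensuring the witness neighbor $u$ actually emits a $1$-bit during the $i_{max}$-phase in question, i.e.\ that $u$ holds at least one tag in some $T_u(i_{max}, j)$. This is immediate when $u$ is a token-holder, since its own tag $t_u \geq 1$ was placed into $T_u(i_{max}, b_u(i_{max}))$ at initialization and therefore supplies a guaranteed $1$-bit in every phase. To handle the edge case where the frontier of $S_{i_{max}}$ might consist of a non-token upgrader whose instance-$i_{max}$ bins are still empty, I would argue that such a node can reach $est = i_{max}$ only by filling a bin in instance $i_{max}-1$, which requires observing $\gamma \log{N}$ distinct tag advertisements from neighbors; chasing the chain back, at least one such contributing neighbor is a token-holder whose own broadcasts will carry the activity signal into instance $i_{max}$ as soon as it too upgrades (which it must, either by the same crowded-bin event or, failing that, by the activity triggered from an even closer token-holder). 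This closure argument ties the expansion of $S_{i_{max}}$ to token-holding witnesses and completes the one-hop step used throughout the induction.
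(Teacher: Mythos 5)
Your proof follows the same approach as the paper's: bound a single instance-$i_{max}$ phase at $O(k_{i_{max}}\log^3{N})$ real rounds, then argue that the set of nodes with estimate $i_{max}$ grows by at least one graph hop per phase (via the activity-triggered upgrade rule) unless some node's estimate exceeds $i_{max}$, so $D+O(1)$ phases suffice. Your additional closure argument for a non-token-holding frontier node addresses a subtlety the paper's proof elides (it simply asserts the node ``has at least its own tag in one of the bins,'' which is literally true only for token originators); though informally stated, the underlying fix---that tags collected during one $i_{max}$-phase are re-advertised in the next, so the $1$-bit activity signal propagates along with them---is sound and does not change the asymptotic bound.
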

\begin{proof}
Fix a node $u$ that has $est_u=i_{max}$ at the beginning of round $r$.
If $u$ maintains that estimate at the beginning of its next instance $i_{max}$ phase,
then during that phase it will advertise at least one $1$-bit (as it has at least its own tag
in one of the bins for this instance).
It follows that all $u$'s neighbors in the underlying topology will learn that $u$ has $est_u = i_{max}$
and will upgrade their estimate to $i_{max}$, if their estimate is currently less than this value.
We can then repeat this argument for $u$'s neighbors,
then their neighbors, and so on until either: at least one node adopts a larger estimate than $i_{max}$
(which might impede the application of this logic), or all nodes adopt $i_{max}$.
If the first event occurs, we satisfy the lemma statement.
If the first event does not occur,
the second event will occur after at most diameter $D+1$ instance $i_{max}$ phases (the extra phase
upper bounds the rounds required between round $r$ and the start of the next instance $i_{max}$ phase).
The number of rounds to complete an instance $i_{max}$ phase can be calculated
as: $k_{i_{max}}$ bins times $\gamma \log{N}$ blocks per bin times $\ell + \log{N} = O(\log{N})$ instance $i_{max}$ rounds
per block times $\log{N}$ real rounds for each instance $i_{max}$ rounds.
This product evaluates to $O(k_{i_{max}} \log^3{N})$ rounds per instance.
Therefore, $O(D k_{i_{max}} \log^3{N})$ rounds are sufficient to guarantee the lemma statement holds.
\end{proof}

\begin{lemma}
Fix an execution with a good configuration with target instance $i$.
Assume that at the beginning of round $r$ of this execution the largest estimate in the system is $i_{max} < i$.
By round $r' = r + {O}(D k_{i_{max}}\log^3{N})$ the largest estimate in the system is larger than $i_{max}$.
\label{lem:prop2}
\end{lemma}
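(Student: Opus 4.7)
The plan is to combine Lemma~\ref{lem:prop1} with a tag-propagation argument. By Lemma~\ref{lem:prop1}, within $O(D k_{i_{max}} \log^3{N})$ rounds starting from round $r$, either the largest estimate already exceeds $i_{max}$ (in which case we are done), or every node has $est_u = i_{max}$. I will assume the latter and show that within an additional $O(D k_{i_{max}} \log^3{N})$ rounds some node triggers the crowded-bin rule and thereby upgrades strictly past $i_{max}$.

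Because the configuration is good with target instance $i$, and $i$ is defined as the smallest non-crowded instance, the instance $i_{max} < i$ must itself be crowded. Fix some bin $j^*$ of instance $i_{max}$ whose globally assigned tag set $S$ has $|S| \geq \gamma \log{N}$, and let $S^*$ denote the $\gamma \log{N}$ smallest elements of $S$. For each $t \in S^*$, let $v_t$ be the unique seed node that initialized its bin-$j^*$ tag set of instance $i_{max}$ with $\{t\}$.

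The key claim is that once all nodes have stabilized at $est_u = i_{max}$, after $D$ additional consecutive phases of instance $i_{max}$, every node $u$ satisfies $S^* \subseteq T_u(i_{max}, j^*)$. To prove this, fix any $t \in S^*$ and any node $u$, and let $u = u_0, u_1, \ldots, u_h = v_t$ be a shortest path of length $h \leq D$ in the (static) topology. I argue by reverse induction on $p$ that after phase $h - p$ we have $t \in T_{u_p}(i_{max}, j^*)$. The base case $p = h$ is immediate from initialization. For the inductive step, assume $t \in T_{u_{p+1}}$ at the start of phase $h - p$. The set $T_{u_{p+1}}(i_{max}, j^*)$ is necessarily a subset of $S$, because only tags initially placed in bin $j^*$ of instance $i_{max}$ can ever be learned. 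Since $t \in S^*$ is among the $\gamma \log{N}$ smallest members of $S$, it is certainly among the smallest $\gamma \log{N}$ tags in $T_{u_{p+1}}$, and hence $u_{p+1}$ advertises $t$ during some block of bin $j^*$ in this phase. Consequently $u_p$, as a neighbor of $u_{p+1}$, hears $t$ and adds it to $T_{u_p}(i_{max}, j^*)$ at the close of the bin-$j^*$ blocks. Once $|T_u(i_{max}, j^*)| \geq \gamma \log{N}$ at any node $u$, the crowded-bin rule forces $u$ to upgrade past $i_{max}$ at the next phase boundary.

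For the time accounting, one phase of instance $i_{max}$ consists of $k_{i_{max}}$ bins, each of $\gamma \log{N}$ blocks, each of $\ell + \log{N} = O(\log{N})$ instance-rounds; parallel simulation inflates each instance-round by a factor of $\log{N}$ real rounds, yielding $O(k_{i_{max}} \log^3{N})$ real rounds per phase. Summing over the $O(D)$ phases needed for propagation (plus one more to reach the next phase boundary) yields the stated bound. The main obstacle is the bandwidth cap, since each node advertises only its $\gamma \log{N}$ smallest tags per bin per phase; one might worry that newly arrived tags could suppress propagation of older ones. The argument above sidesteps this issue by restricting attention to $S^*$: because its members are globally among the smallest $\gamma \log{N}$ tags of $S$, they are always among the smallest $\gamma \log{N}$ tags known locally, so the bandwidth cap never filters them out.
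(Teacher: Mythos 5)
Your proof is correct and follows essentially the same route as the paper's: first invoke Lemma~\ref{lem:prop1} to reach a state where all nodes share estimate $i_{max}$, then argue that the $\gamma\log{N}$ smallest tags of a crowded bin propagate one hop per phase because they always survive the per-bin advertising cap, forcing a crowded-bin upgrade within $D$ further phases. Your explicit path induction and the observation that $S^*$ is always among the locally smallest tags merely spell out a step the paper states in one sentence.
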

\begin{proof}
We start by applying Lemma~\ref{lem:prop1} to $i_{max}$ and round $r$.
This establishes that by round $r' = r + {O}(D k_{i_{max}}\log^3{N})$ rounds either all nodes
have estimate $i_{max}$, or at least one node has an estimate larger than $i_{max}$.
If the latter is true than the lemma is satisfied directly at round $r'$.

Moving forward, therefore, assume all nodes have the same estimate $i_{max}$
by round $r'$. 
By assumption, $i_{max} < i$.
It follows that instance $i_{max}$ has at least one crowded bin.
Call this bin $j$. 
Let $T_j$ be the tags of the $\gamma \log{N}$ smallest tokens assigned to bin $j$ in instance $i_{max}$ in this configuration.
Because nodes spell out tags from order of smallest to largest, 
we know that any node that knows tags from $T_j$,
will assign each of these tags a block in any execution of instance $i_{max}$.

It follows, therefore, that in each execution of an $i_{max}$ phase,
if all nodes start that phase with an estimate of $i_{max}$,
then {\em each} of these tags in $T_j$ will spread another hop.
Applying the same argument as in the proof of Lemma~\ref{lem:prop1},
after at most $D$ executions of $i_{max}$ phases,
either at least one node has increased its estimate to a value larger than $i_{max}$,
or the tokens in $T_j$ will have spread to all nodes in the network.
If the latter event happens, then, by the definition of the algorithm,
all nodes will have discovered a crowded bin in instance $i_{max}$ and will increment their estimate.
Either way, the lemma is satisfied.
Therefore, by round $r' + {O}(D k_{i_{max}}\log^3{N}) = r+ {O}(D k_{i_{max}}\log^3{N})$,
the conditions of the lemma is satisfied---as required.
\end{proof}

The following key result leverages Lemmas~\ref{lem:prop1} and \ref{lem:prop2}
to bound the total rounds required for all nodes to permanently stabilize their estimates to the target instance.

\begin{lemma}
Fix an execution with a good configuration with target instance $i$.
By round $r = O(Dk_i\log^3{N})$, every node has estimate $i$. That is, for every node $u$, $est_u=i$ by round $r$.
\label{lem:target}
\end{lemma}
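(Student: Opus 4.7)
The plan is to chain together Lemma~\ref{lem:prop1} and Lemma~\ref{lem:prop2} along the sequence of possible maximum-estimate values, and then exploit the geometric growth of $k_j = 2^j$ to collapse the resulting sum.

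First I would argue that the maximum estimate in the system strictly increases on a predictable schedule until it hits $i$. Lemma~\ref{lem:good2} tells us that in a good configuration no node ever adopts an estimate larger than $i$, so the maximum estimate $i_{max}$ in the system is always a value in $\{1,2,\dots,i\}$. Starting from $i_{max}=1$ at the beginning of the execution, Lemma~\ref{lem:prop2} guarantees that so long as $i_{max}<i$, within $O(D k_{i_{max}} \log^3{N})$ further rounds the value $i_{max}$ strictly increases. Applying this inductively, the number of rounds required until the system first contains an estimate equal to $i$ is at most
\[
\sum_{j=1}^{i-1} O(D k_j \log^3{N}) \;=\; O\!\left(D\log^3{N}\sum_{j=1}^{i-1} 2^j\right) \;=\; O(D k_i \log^3{N}),
\]
where the final equality uses that the geometric sum is dominated by its last term $2^{i-1} < k_i$.

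Once some node has $est_u = i$, I would invoke Lemma~\ref{lem:prop1} with $i_{max}=i$. That lemma says that within an additional $O(D k_i \log^3{N})$ rounds, either the largest estimate in the system exceeds $i$ or every node has adopted estimate $i$. By Lemma~\ref{lem:good2}, the first alternative is impossible, so the second must occur. Adding these two bounds gives a total of $O(D k_i \log^3{N})$ rounds, as claimed.

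I do not expect a serious technical obstacle here since the supporting lemmas do all the heavy lifting; the only thing to be careful about is the geometric-sum collapse and the fact that $i_{max}$ may ``jump'' by more than one in a single application of Lemma~\ref{lem:prop2} (which only helps the bound, and the argument still goes through by replacing the strict induction with a potential argument that tracks how much $k_{i_{max}}$ can grow). Note that the statement is phrased in terms of diameter $D$; to convert to the $\alpha$-only form used later, one would invoke Theorem~\ref{thm:diam} and substitute $D=O(\log{n}/\alpha)$, but that conversion is not required for this lemma itself.
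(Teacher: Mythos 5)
Your proposal is correct and follows essentially the same route as the paper's proof: repeated applications of Lemma~\ref{lem:prop2} bounded by the geometric sum over $k_1,\dots,k_i$, capped with one application of Lemma~\ref{lem:prop1} whose ``estimate exceeds $i$'' alternative is ruled out by Lemma~\ref{lem:good2}. Your extra remark about $i_{max}$ possibly jumping by more than one step is a nice touch the paper leaves implicit, but it changes nothing substantive.
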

\begin{proof}
By the definition of our algorithm, estimates never decrease.
By Lemma~\ref{lem:good2}, no node will ever adopt an estimate greater than $i$.
Combined, it follows that
we can keep applying Lemma~\ref{lem:prop2} to increase
the largest estimate until the largest estimate reaches $i$.
We can then apply a single instance of Lemma~\ref{lem:prop1}
to ensure all nodes have this estimate---at which point the lemma will be permanently satisfied.

To bound the time required for these applications of the above lemmas,
we leverage our observation that the largest estimate can only increase.
It follows that in the worst case we apply Lemma~\ref{lem:prop2} exactly
once for each of the estimates leading up to the target $i$.
Because these estimates form a geometric sequence (e.g., $2,4,8,...$),
the total rounds needed for these applications of Lemma~\ref{lem:prop2} is upper bounded by:

\begin{eqnarray*}
{O}(D k_1\log^3{N}) +  {O}(D k_2\log^3{N}) + ... +  {O}(D k_i\log^3{N}) & =&  O\left( (D\log^3{N})(k_1 + k_2 + ... + k_i   ) \right)\\
& =& O(Dk_i\log^3{N})
\end{eqnarray*}

The final application of Lemma~\ref{lem:prop1} to spread estimate $i$ to all remaining nodes
once it exists in the system adds only a single aan additional $O(Dk_i\log^3{N})$ rounds.
The lemma statement follows.
\end{proof}

The preceding arguments bound the rounds required for useful information to propagate through
the network via the nodes' advertising bits.
We now conclude our proof by turning our attention to the rounds required for 
the actual tokens (which must be passed one at a time through pairwise connections) to spread.
We will tackle this problem by picking up where Lemma~\ref{lem:target} left off:
a point at which the system is prepared for the PPUSH instances executing
in the second half of blocks to make consistent progress.
We will apply our bound on PPUSH from Theorem~\ref{thm:ppush}
to establish the time required for this final propagation.
We will then leverage Theorem~\ref{thm:diam} to replace the network diameter
in our complexity with an upper bound expressed with respect to the network size and expansion.

\begin{theorem}
The CrowdedBin gossip algorithm solves the gossip problem in
$O((1/\alpha)k\log^6{N})$ rounds
when executed with tag length $b=1$ in a network with stability $\tau = \infty$.
\label{thm:crowded}
\end{theorem}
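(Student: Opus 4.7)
The plan is to chain the preceding lemmas with Theorem~\ref{thm:ppush} and Theorem~\ref{thm:diam} in three stages: condition on a good configuration, stabilize all estimates at the target instance, then let the embedded PPUSH subroutines disseminate the tokens. First I would fix the constants $\beta$ and $\gamma$ large enough that Lemma~\ref{lem:good} forces a good configuration with high probability in $N$ (the union bound at the end absorbs this overhead). Conditioned on a good configuration with target instance $i$, we have $k_i \leq 2k$ and no bin of instance $i$ contains more than $\gamma \log N$ distinct tags.

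Next I would apply Lemma~\ref{lem:target} to conclude that by round $r^{*} = O(D k_i \log^3 N)$ every node has $est_u = i$, and then invoke Theorem~\ref{thm:diam} to replace $D$ by $O(\log N/\alpha)$; this already gives $r^{*} = O((k/\alpha)\log^4 N)$, well within the target bound. From round $r^{*}$ onward, Lemma~\ref{lem:good2} guarantees that estimates never move past $i$, so every node participates in instance $i$ phases indefinitely. Within each bin $j$ of instance $i$, once all nodes agree on the set of tags assigned to $j$, the ordering of tags is consistent and the second half of block $h$ is a well-defined PPUSH for the token whose tag is the $h$-th smallest. Since tag information is advertised as bits and therefore broadcast to every neighbor each block, a short BFS-style argument shows that after $O(D) = O(\log N/\alpha)$ additional phases every node knows every tag of every bin, and the per-block PPUSH identity becomes globally fixed.

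The core step is the token dissemination argument. Each PPUSH instance is paused and resumed across phases: because $\tau = \infty$ the topology is static, so the set of nodes informed of a given token is monotone and the intervening ``idle'' rounds for that token do not affect its PPUSH state. Consequently Theorem~\ref{thm:ppush} applies to the concatenation of the active $\log N$-round bursts, and each token has fully spread after $O(\log^4 N /\alpha)$ accumulated active rounds, i.e., after $O(\log^3 N/\alpha)$ phases of instance $i$. Each such phase costs $k_i \cdot \gamma\log N \cdot (\ell + \log N) \cdot \log N = O(k_i \log^3 N)$ real rounds (bins $\times$ blocks per bin $\times$ rounds per block $\times$ parallelization factor). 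Multiplying through and using $k_i \leq 2k$ gives $O((k/\alpha)\log^6 N)$ real rounds, which dominates both the stabilization cost $r^{*}$ and the tag-spreading overhead. A final union bound over the good-configuration event, the PPUSH successes for each of the at most $k$ tokens, the transfer-subroutine calls, and the applications of Lemma~\ref{lem:target} establishes the high-probability bound stated in the theorem.

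The main obstacle is formalizing the pause-and-resume justification for PPUSH: strictly speaking, Theorem~\ref{thm:ppush} is stated for contiguous execution, so one must verify that inserting idle rounds into the schedule (during which the topology is unchanged and informed nodes remain informed) does not invalidate the analysis. Under $\tau = \infty$ this reduces to a reindexing of time, but it deserves an explicit argument. A secondary subtlety is the transient interval during which different nodes know different tag sets for the same bin and therefore disagree on the block-to-tag mapping; handling this cleanly requires showing that tag knowledge itself spreads in $O(D)$ phases via the first $\ell$ rounds of each block, an additive cost that is swamped by the final $O((k/\alpha)\log^6 N)$ bound.
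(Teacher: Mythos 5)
Your proposal is correct and follows essentially the same route as the paper's proof: condition on a good configuration, apply Lemma~\ref{lem:target} to stabilize estimates at the target instance, argue that all tags then propagate in $O(D)$ further phases, and finally stitch together the trailing $\log N$ rounds of each block into a valid PPUSH execution per token so that Theorem~\ref{thm:ppush} and Theorem~\ref{thm:diam} yield the $O((k/\alpha)\log^6 N)$ bound. The only stray detail is the mention of transfer-subroutine calls in your final union bound --- CrowdedBin does not use that subroutine --- but this does not affect the argument.
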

\begin{proof}
Assume for now that the configuration is good and $i$ its target instance.
Let round $r = O(Dk_i\log^3{N})$ be the round specified by Lemma~\ref{lem:target} for the network to converge its estimate.
That is, every node has the same estimate $i$ by round $r$.
By definition, no bin is crowded for instance $i$ in a good configuration.
It follows that {\em every} tag for {\em every} bin in this instance will be spread in  {\em every} round by the nodes that know that tag in that round.
Following the same propagation arguments used in Lemmas~\ref{lem:prop1} and~\ref{lem:prop2},
after at most $D$ more phases of instance $i$,
all nodes will know all tags.  This requires 
at most $O(Dk_i\log^3{N})$ rounds.
Therefore by some round $r' = O(Dk_i\log^3{N})$,
the system will have reached a {\em stable} state in which every node has the same estimate $i$
and knows the tag for every token in the system. This information will never again change so we can turn our 
attention for the rounds required to finish propagating the actual tokens after this point of stabilization.

To bound this token propagation time, fix an arbitrary token $t$ with tag $q$ in instance $i$.
Because we assume the system has stabilized,
every node has $q$ assigned to the same block of the same bin in their instance $i$ phase.
It follows that if we append together the last $\log{N}$ rounds from these blocks (i.e., the rounds in which
nodes run PPUSH for the tag described in the first $\ell$ rounds of the block),
we obtain a proper execution of PPUSH rumor spreading for token $t$ during these rounds.
That is, every time we come to the last $\log{N}$ rounds of $q$'s block,
all nodes are running PPUSH for rumor $t$,
picking up where they left off in the previous instance.

Applying Theorem~\ref{thm:ppush} from above,
it follows that with high probability in $N$,
$O(\log^4{N}/\alpha)$ rounds are sufficient for $t$ to spread to all nodes after stabilization.
Each phase provides $\log{N}$ rounds of PPUSH,
so $O(\log^3{N}/\alpha)$ phases are sufficient after stabilization. 

The key observation is that each execution of instance $i$ services {\em all} $k$ rumors after stabilization,
as each rumor has its own fixed bin in the instance $i$ phase.
Therefore, $O(\log^3{N}/\alpha)$ phases are sufficient to spread {\em all} $k$ rumors in parallel.
A union bound establishes that all $k \leq N$ instances succeed with a slightly reduced
high probability.

From a probability perspective,
we know from Lemma~\ref{lem:good} that the configuration is good with high probability.
We just argued above that if the configuration is good,
then with an additional high probability the tokens will all spread
in the stated time, once the system stabilizes.
We can increase both high probabilities to the desired exponent by increasing
the constant $\beta$ and $\gamma$ used in the definition of crowded bins,
and the constant factor in the time bound for PPUSH.
A union bound then shows that both good events occur with high probability.

From round cost perspective, we established that the time to stabilization is 
at most $O(Dk_i\log^3{N})$ rounds,
while the time to complete propagation after stabilization
is at most $O(\log^3{N}/\alpha)$ instance $i$ phases,
which each require $O(k_i\log^3{N})$ rounds.
The final time complexity is then
in: $O(Dk_i\log^3{N} + (k_i\log^6{N})/\alpha)$.

By the definition of a good configuration,
we know $k_i \leq 2k$,
and by Theorem~\ref{thm:diam},
we know $D=O(\log{N}/\alpha)$.
We can therefore simplify this complexity to $O((k\log^6{N})/\alpha)$ rounds, as required.
\end{proof}


\section{$\epsilon$-Gossip with $b=1$ and $\tau \geq 1$}
\label{sec:egossip}


In this section we consider $\epsilon$-Gossip: a relaxed version of the gossip problem that is parameterized
with some $\epsilon$, $0 < \epsilon < 1$ (e.g., as also studied in~\cite{dolev2007gossiping}).
In more detail,
the problem assumes all $n$ nodes start with a token.
To solve $\epsilon$-gossip there must be a subset $S$ of the $n$ nodes in the system,
where $|S| \geq \epsilon n$ {\em and} for every $u,v\in S$, $u$ knows $v$'s token
and $v$ knows $u$'s token.
Our goal here is to prove that for reasonably well-connected graphs and constant $\epsilon$,
{\em almost} solving gossip can be significantly faster than {\em fully} solving gossip.
In particular, we prove
that our SharedBit algorithm from before solves $\epsilon$-gossip in $O\left(\frac{n\sqrt{\Delta\log{\Delta}}}{(1-\epsilon)\alpha}\right)$
rounds. Given that $\Delta \leq n$, 
this is faster than the $O(n^2)$ required by SharedBit (for $k=n$) when $\epsilon$ is a constant fraction
and $\alpha = \omega(\log{\Delta}/(\sqrt{\Delta\log{\Delta}}))$.

\paragraph{Preliminaries.}
We restrict our attention in this analysis to the case where $\epsilon \geq 1/2$.
We can then handle smaller values for this fraction by applying the
below analysis for $\epsilon=1/2$: a value that (more than) solves
the problems for the smaller fraction, and at a cost of at most an extra constant factor
in the time complexity (i.e., when we replace $(1-\epsilon)$ in the denominator
with $(1-1/2)$, where $\epsilon < 1/2$ is the actual value we are analyzing,
the stated bound is less than a factor of two larger than what we would get with the smaller $\epsilon$).

A key tool in our analysis is a set that describes the frequency of different token sets
owned by nodes in the network at the beginning of a given round.
To do so, let $T$ be the set of tokens in the network.
The definition of $\epsilon$-gossip requires that $|T|=n$.
For each token subset $S\subseteq T$ and round $r \geq 1$, we define:

\[ count(S,r) = |\{ u\in V\mid T_u(r) = S \}|, \] 

\noindent where $T_u(r)$ is defined the same as in our above SharedBit analysis (i.e., the set of tokens $u$ knows at the beginning of round $r$).
Therefore, $count(S,r)$ equals the number of nodes with token set $S$ at the beginning of $r$.
We now use the definition of $count$ to define, for each round $r \geq 1$, the following multiset:

\[  F(r) = \{ (S, q) \mid (S\subseteq T)\wedge (q=count(S,r)) \wedge (q \geq 1)\} \]

\noindent This multiset contains all the token sets that appear at least once in the network at the beginning of round $r$,
 along with their frequency of occurrence.
Finally, we also make use of the following
potential function $\phi$, which was first defined in Section~\ref{sec:unstable:alg} to analyze SharedBit gossip:

\[ \forall r\geq 1: \phi(r) = \sum_{u\in V} \left(   n-|T_u(r)|   \right). \]


Our analysis will also leverage two useful lemmas from our earlier study of rumor spreading
in the mobile telephone model~\cite{ghaffari:2016}. The first lemma is graph theoretic,
and accordingly requires two definitions concerning graph properties. First, for a given graph $G=(V,E)$
and node set $S \subset V$, we define $B_G(S)$ to be the bipartite graph containing all (and only) the edges
from $E$ that connect a node in $S$ to a node in $V\setminus S$, with a vertex set consisting of these endpoints.
Second, for a given graph $H$, let $\nu(H)$ the {\em edge independence number} of $H$, which describes the size of
a maximum matching on $H$. We now proceed with our lemma:

\begin{lemma}[Adapted from~\cite{ghaffari:2016}]
\label{lem:msize}
Fix a graph $G=(V,E)$ with $|V| = n$ and vertex expansion $\alpha$.
Fix some $S\subset V$ such that $|S| \leq n/2$.
It follows that $\nu(B_G(S)) \geq |S| \cdot (\alpha/4).$
\end{lemma}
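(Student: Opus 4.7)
The plan is to reduce this to a vertex expansion bound via König's theorem, the classical bipartite minimax identity that equates the maximum matching with the minimum vertex cover.

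First I would let $m = \nu(B_G(S))$ and, by König's theorem, fix a minimum vertex cover $C$ of $B_G(S)$ of size $m$. Split this cover as $C = C_S \cup C_{\overline S}$, where $C_S \subseteq S$ and $C_{\overline S} \subseteq V \setminus S$. Let $S' = S \setminus C_S$; the intuition is that $S'$ is the part of $S$ whose ``obligations'' in $B_G(S)$ are entirely absorbed by $C_{\overline S}$.

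Next I would show $\partial S' \subseteq C$, which will make $S'$ a highly non-expanding set unless $m$ is already large. For any $v \in \partial S'$, either $v \in S$ (in which case $v \in S \setminus S' = C_S$) or $v \in V \setminus S$; in the latter case $v$ has a neighbor in $S' \subseteq S$, giving an edge of $B_G(S)$ that must be covered, forcing $v \in C_{\overline S}$ since $v \notin S$. Hence $|\partial S'| \leq |C| = m$. Now I invoke the vertex expansion hypothesis. Since $|S'| \leq |S| \leq n/2$, either $S' = \emptyset$ (an easy boundary case, where $|S| = |C_S| \leq m$ and the claim follows trivially), or $|\partial S'| \geq \alpha |S'|$, giving $|S'| \leq m/\alpha$. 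Combining with $|S| = |S'| + |C_S| \leq |S'| + m$ yields
\[ |S| \leq \frac{m}{\alpha} + m \leq \frac{2m}{\alpha}, \]
using $\alpha \leq 1$, which rearranges to $m \geq |S| \cdot (\alpha / 2)$. This already beats the $\alpha/4$ bound stated in the lemma; the factor $1/4$ simply leaves a slack factor of $2$ and is fine.

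The main obstacle, as far as I can see, is purely bookkeeping rather than conceptual: one must check that the set $S'$ to which we apply the vertex expansion bound really satisfies $0 < |S'| \leq n/2$, handle the degenerate case $S' = \emptyset$ separately, and carefully confirm that every edge of $B_G(S)$ incident to $S'$ has its $V \setminus S$ endpoint forced into $C_{\overline S}$ (so that neighbors of $S'$ in the original graph $G$ end up inside $C$ and nowhere else). Once these routine checks are in place, König's theorem and the definition of $\alpha$ do all the work.
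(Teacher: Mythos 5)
Your proof is correct, and it even yields the stronger constant $\alpha/2$ (equivalently $m \geq \frac{\alpha}{1+\alpha}|S|$), which of course implies the stated $\alpha/4$ bound. The paper itself gives no proof here---it defers to the cited reference---but the argument there is the same K\H{o}nig-based one you use: take a minimum vertex cover $C$ of $B_G(S)$, observe that $\partial(S\setminus C)\subseteq C$ so the uncovered part of $S$ cannot expand beyond $|C|$, and invoke the definition of $\alpha$ (with the $S'=\emptyset$ case handled trivially), so there is nothing to add.
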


The second lemma adapted from~\cite{ghaffari:2016} is algorithmic in that bounds the performance
of a simple randomized strategy for approximating a maximum matching in a bipartite graph:

\begin{lemma}[Adapted from~\cite{ghaffari:2016}]
\label{lem:7.3}
Fix a network topology graph $G=(V,E)$ with maximum degree $\Delta$.  Fix some subset $C\subset V$.
Assume there is a matching $M$ of size $m\geq 1$ defined over $B_G(C)$.
Assume each node in $C$ randomly chooses a neighbor in $B_G(C)$ to send a connection proposal.
With constant probability, at least $\Omega(\frac{m}{\sqrt{\Delta \log{\Delta}}})$ nodes from $V\setminus C$ 
that are endpoints in $M$ will
receive a connection proposal from a node in $C$. 
\end{lemma}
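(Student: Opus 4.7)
The plan is to case-split on the degrees of the matched $C$-side endpoints, since a low-degree $c$ concentrates almost all of its single-neighbor pick on a small set (including its matched partner), while a high-degree $c$ spreads its pick thinly across many neighbors and must be analyzed through a different channel. Let $\theta = \sqrt{\Delta \log \Delta}$ and partition $M = M_L \cup M_H$ where $M_L = \{(c,v) \in M : \deg_{B_G(C)}(c) \leq \theta\}$ and $M_H = M \setminus M_L$. Since $|M_L| + |M_H| = m$, one of the two has size at least $m/2$, and it suffices to deliver $\Omega(m/\theta)$ distinct matched hits in whichever case holds.

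In the low-degree case ($|M_L| \geq m/2$), the idea is to track only the matched-partner hits. For every $(c,v) \in M_L$ we have $\Pr[c \text{ picks } v] \geq 1/\theta$. Crucially, the $c$-endpoints of $M_L$ are all distinct (matching property) and each $c$'s pick is independent of every other $c$'s pick, so the indicator variables $\{Y_{(c,v)} : (c,v) \in M_L\}$ are mutually independent. Their sum lower-bounds the number of matched $v$'s that receive a proposal, and has expectation at least $m/(2\theta)$. Applying the Chernoff bound from Theorem~\ref{thm:chernoff} with $\delta = 1/2$ concentrates this sum at $\Omega(m/\theta)$ with constant probability (when $m$ is so small that the expectation is $o(1)$, the claimed $\Omega$-bound is itself trivial).

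In the high-degree case ($|M_H| \geq m/2$), the matched-partner pick probability $1/\deg(c) < 1/\theta$ is too weak to use directly, so I would argue indirectly via a load-spreading / second-moment argument. Each $c \in M_H$ distributes its one proposal uniformly over $\deg(c) > \theta$ neighbors, which keeps the expected load on any fixed $v \in V_M$ bounded. Let $Z$ count the distinct matched $v$'s that receive at least one proposal. A calculation of $E[Z]$ via $\Pr[v \text{ hit}] \geq 1 - \exp(-\sum_{c \in N(v) \cap C} 1/\deg(c))$, combined with a Cauchy–Schwarz lower bound on $\sum_{(c,v) \in M_H} 1/\deg(c)$ using $\sum \deg(c) \leq m\Delta$, should produce $E[Z] = \Omega(m/\theta)$. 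Then a Paley–Zygmund (second moment) bound, using that the $c$-endpoints in $M_H$ are distinct so the picks are mutually independent, concentrates $Z$ at constant multiple of its mean with constant probability.

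The main obstacle is the high-degree case: the distinct-hits count is not a sum of independent Bernoullis (two matched $v$'s may share common $C$-side neighbors, coupling their hit-events), and so Chernoff does not apply directly. I expect the crux of the technical work to be a variance bound for $Z$ that exploits (i) the $\theta$ lower bound on every $\deg(c)$ in $M_H$ to keep per-pair covariances small, and (ii) independence of the $c$-choices to decompose $\mathrm{Var}(Z)$ into manageable pieces; once $E[Z]^2/E[Z^2]$ is a constant, Paley–Zygmund delivers the desired constant-probability guarantee.
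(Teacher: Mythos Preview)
The paper does not prove this lemma: it is imported from~\cite{ghaffari:2016} (note the ``Adapted from'' label) and invoked as a black box in the proof of Theorem~\ref{thm:egossip}. There is therefore no in-paper argument to compare your proposal against.

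On the substance of your sketch: the low-degree branch is sound---the matched-partner indicators $Y_{(c,v)}$ are independent (distinct $c$'s making independent choices), each succeeds with probability at least $1/\theta$, and Chernoff handles the concentration. The high-degree branch, however, does not close as written. Your Cauchy--Schwarz step on $\sum_{(c,v)\in M_H} 1/\deg(c)$ yields at best $|M_H|^2/\sum_{(c,v)\in M_H}\deg(c) \geq |M_H|/\Delta$, which is \emph{weaker} than the $|M_H|/\theta$ you need. More fundamentally, take $|C|=m$, give every $c\in C$ degree exactly $\Delta$ in $B_G(C)$, and let each matched $v$ have only its own partner $c$ as a $C$-neighbor (the remaining $\Delta-1$ neighbors of each $c$ are fresh unmatched vertices). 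Then every edge of $M$ lands in $M_H$, yet $E[Z]=m/\Delta \ll m/\theta$, so no second-moment argument on $Z$ can deliver the stated bound. This suggests either that the restatement here is slightly loose (the application in Theorem~\ref{thm:egossip} only needs $\Omega(m/\theta)$ \emph{distinct receivers} in $V\setminus C$, not specifically matched endpoints---and in the instance above all $m$ picks land on distinct receivers), or that the original argument in~\cite{ghaffari:2016} uses structure not captured in your case split. Either way, the high-degree branch needs a different mechanism than the one you outlined.
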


\paragraph{Analysis.}
Our main strategy is to attempt to identify for each round a {\em coalition} of
nodes such that: (1) the size of the coalition is within a target range
$(\epsilon/2) n$ to $\epsilon n$; and (2)  no node in the coalition has the same token
set as a node outside the coalition.
If we can find such a coalition,
the graph property result captured in Lemma~\ref{lem:msize}
tells us that there are many edges between coalition and non-coalition nodes (where the definition
of ``many" depends on $\alpha$ and $\epsilon$). We can then show that a reasonable fraction
of these edges will connect and therefore reduce $\phi$. 
We begin this argument by leveraging the above definitions
to prove that either we can find such a coalition or we have already solved the problem.

\begin{lemma}
\label{lem:col}
Fix a round $r \geq 1$. One of the following must be true about this round:
(1) $\epsilon$-gossip is solved by the beginning of round $r$; or
(2) there exists a $C \subset F(r)$ such that:

\[  (\epsilon/2)n \leq    \sum_{(S,q)\in C} q \leq n\epsilon. \]
\end{lemma}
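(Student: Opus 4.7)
The plan is to dichotomize on the maximum frequency appearing in $F(r)$ and then, in the interesting branch, either pick a single pair or build the coalition greedily. The key structural observation is that a highly-replicated token set already constitutes a witness for $\epsilon$-gossip, which lets us retire the ``large frequency'' case directly to conclusion~(1).

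First I would show: if there exists $(S, q) \in F(r)$ with $q \geq \epsilon n$, then $\epsilon$-gossip is already solved by round $r$, so case~(1) of the lemma holds. Let $U$ be the set of $q$ nodes whose token set at round $r$ equals exactly $S$. Since every node's own original token is contained in its current token set, we have $t_v \in T_v = S$ for each $v \in U$. Then for any $u, v \in U$, $T_u = S \ni t_v$, so $u$ knows $v$'s token; by symmetry $v$ knows $u$'s. Because $|U| = q \geq \epsilon n$, the set $U$ witnesses $\epsilon$-gossip.

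So I may assume no pair in $F(r)$ has frequency at least $\epsilon n$. I would split into two subcases. If some $(S, q) \in F(r)$ satisfies $(\epsilon/2)n \leq q < \epsilon n$, take $C = \{(S, q)\}$ and its total frequency is $q$, which lies in the desired range. Otherwise every $(S, q) \in F(r)$ has $q < (\epsilon/2)n$; order the elements of $F(r)$ arbitrarily as $(S_1, q_1), (S_2, q_2), \ldots$, let $s_i = q_1 + \cdots + q_i$, and let $i^\ast$ be the smallest index with $s_{i^\ast} \geq (\epsilon/2)n$. Such $i^\ast$ exists because $\sum_j q_j = n \geq (\epsilon/2)n$. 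Minimality of $i^\ast$ gives $s_{i^\ast - 1} < (\epsilon/2)n$, and the subcase hypothesis gives $q_{i^\ast} < (\epsilon/2)n$, so $s_{i^\ast} < \epsilon n$. Hence $C = \{(S_1, q_1), \ldots, (S_{i^\ast}, q_{i^\ast})\}$ has total frequency in $[(\epsilon/2)n, \epsilon n)$, as required.

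The only genuinely content-bearing step is the first paragraph, where one must notice that many nodes sharing a single token set already know each other's originating tokens and hence solve $\epsilon$-gossip; the remainder is an elementary greedy packing argument and poses no real obstacle.
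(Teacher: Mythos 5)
Your proof is correct and follows essentially the same route as the paper: dispatch the high-frequency case by observing that a token set shared by at least $\epsilon n$ nodes is itself a witness for $\epsilon$-gossip, handle a mid-range frequency with a singleton coalition, and otherwise greedily accumulate pairs until the partial sum first enters $[(\epsilon/2)n, \epsilon n)$. The only cosmetic difference is that the paper orders the greedy accumulation by decreasing $q$, whereas you correctly note an arbitrary order suffices once every frequency is below $(\epsilon/2)n$.
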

\begin{proof}
Let $q_{max} = \max\{ q : (*,q)\in F(r)\}$ (i.e., the number of nodes that own the set owned by the most nodes in $r$).
We consider three cases for $q_{max}$ and show that all three satisfy our lemma.

The first case is that $q_{max} > n\epsilon$.
In this case, we have identified a token set $S$ that is owned by more than $n\epsilon$ nodes.
Let $V_S$ be the set of nodes that own $S$ at the beginning of $r$.
Because every node starts with its own token in its token set,
and no token ever leaves a token set,
we know for each $u\in V_S$, $u$'s token is in $S$.
It follows that every node in $V_S$ knows the token of every other node in this set---meaning we have solved $\epsilon$-gossip
and therefore satisfy option (1) from the lemma statement.

The second case is that  $(\epsilon/2)n \leq     q_{max} \leq n\epsilon$.
In this case, we can set $C=\{  (S,q_{max}) \}$,
where $S$ is the set we identified owned by $q_{max}$ nodes (if more than $1$, choose one arbitrarily),
 and directly satisfy option (2) from the lemma statement.

The third and final case is that $q_{max} < (\epsilon/2)n$.
In this case, we can apply the following simple greedy strategy for defining $C$:
keep adding pairs from $F(r)$ to $C$ in {\em decreasing} order of  $q$ values
until $\sum_{(S,q)\in C} q$ first grows larger than $(\epsilon/2)n$.
By our case assumption, every $q$ value in $F$ is less than $(\epsilon/2)n$.
Therefore, the step of the greedy strategy that first pushes us over the $(\epsilon/2)n$ threshold
must increase this sum to fall within our target range of $(\epsilon/2)n$ and $\epsilon n$.
That is, the greedy strategy described above will always terminate having
identified a set $C$ that satisfies option (2) from the lemma statement.
\end{proof}

Repeatedly applying Lemma~\ref{lem:col} will provide that in each round either we are done
with the $\epsilon$-gossip problem or we have a large coalition that is likely to generate lots of progress
toward solving the problem.
We are now ready to pull together our pieces to prove our main theorem.
The main technical contribution of the below proof is arguing that a large coalition likely generates
lots of new token transfers. 
This claim will pull from Lemmas~\ref{lem:7.3} and~\ref{lem:msize} from above,
as well as Lemma~\ref{lem:unstable:adv} from the SharedBit analysis in Section~\ref{sec:unstable:alg}.

\begin{theorem}
\label{thm:egossip}
Fix some $\epsilon$, $0 < \epsilon < 1$.
The SharedBit gossip algorithm solves the $\epsilon$-gossip problem in $O\left(\frac{n\sqrt{\Delta\log{\Delta}}}{(1-\epsilon)\alpha}\right)$ rounds
when executed with shared randomness with tag length $b=1$ in a network with stability $\tau \geq 1$.
\end{theorem}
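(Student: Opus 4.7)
The plan is to track the potential function $\phi(r) = \sum_{u\in V}(n-|T_u(r)|)$ from the SharedBit analysis and argue that, so long as $\epsilon$-gossip has not yet been solved, each round reduces $\phi$ by $\Omega((1-\epsilon)n\alpha/\sqrt{\Delta\log\Delta})$ with constant probability. Since $\phi(1) = n(n-1) = O(n^2)$, dividing budget by per-round progress yields the claimed round bound. I will restrict attention to $\epsilon \geq 1/2$ (the case $\epsilon < 1/2$ follows by running the $\epsilon = 1/2$ analysis at a constant-factor cost, as discussed in the preliminaries). The high-probability upgrade follows from the same stochastic-dominance and Chernoff argument used in the proof of Theorem~\ref{thm:unstable}: the per-round progress events dominate an i.i.d.\ sequence of constant-probability Bernoullis, so $\Theta(1)$ over-provisioning is enough.

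Fix a round $r$ in which $\epsilon$-gossip is not yet solved. Lemma~\ref{lem:col} supplies a coalition $V_C \subseteq V$ with $(\epsilon/2)n \leq |V_C| \leq \epsilon n$ such that every $u \in V_C$ and every $v \notin V_C$ satisfy $T_u(r)\neq T_v(r)$. Since $\epsilon \geq 1/2$, the complement $V \setminus V_C$ has size at most $n/2$, so Lemma~\ref{lem:msize} applied to $S = V\setminus V_C$ gives a matching $M$ in the bipartite graph $B_G(V_C)$ of size $m \geq (1-\epsilon)n\alpha/4$. The main technical step is converting $M$ into many productive round-$r$ connections. For each edge $(u,v) \in M$ the two endpoints have distinct token sets, so Lemma~\ref{lem:unstable:adv} says they advertise different bits with probability exactly $1/2$; let $M' \subseteq M$ be the sub-matching of such edges, so $E[|M'|] = m/2$, and a Markov-on-the-complement bound on $m-|M'|$ gives $|M'| \geq m/4$ with constant probability.

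Now take $C$ to be the set of nodes advertising bit $1$ in round $r$. SharedBit has each node in $C$ pick uniformly among its bit-$0$ neighbors, which is exactly the ``uniform neighbor in $V\setminus C$'' sampling required by Lemma~\ref{lem:7.3}; since $M'$ is a matching in $B_G(C)$, that lemma yields $\Omega(|M'|/\sqrt{\Delta\log\Delta}) = \Omega(m/\sqrt{\Delta\log\Delta})$ bit-$0$ endpoints of $M'$ that receive at least one proposal from some bit-$1$ neighbor, with constant probability. Each such receiver accepts one proposal by the model; by the contrapositive of Lemma~\ref{lem:unstable:adv}, any two nodes advertising different bits own different token sets, so every accepted proposal triggers the transfer subroutine between unequal token sets, which reduces $\phi$ by at least one with high probability. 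Chaining the constant-probability events and combining with the per-round cost as above gives $O(n\sqrt{\Delta\log\Delta}/((1-\epsilon)\alpha))$ rounds w.h.p. The main obstacle I anticipate is the bookkeeping for the second step: bits on different matched edges are not fully independent (multiple edges may share a token set on one side), which is why the Markov-based bound on $|M'|$ is preferable to a concentration argument, and one must explicitly observe that Lemma~\ref{lem:7.3} applies to the full set of bit-$1$ nodes, not just those in $V_C$, so that the extra ``noise'' senders outside $V_C$ do not interfere with the counting.
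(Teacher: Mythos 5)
Your proof follows the paper's argument almost exactly: same potential function, same coalition from Lemma~\ref{lem:col}, same matching-plus-Markov step to get $m/4$ ``primed'' edges, same invocation of Lemma~\ref{lem:7.3} followed by the contrapositive of Lemma~\ref{lem:unstable:adv}, and the same stochastic-dominance wrap-up. One place where you are actually cleaner than the paper: you apply Lemma~\ref{lem:7.3} with $C$ equal to the full set of bit-$1$ nodes, so that ``choose a uniform neighbor in $B_G(C)$'' coincides with SharedBit's ``choose a uniform bit-$0$ neighbor'' without further fuss; the paper instead restricts to the bit-$1$ endpoints of the primed matching and has to surgically delete their other bit-$1$ neighbors from the topology graph to make the sampling distributions match. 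Your version is a legitimate simplification.

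There is, however, one step that fails as written: the claim that $\epsilon \geq 1/2$ forces $|V\setminus V_C|\leq n/2$. Lemma~\ref{lem:col} only guarantees $|V_C|\geq (\epsilon/2)n$, so for $\epsilon=1/2$ you could have $|V_C|=n/4$ and $|V\setminus V_C|=3n/4$, in which case Lemma~\ref{lem:msize} (which requires $|S|\leq n/2$) cannot be applied to $S=V\setminus V_C$, and your bound $m\geq (1-\epsilon)n\alpha/4$ is unjustified. The paper avoids this by setting $q=\min\{|V_C|,\,|V\setminus V_C|\}\leq n/2$, applying Lemma~\ref{lem:msize} to whichever side is smaller (the two bipartite boundary graphs have the same edge set, so the matching is the same object either way), and then lower-bounding $q$ by a case analysis: if $q=|V\setminus V_C|$ then $q\geq(1-\epsilon)n$, and if $q=|V_C|$ then $q\geq(\epsilon/2)n\geq n/4\geq \tfrac{1}{3}(1-\epsilon)n$ using $\epsilon\geq 1/2$. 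This recovers $m=\Omega((1-\epsilon)n\alpha)$ with a slightly worse constant, so your final bound is unaffected, but you do need this case split rather than the one-sided application you wrote.
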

\begin{proof}
Fix some $\epsilon$ that satisfies the theorem statement.
Assume w.l.o.g.~that $\epsilon \geq 1/2$ (as argued at the beginning of this 
analysis, if $\epsilon$ is smaller, we can apply our analysis for $\epsilon=1/2$ which
more than solves the problem at the cost of only an extra constant factor in the stated time complexity).
We begin by focusing on a single round, then extend the argument to the full execution.
In particular, fix a round $r$, $1\leq r \leq cN^2$ (i.e., a round for which we still have 
bits in the shared string $\hat r$ used by SharedBit). Let $G_r = (V,E)$ be the network topology graph in this round.
Assume $\epsilon$-gossip has not finished by the beginning of this round.
By Lemma~\ref{lem:col}, there exists a $C\subset F(r)$ such that:
\[  (\epsilon/2)n \leq    \sum_{(S,q)\in C} q \leq n\epsilon. \]
Let $V_C$ be the set of nodes that start round $r$ with one of the token sets in $C$.
By our above assumption: $(\epsilon/2)n \leq   |V_C|  \leq n\epsilon$.

Let $q = \min\{ |V_C|, |V \setminus V_C|\}$.
It follows that $q \leq n/2$.
By Lemma~\ref{lem:msize}, therefore,
there exists a matching $M$ of size $m\geq (\alpha/4)q$ in $B_{G_r}(V_C)$
(the bipartite subgraph of $G_r$ that keeps only edges from $E$ with one endpoint in $V_C$ and one endpoint in $V\setminus V_C$).
For each edge $e\in M$,
we define $e.c$ to be the endpoint from $e$ in $V_C$
and $e.v$ to be the endpoint from $e$ in $V \setminus V_C$. 
We say an edge $e\in M$ is {\em wasted}
if both endpoints in $e$ advertise the same bit; i.e., $b_{e.c}(r) = b_{e.v}(r)$.
By the definition of the coalition used in Lemma~\ref{lem:col},
it follows for each $e\in M$ it must be the case that $T_{e.c}(r) \neq T_{e.v}(r)$.
We can therefore apply Lemma~\ref{lem:unstable:adv} which provides
that the probability they advertise different bits is $1/2$. 
The probability that $e$ is wasted is therefore also $1/2$.

To argue more precisely about wasted edges we define some random variables.
For each $e\in M$,
let $X_e$ be the random indicator variable that evaluates to $1$ if $e$ is wasted and otherwise evaluates to $0$.
Let $Y= \sum_{e\in M} X_e$.
By linearity of expectation and our above argument about the probability of wastefulness, it follows:
$E(Y) = m/2$.

We now want to bound the probability that the actual number of wasted edges is not too much larger than $E(Y)$.
We cannot apply a Chernoff-style bound as there might be dependency between the outcomes of different edges
in $M$ (as they may share tokens, and therefore share random bits used to determine their tag).
To sidestep these issues,
we apply Markov's Inequality (Theorem~\ref{thm:markov} in Section~\ref{sec:model}) to derive
the following:

\[  \Pr\left(  Y \geq (3/2)\cdot E(Y)\right) \leq \frac{E(Y)}{(3/2)\cdot E(Y)} = 2/3. \]

Notice that $(3/2)\cdot E(Y) = (3/4)\cdot m$.
We can therefore reword this result to say that with probability at least $1/3$,
at least $m/4$ edges in $M$ are not wasted.
For clarity, we will subsequently refer to an edge from $M$ that is {\em not wasted} as an edge that is {\em primed}
(as in the edge is primed for the possibility of its endpoints connecting in a manner that helps spread tokens).

Moving forward in this analysis,
assume this event occurs, and therefore at least $m/4$ edges in $M$ are primed.
Let $\hat M\subseteq M$ be this set of primed edges.
(Notice, because $m\geq 1$ and the size of $\hat M$ must be a whole number,
we know $\hat M$ is non-empty under this assumption.)

We want to now apply Lemma~\ref{lem:7.3} to the connections described by $\hat M$.
To do so, let $\hat C$ be the endpoints in $\hat M$ that advertise a $1$ in this round.
Let $\hat G$ be the topology graph $G_r$ for this round {\em modified} such that we remove every
node that is not in $\hat C$, but neighbors $\hat C$ and also advertises a $1$ (along with their incident edges). 
We emphasize two properties of this modification: (1) by definition, no node in $\hat M$ is removed by this step;
(2) it is correct to say that nodes in $\hat C$ will choose a neighbor from $\hat G$ uniformly
to send a connection proposal, because the SharedBit algorithm only has nodes that advertise a $1$ choose
among neighbors that advertise a $0$, and we only removed neighbors from $\hat C$ nodes that also advertised a $1$.

We can therefore apply Lemma~\ref{lem:7.3} with $G=\hat G$, $C=\hat C$, and $M=\hat M$.
It follows that with constant probability,
at least $\Omega(|\hat M|/\sqrt{\Delta \log{\Delta}  }) = \Omega(m\sqrt{\Delta \log{\Delta}  })$ nodes in $\hat M$ receive a connection proposal from 
their neighbor in this matching.
Each such node $u$ will subsequently connect with {\em some} node $v$ in this round (though not necessarily its neighbor in $\hat M$).
By Lemma~\ref{lem:unstable:adv}, however, $T_{u}(r) \neq T_{v}(r)$ (as each advertised different bits in $r$),
so each of these connections reduces $\phi$ by at least $1$.

Combining our probabilistic events from above, it follows that with constant probability,
$\phi(r+1) - \phi(r) \geq \delta \in \Omega(\frac{\alpha q}{\sqrt{\Delta \log{\Delta}  }}    )$,
where, as defined above, $q=\min\{|V_C|, |V \setminus V_C|\}$.
Let us call a round in which this event occurs a {\em good} round.
To bound the number of good rounds until $\phi$ reduces to $0$ (and the $\epsilon$-gossip problem is solved, regardless of $\epsilon$),
we must first lower bound the size of $\delta$.
To do so, we first note that $|V \setminus V_C| \geq (1-\epsilon)n$.
It follows that in the case where $q=|V \setminus V_C|$, we know $q \geq (1-\epsilon)n$.
On the other hand, if $q= |V_C|$,
we can apply our assumption that $\epsilon \geq 1/2$ (see the beginning of this proof)
to conclude that $q \geq (1/3)(1-\epsilon)n$.
Combined: $(1/3)(1-\epsilon)(n)$ provides a general lower bound on $q$ for all rounds.

We now know that in a good round $r$:

\[ \phi(r+1) - \phi(r) \geq \delta \in \Omega\left(\frac{\alpha (1-\epsilon)n}{\sqrt{\Delta \log{\Delta}  }}    \right).\]

Because $\phi(1) \leq n^2$ and $\phi$ can only decrease,
it follows that 

\[ n^2/\delta = t_{good} \in O\left(\frac{n \sqrt{\Delta \log{\Delta}  }}{ \alpha(1-\epsilon)  } \right)\]

\noindent good rounds are sufficient to conclude gossip. As established above, 
the probability of a given round being good is lower bounded by a constant,
regardless of the execution history preceding that round.
For each round $r$, let $X_r$ be the random indicator variable that evaluates
to $1$ if and only if $r$ is good. We know $Pr(X_r = 1) \geq p$, for the constant probability mentioned above.
Therefore, in expectation, $t_{good}/p \in \Theta(t_{good})$ rounds are sufficient to achieve $t_{good}$ good rounds.
To obtain a high probability result we cannot directly apply a Chernoff bound to these indicator
variables as they are not necessarily independent. 
Each $X_r$, however, stochastically dominates the trivial random variable $\hat X_r$ that evaluates to $1$
with probability $p$. We can then apply a concentration result to the expectation calculated
on the $\hat X$ variables to determine that $\Theta(t_{good})$ rounds are sufficient, with high probability in $n$.

Pulling together the pieces,
by Lemma~\ref{lem:col},
for each round $r$, either
we have solved $\epsilon$-gossip or we can find a coalition that provides us a constant probability of $r$
being a good round.
With high probability, the latter can occur at most $O(t_{good})= O\left(\frac{n\sqrt{\Delta\log{\Delta}}}{(1-\epsilon)\alpha}\right)$ times before
we still solve the problem.
\end{proof}

The following corollary follows directly from our analysis in 
Section~\ref{sec:unstable:derandomize} concerning the elimination of the shared randomness
assumption when solving gossip with SharedBit.

\begin{corollary}
Fix some $\epsilon$, $0< \epsilon < 1$.
There exists a bit string multiset ${\cal R'}$,
such that
the SimSharedBit gossip algorithm using this ${\cal R'}$
solves the $\epsilon$-gossip problem in $O\big(\frac{n\sqrt{\Delta\log{\Delta}}}{(1-\epsilon)\alpha} + (1/\alpha)\Delta^{1/\tau}\log^6{N}\big) 
= \tilde{O}\big( \frac{n\sqrt{\Delta\log{\Delta}}}{(1-\epsilon)\alpha} \big)$ rounds when
executed with tag length $b=1$ in a network with stability $\tau \geq 1$.
\end{corollary}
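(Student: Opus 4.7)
The plan is to imitate the proof of Theorem~\ref{thm:ssb} almost verbatim, substituting the $\epsilon$-gossip round bound from Theorem~\ref{thm:egossip} for the $O(kn)$ gossip bound throughout. First I would redefine the indicator used in the existence argument: let $Z'(n,\ell,\mathcal{G},A,\hat r)$ evaluate to $0$ iff SharedBit, using the bits of $\hat r$ starting at group $\ell$ on the dynamic graph $\mathcal{G}$ with initial token assignment $A$, succeeds at $\epsilon$-gossip within $t_\epsilon = \Theta\bigl(\tfrac{n\sqrt{\Delta\log\Delta}}{(1-\epsilon)\alpha}\bigr)$ rounds. Theorem~\ref{thm:egossip} guarantees that for a uniformly random $\hat r \in \mathcal{R}$ (where $\mathcal{R}$ now carries bits for $t_{BC}+t_\epsilon$ groups), $\Pr(Z'=1) \leq \epsilon_{SB} = N^{-c}$.

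Next I would replay the proof of Lemma~\ref{lem:z} with $Z'$ in place of $Z$. Sample $t = N^\beta/\epsilon_{SB}^2$ strings i.i.d.\ from $\mathcal{R}$; for each fixed tuple $(n,\ell,\mathcal{G},A)$, Chernoff--Hoeffding (Theorem~\ref{thm:chernoff2}) says the empirical mean of $Z'$ exceeds $2\epsilon_{SB}$ with probability at most $2^{-N^\beta}$. A union bound over the at most $2^{N^{c''\gamma}}$ parameter tuples then shows that a random sample is simultaneously non-bad for every tuple with positive probability, so there exists a multiset $\mathcal{R}'$ of size $N^{\Theta(1)}$ from which drawing a uniform string makes SharedBit succeed at $\epsilon$-gossip except with probability at most $2\epsilon_{SB}$, for every configuration.

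Finally I would compose this existence result with the same interleaving architecture used in SimSharedBit: run BitConvergence from~\cite{newport:2017} on even rounds (with each node's payload pointing into $\mathcal{R}'$), and run SharedBit on odd rounds using the bits indicated by the current candidate leader's payload. By the guarantee from~\cite{newport:2017}, leader election stabilizes in $t_{BC} = O\bigl(\tfrac{\Delta^{1/\tau}\log^6 N}{\alpha}\bigr)$ rounds, and thereafter the $\epsilon$-gossip analysis proceeds using the shared string $\hat r$ pointed to by the global leader; the adapted Lemma~\ref{lem:z} ensures this $\hat r$ is sufficiently random. A union bound over (i) BitConvergence succeeding, (ii) every call to $Transfer$ succeeding, and (iii) SharedBit solving $\epsilon$-gossip under the drawn $\hat r$, yields total complexity $O\bigl(t_\epsilon + t_{BC}\bigr)$, which matches the corollary. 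The $\tilde O$ simplification follows because $\Delta^{1/\tau}\leq n$ makes the leader-election additive term at most logarithmically larger than $t_\epsilon$ for reasonable $\alpha$.

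The main obstacle is bookkeeping in the adapted Lemma~\ref{lem:z}: I need to verify that replacing the gossip horizon $t_{SB} = cN^2$ with $t_\epsilon$ does not blow up the count of dynamic graphs $|\mathbb{G}(n)| \leq 2^{n^2 t_\epsilon}$ past $2^{N^{\gamma}}$ for constant $\gamma$. Since $t_\epsilon = O\bigl(\tfrac{n\sqrt{\Delta\log\Delta}}{(1-\epsilon)\alpha}\bigr) \leq \mathrm{poly}(N)$ (using $\Delta \leq N$ and $\alpha \geq 2/N$), the exponent remains polynomial in $N$, so choosing $\beta$ strictly larger than the resulting $c''\cdot\gamma$ still forces $|\mathcal{R}'|$ to be polynomial in $N$. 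Once this accounting is checked, the rest of the argument is essentially textual substitution into the proof of Theorem~\ref{thm:ssb}.
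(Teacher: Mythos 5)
Your proposal is correct and follows the same route the paper intends: the paper proves this corollary only by remarking that it ``follows directly'' from the derandomization analysis of Section~\ref{sec:unstable:derandomize}, and your write-up is exactly that argument made explicit --- redefining the success indicator for $\epsilon$-gossip, replaying the Newman-style counting in Lemma~\ref{lem:z} with the new horizon $t_\epsilon$, and composing with the BitConvergence interleaving and the final union bound. One small tightening: the $\tilde{O}$ collapse of the additive leader-election term holds for \emph{all} $\alpha$, not just ``reasonable'' ones, since both $t_{BC} \leq (1/\alpha)\Delta^{1/\tau}\log^6 N \leq (1/\alpha)n\log^6 N$ and $t_\epsilon \geq (1/\alpha)n$ carry the same $1/\alpha$ factor, so $t_{BC} \leq t_\epsilon \log^6 N$ unconditionally.
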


\bibliographystyle{plain}
\bibliography{p2p}




\end{document}